\documentclass[12pt]{article}
\usepackage[T1]{fontenc}
\usepackage{amsfonts}
\usepackage{amsmath}
\usepackage{amssymb}
\usepackage{amsthm}
\usepackage{bbm}
\usepackage{dsfont}
\usepackage{bm}
\usepackage{mathrsfs}
\usepackage{verbatim}
\usepackage{setspace}
\usepackage{color}
\usepackage{pdfsync}
\usepackage{svg}

\usepackage{enumitem}
\usepackage{graphicx}
\usepackage{subfigure}
\usepackage{tikz}
\usetikzlibrary{patterns}
\usepackage{placeins}
\usepackage[margin=1.0in]{geometry}

\usepackage[pdfborder={0 0 0}]{hyperref}
\hypersetup{
  urlcolor = black,
  pdfauthor = {Steven Campbell, Marcel Nutz},
  pdfkeywords = {Optimal Execution, Transient Price Impact, N-Player Game, Regularization},
  pdftitle = {Optimal Execution among N Traders with Transient Price Impact},
  pdfsubject = {Optimal Execution among N Traders with Transient Price Impact},
  pdfpagemode = UseNone
}

\usepackage[capitalize]{cleveref} %

\theoremstyle{plain}
\newtheorem{theorem}{Theorem}[section]
\newtheorem{proposition}[theorem]{Proposition}
\newtheorem{lemma}[theorem]{Lemma}
\newtheorem{corollary}[theorem]{Corollary}

\theoremstyle{definition}
\newtheorem{definition}[theorem]{Definition}
\newtheorem{remark}[theorem]{Remark}

\theoremstyle{remark}
{%
\end{oldthebibliography}%
}
\newcommand{\eps}{\varepsilon}

\newcommand{\cJ}{\mathcal{J}}

\DeclareMathOperator*{\argmin}{arg\, min}

\newcommand{\mykill}[1]{}
\numberwithin{equation}{section}

\setcounter{MaxMatrixCols}{20}
\renewcommand{\boxed}{} %

\begin{document}

\title{%
Optimal Execution among $N$ Traders\\with Transient Price Impact
}
\date{\today}
\author{
  Steven Campbell%
  \thanks{
  Dept.\ of Statistics, Columbia University, sc5314@columbia.edu.}
  \and
  Marcel Nutz%
  \thanks{
  Depts.\ of Statistics and Mathematics, Columbia University, mnutz@columbia.edu.}
  }
\maketitle \vspace{-1.2em}
\begin{abstract}
We study $N$-player optimal execution games in an Obizhaeva--Wang model of transient price impact. When the game is regularized by an instantaneous cost on the trading rate, a unique equilibrium exists and we derive its closed form. Whereas without regularization, there is no equilibrium. We prove that existence is restored if (and only if) a very particular, time-dependent cost on block trades is added to the model. In that case, the equilibrium is particularly tractable. We show that this equilibrium is the limit of the regularized equilibria as the instantaneous cost parameter $\varepsilon$ tends to zero. Moreover, we explain the seemingly ad-hoc block cost as the limit of the equilibrium instantaneous costs. Notably, in contrast to the single-player problem, the optimal instantaneous costs do not vanish in the limit $\varepsilon\to0$. We use this tractable equilibrium to study the cost of liquidating in the presence of predators and the cost of anarchy. Our results also give a new interpretation to the erratic behaviors previously observed in discrete-time trading games with transient price impact.

\end{abstract}

\vspace{.3em}

{\small
\noindent \emph{Keywords:} Optimal Execution; Transient Price Impact; $N$-Player Game; Regularization

\noindent \emph{Data Availability:}
No datasets were generated or analyzed during the current study.

\noindent \emph{Funding Information:}
Research partially supported by an NSERC Postdoctoral Fellowship\\ (PDF-599675-2025) and NSF Grants DMS-2407074, DMS-2106056.

\noindent \emph{Conflict of Interest:}
The authors report that there are no competing interests to declare.

\noindent \emph{AMS 2020 Subject Classification:}
91G10; %
91A06; %
91A15 %

\noindent \emph{JEL Classification:}
G24; 
C62	%
}
\vspace{.6em}
\section{Introduction}\label{sec:introduction}

Transaction costs are significant for institutional-size trades; e.g., \cite{Mackintosh2022} reports 35 basis points as a typical cost to trade large cap stocks, and more for less liquid securities. The lion's share, about 30 basis points, is attributed to price impact---the fact that sizable orders push prices. This dislocation of the price is persistent but decays at a time scale relevant for execution problems (see \cite{HeyMastromatteoMuhleKarbeWebster.23} and the references therein). The most tractable model capturing this transient price impact is the Obizhaeva--Wang model \cite{ObizhaevaWang.13}; see also~\cite{FruthSchonebornUrusov.13}. Here each buy pushes the price up proportionally to the size of the trade and the dislocation reverts back exponentially over time (and similarly for sells). Following the reduced-form approach, price impact models represent the reactions of other market participants that are not explicitly accounted for. See \cref{sec:prob.formulation} for details and \cite{CarteaJaimungalPenalva.15,GatheralSchied.13,Webster.23} for more background and references. By contrast, the classic Almgren--Chriss model \cite{almgren.chriss.01} features an instantaneous price impact proportional to the trading rate which disappears immediately when trading stops, as well as permanent impact which does not decay. The instantaneous impact amounts to a quadratic cost $\eps\int_0^T v_t^2 dt$ on the trading rate $v_{t}$, where $\eps>0$. In the Almgren--Chriss model, the optimal execution problem of unwinding $x$ shares over a time interval $[0,T]$ has the TWAP strategy as its solution (assuming risk neutrality), meaning that the trading rate is constant. Whereas in the Obizhaeva--Wang model, in addition to a constant trading rate throughout the interval, block trades are placed at the initial and terminal times. The initial block trade jump-starts the resilience whereas the terminal trade arises because there is no subsequent trading that would suffer from the impact. From a stochastic control perspective, it is sometimes easier to work with absolutely continuous trading strategies. Starting with \cite{garleanu.pedersen.16,GraeweHorst.17}, numerous works have added a quadratic instantaneous cost on the trading rate to the Obizhaeva--Wang impact cost. As illustrated in \cite[Figure~1]{GraeweHorst.17}, this ``regularizes'' the problem and leads to a smoothing of the optimal execution strategy: for small instantaneous cost parameter $\eps$, the block trades are approximated by fast continuous trading. 
On a more applied note, it has been argued that block trades are not realistic in lit venues such as central limit order books. However, execution models are generally used to determine trade schedules---i.e., the approximate schedule of how the order will be worked over time (e.g., \cite{Bacidore.20})---rather than order routing. The original strategy and the smoothed one are comparable in terms of  the child order sizes implied for reasonably-sized time bins. Meanwhile, the Obizhaeva--Wang formulation often yields simpler analytic expressions, as emphasized in \cite{Webster.23}.

The present paper studies optimal execution in a competitive setting, where early works include \cite{CarlinLoboViswanathan.07, PedersenBrunnermeier.05, Schoneborn.08, SchonebornSchied.09}. We consider $N$ risk-neutral agents trading a security over a given time interval $[0,T]$. Endowed with initial inventory $x^{i}$, agent $i$ seeks to maximize their expected profit or loss and end with flat inventory. If $x^{i}$ is positive or negative, the agent has an exogenous reason to sell or buy, whereas if $x^{i}=0$, the agent is in the market only to prey on other traders. Agents interact through the security's price as each agent's actions impact the price according to the Obizhaeva--Wang model; in the absence of their actions, the price would follow a martingale. This is the natural, ``naive'' formulation of an $N$-player game extending the single-player problem of optimal execution in the Obizhaeva--Wang model. In fact, we will see that it admits no equilibrium except in trivial cases. We shall shed light on this fact and show how to restore existence by adding a particular cost to block trades, under which a tractable equilibrium emerges. More precisely, we prove that there are \emph{unique} block cost parameters leading to existence, whereas all other choices lead to non-existence (\cref{thm:equil.block.cost}). Mathematically, this ``correct'' cost can be determined from the first-order condition. While that cost initially appears as an unprincipled ad-hoc fix, the subsequent results will give a deeper meaning to it.

The aforementioned regularized version of the Obizhaeva--Wang model, with an additional quadratic instantaneous cost $\eps\int_0^T v_t^2 dt$  on the trading rate $v_{t}$, has been used successfully in the game literature. Indeed, \cite{Strehle.17} shows existence and uniqueness of a Nash equilibrium. The derivation highlights the mathematical significance of the regularization: the first-order condition for the equilibrium boils down to a Fredholm equation of the \emph{second} kind; a type of equation that is well-posed under general conditions (in contrast to the first kind appearing in~\cite{GatheralSchiedSlynko.12}). This equation also leads to an expression for the equilibrium, which however still requires numerical evaluation. In the present paper, while not our primary objective, we add to this literature by providing the solution in fully closed form (see \cref{thm:equil.liq.constr} for this model with liquidation constraint, and \cref{thm:equil.term.pen} for an additional  model allowing incomplete execution). The derivation is admittedly a tour de force, but it enables a fine downstream analysis (in addition to making numerical implementation trivial). We mention that the work of \cite{Strehle.17} has been generalized in several directions, such as incorporating alpha signals \cite{NeumanVoss.23}, alpha signals and non-exponential decay kernels~\cite{AbiJaberNeumanVoss.24} or self-exciting order flow \cite{FuHorstXia.22}.  All these works make crucial use of the regularization by instantaneous cost. While the fixed parameter $\eps>0$ can be arbitrarily small, it is not obvious what exactly this regularization is approximating. The present work aims to shed light on that. A separate stream of literature restricts trading to a discrete set of dates, which can be seen as a different type of regularization. This literature is closely related to our main results, hence discussed in more detail in \cref{se:furtherLiterature} below.

In the single-player execution problem discussed in the first paragraph, the solution of the regularized version converges to the unregularized one for $\eps\to0$, as one would expect (see~\cite{GraeweHorst.17, HorstKivman.24}). Moreover, one can check that the regularizing instantaneous cost $\eps\int_0^T v_t^2 dt$ (with $v$ depending on $\eps$) converges to zero as $\eps\to0$. The game turns out to be markedly different, as can already be gleaned from the aforementioned non-existence of equilibria for the naive formulation. In fact, the instantaneous cost $\eps\int_0^T v_t^2 dt$ of a typical agent in the regularized equilibrium no longer converges to zero. Financially, this is testament to the competition which causes more aggressive trading in the game case. Mathematically, it suggests that the correct limiting game incorporates an additional cost relative to the naive formulation. Indeed, we show that the instantaneous cost converges exactly to the block cost that uniquely gives existence of equilibria, thus explaining the seemingly ad-hoc cost coefficient (\cref{thm:small.eps.lim}). 

Analogously to the single-player case, the equilibrium of the limiting model with block costs is very tractable. We use its expression for the equilibrium impact cost to compare with the single-player case. First, we study the cost of anarchy; i.e., the increase in cost due to competition relative to the strategy that a central planner would use (\cref{se:cost.anarchy}). Second, we show how the presence of $N-1$ predators (traders with zero initial inventory) increases the cost for an agent who needs to unwind inventory (\cref{se:cost.predation}). In both cases, the cost increases with $N$, but the increase tapers out as $N$ gets large.

\subsection{Further Related Literature}\label{se:furtherLiterature}

As mentioned above, restricting trading to a discrete set of dates can be seen as a kind of regularization of a continuous-time model. The thesis \cite{Schoneborn.08} was the first to consider a game in the Obizhaeva--Wang model, for $N=2$ traders. When trading at discrete dates, it was observed that equilibrium exists but consists of erratic strategies (e.g., an agent might sell the entire initial inventory at the first date and buy it back the next date). Moreover, in the high-frequency limit where the gaps between the trading dates tend to zero, the equilibria oscillate and do not converge. This phenomenon is further studied (mostly numerically) in \cite{SchiedZhang.19} for more general decay kernels. Here it is shown that introducing additional trading costs can dampen the oscillations and reduce equilibrium trading costs, which is interpreted as friction providing protection against predatory trading. In a similar framework but focusing on the Obizhaeva--Wang model (and still with $N=2$ traders), \cite{SchiedStrehleZhang.17} provides a detailed analytic study which is closely related to the present results. It is shown that oscillations are suppressed if sufficiently large additional trading costs are introduced, and in that case the high-frequency limit exists. Intuitively, high frequency is akin to small instantaneous cost. In \cite{SchiedStrehleZhang.17} the authors further consider a model with continuous trading and additional block costs. It is shown that an equilibrium exists when a particular block cost is charged at the initial and terminal time, and in that case, the equilibrium coincides with the aforementioned high-frequency limit. The equilibrium is precisely the one of \cref{thm:equil.block.cost} in the particular case $N=2$. In \cite{SchiedStrehleZhang.17} it is not discussed directly if the two equilibrium block costs arise as the limits of the added costs in the discrete model, but this can be conjectured based on our \cref{thm:small.eps.lim}. As can be seen in \cref{thm:equil.block.cost}, the case $N=2$ is unique in that the \emph{same} block cost is charged at the initial and terminal time, whereas for $N>2$ the initial cost is larger and depends on~$N$. In particular, $N=2$ is the only case that allows for a cost that is not time-dependent.

The follow-up work \cite{LuoSchied.19} studies an $N$-player game related to the one of \cite{SchiedStrehleZhang.17}. Instead of having a liquidation constraint, agents unwind their inventory because they have a (mean-variance or exponential) utility function: holding the martingale asset causes disutility, incentivizing liquidation. The discrete-time equilibrium is obtained analytically whereas for the high-frequency limit, only a numerical study is performed. Based on the numerics, it is conjectured that two different critical values for the additional trading cost suppress oscillations in the particular cases where all initial inventories are equal or add up to zero. On the other hand, there is no value that works for general inventories, suggesting that the limit in this model does not exist. The two values coincide with the ones in our analytic result (\cref{thm:equil.block.cost}). We infer from our result that the key to resolving the non-existence is a \emph{time-dependent} additional cost, which has not been considered before.

After explaining that the proof technique of \cite{SchiedStrehleZhang.17} does not extend to $N>2$ players, the theoretical asymptotic  results in \cite{LuoSchied.19} are stated in a slightly different model. Instead of a high-frequency limit, trading takes place at integer dates with time horizon $T=\infty$. Absence of a (finite) horizon circumvents block trades at $T$ and---from the perspective of our results---the necessity to identify a different block cost at~$T$. The authors show that an equilibrium exists for a particular choice of cost parameter (indeed the same as the additional cost at $t=0$ in our model). Several conjectures about non-existence for different values are stated, which are all confirmed by our sharp results. The connection with the prelimit (i.e., sending $T\to\infty$) is not made in \cite{LuoSchied.19}, but we can readily conjecture the results based on ours. In fact, in retrospect, our results suggest that it should be possible to obtain a full extension of the high-frequency limit of \cite{SchiedStrehleZhang.17} to $N>2$ players without sending $T\to\infty$. Namely, introducing a time-dependent additional cost, we would expect convergence to the equilibrium identified in \cref{thm:equil.block.cost}. 

Finally, we discuss some related works that have a different focus. In \cite{NeumanVoss.23}, the model is similar to the one in \cref{sec:N.inst.cost}, with a fixed instantaneous cost, but traders have additional alpha signals, meaning that the unaffected price is not a martingale. Moreover, traders have an additional quadratic inventory cost. The equilibrium has a similar structure as the one in \cref{sec:N.inst.cost}; however, the strategies are not found in fully closed form (only up to solving a quartic equation) and their arguments require the inventory cost.  %
The focus of~\cite{NeumanVoss.23} is on incorporating the trading signal, with additional results on the mean-field limit where the number of traders tends to infinity, whereas we are interested in vanishing instantaneous costs. 

Also in the mean-field setting, but using temporary and permanent price impact, \cite{CasgrainJaimungal.19} considers traders with partial information: they observe prices, but not other traders' strategies. Moreover, there are heterogeneous subpopulations of traders with different risk parameters, and in \cite{CasgrainJaimungal.20}, they additionally have heterogeneous beliefs. The mean-field equilibrium strategies are determined and have an affine form. These works focus on partial information and heterogeneous beliefs, and use the mean-field setting where each agent is infinitesimally small and hence has no price impact. By contrast, we solve an $N$-player game with transient impact and  study vanishing instantaneous costs.

\vspace{1em}

The remainder of this paper is organized as follows. \Cref{sec:prob.formulation} formulates in detail the stochastic games to be considered: with instantaneous cost on trading rate and either a penalty on leftover inventory ($A$) or full liquidation constraint ($A'$), and the limiting model~($B$) with block cost (but no cost on the trading rate). We also include the uniqueness of Nash equilibria in \cref{subse:uniqueness}; this statement applies to all formulations under consideration. \Cref{sec:N.inst.cost} summarizes the main results for the models with instantaneous cost. First, \cref{subse:terminalPenalty} describes the equilibrium for game $A$ with penalty on leftover inventory. This problem can be approached directly by variational methods. Then, \cref{sec:N.inst.cost.liq.constr} treats game $A'$ where full liquidation is imposed as a constraint. Mathematically, we construct the equilibrium of the constrained problem from the former by letting the penalty tend to infinity. Both sections provide closed-form expressions for the equilibrium trading strategies and costs. \Cref{sec:N.block.cost} describes the model $B$ with block cost, giving a full characterization of existence (or non-existence) of equilibria depending on the block cost parameters. When it exists, the equilibrium and its costs are found in closed form. \Cref{sec:id.limit} shows how the model with block costs arises as the limit of model $A'$ when the instantaneous cost tends to zero. \Cref{se:cost.anarchy.predation} discusses the cost of anarchy and the cost of predation. \Cref{sec:conclusion} concludes and comments on follow-up research. \Cref{se:notation} details some (reasonably standard) notational conventions while \cref{app:opt.theory} summarizes properties of Gateaux derivatives and $\Gamma$-convergence that are used in the proofs. As the main proofs are lengthy, they are not included in the body of the paper. \Cref{app:proofsProblemformulationAndUniqueness} contains the proofs for \cref{sec:prob.formulation},
\cref{app:N.inst.cost} contains the proofs for \cref{sec:N.inst.cost}, \cref{app:N.block.cost} contains the proofs for \cref{sec:N.block.cost}, and 
\cref{app:proofs.for.sec.id.limit} contains the proofs for \cref{sec:id.limit}. (There is no appendix for \cref{se:cost.anarchy.predation} as the calculations are straightforward.) Last but not least, \cref{app:constants} contains \cref{tab:constants}, a collection of constants that we introduced to shorten the otherwise unwieldy formulas in the main results.

\section{$N$-Player Game Formulation}\label{sec:prob.formulation}

We consider a market on a filtered probability space $(\Omega,\mathcal{F},\mathbb{F},\mathbb{P})$ satisfying the usual conditions. In this market there are $N$ traders and a single asset. We index the traders by $i\in\{1,
\dots, N\}$ and denote their inventory processes by $X^{i}=(X^{i}_t)_{t\geq0}$, where $X^i_t$ indicates the number of shares held by trader~$i$ at time~$t$. Each trader $i$ is endowed with initial holdings $X_{0-}^{i}=x^{i}\in\mathbb{R}$ to be unwound by the common terminal time $T>0$. More precisely, we shall consider problems where a liquidation constraint $X_{T}^{i}=0$ is enforced as well as problems where terminal inventory is merely penalized by a cost.

\begin{definition}\label{def:admissible.X}
    We say that $X^{i}=(X^{i}_t)_{t\geq0}$ is an admissible inventory process for trader $i$ if:
    \begin{enumerate}
        \item $X^{i}$ is c\`adl\`ag and predictable.
        \item The paths $t\mapsto X^{i}_t$ have ($\mathbb{P}$-essentially) bounded total variation.
        \item $X^{i}_{0-}=x^{i}$ and $X^{i}_t$ is constant for $t\geq T$.
    \end{enumerate}
\end{definition}

We assume that the unaffected asset price---which is the price that would obtain if the $N$ agents did not trade---evolves according to a c\`adl\`ag square-integrable martingale, $P=(P_t)_{t \geq0}$. %
To describe the actual price, we define the impact process $I=(I_t)_{t\geq0}$ through the 
generalized Obizhaeva--Wang dynamics
\begin{equation*}%
dI_t = -\beta I_t dt + \lambda \sum_{i=1}^NdX^{i}_t, \ \ \ I_{0-} = 0,
\end{equation*}
for push and resilience parameters $\lambda,\beta>0$. Hence,
\begin{equation*}
    I_t =\lambda \int_0^te^{-\beta (t-s)} \sum_{i=1}^NdX_s^{i},
\end{equation*}
and we define the affected price $S=(S_t)_{t\geq0}$ through
\begin{equation*}
S_t = P_t + I_t, \ \ \ t\geq0.
\end{equation*}
Here and throughout the paper, $\int_{a}^{b}:=\int_{[a,b]}$ (cf.\ \cref{se:notation}), and the ``a.s.'' qualifier is suppressed.

Let $\Delta X_t^{i} :=X^{i}_t - X_{t-}^{i}$. 
We define the impact cost associated with the admissible inventory processes $\boldsymbol{X} = (X^{1},\dots,X^{N})$ in the following way. For trader $i$, the net proceeds or outlays from trading are
\begin{equation}\label{eq:OWcost}\int_0^T S_{t-} dX_t^{i}+\frac{1}{2}\sum_{t\in[0,T]} \Delta S_t  \Delta X_t^{i}.
\end{equation}
This says that continuous trading at time $t$ transacts at the price $S_{t-}$ while a block trade of size $\Delta X_t^{i}$ additionally realizes half the price dislocation at $t$ and has a final execution price of
\begin{equation*}S_{t-} +\frac{1}{2}\Delta S_t=\frac{1}{2}( S_{t-}+S_{t}).
\end{equation*}
This can be interpreted as trader $i$ obtaining the average execution price of all trades happening at~$t$ (rather than the marginal price at $t-$). Implicitly, the above definitions also govern what happens when several agents trade at the same time. In some works, including~\cite{SchiedStrehleZhang.17} concerned with the case $N=2$, ties are explicitly broken randomly: when two agents place a block trade at the same time, a coin flip decides which trade is settled first. The present definitions are equivalent (in terms of expected costs) but shorter to write, especially in the $N$-player case. See also \cite{Webster.23} for a related discussion.

In some of the problems below we allow for the possibility that a trader's inventory has not been entirely unwound by time~$T$. As a result, we must account for the change in value of their holdings over $[0,T]$. To this end, we add to the execution costs the change in the marked-to-market value of their holdings,
\begin{equation}\label{eq:markToMarket}
    X_{0-}^iP_{0-}-X_T^iP_T.
\end{equation}
As is standard in the literature (e.g., \cite{NeumanVoss.22,NeumanVoss.23}), we use the unaffected price $P$ for inventory valuation.\footnote{This convention exists, at least in part, to remedy the loss of convexity (see Lemma \ref{lem:strict.convex}) that can arise when the reference price is instead taken to be $S_T$.} 
If the terminal inventory is constrained to be zero this additional accounting amounts to adding a constant to the trader's cost, hence has no effect on their optimal strategy.

On top of the impact cost, we also consider additional trading and terminal costs. Let $\chi_{E}$ be the characteristic function of a set $E\in\mathcal{F}$,
\[\chi_{E}(\omega)=\begin{cases}
    \infty & \omega\in E,\\
    0 & \mathrm{otherwise},
\end{cases}\] 
and write $\{dX^i\ll dt\}$ for the set of $\omega\in\Omega$ on which the measure associated with the bounded variation function $t\mapsto X^i_t(\omega)$ is absolutely continuous with respect to Lebesgue measure. 

\paragraph{Cost $A$.} In our first problem formulation, indexed by the symbol~$A$, we penalize fast (absolutely continuous) trading with an ``instantaneous cost'' and levy  a terminal inventory penalty. The induced cost is
\begin{equation} \label{eq:addl.cost.A} 
C_{A}(X^{i}):=\frac{\varepsilon}{2}\int_0^T (\dot{X}^{i}_t)^2dt+\chi_{\{dX^i\ll dt\}^c}+\frac{\varphi}{2} (X_T^{i})^2, %
\end{equation}
where $\varepsilon, \varphi>0$ and $\dot{X}^{i}_t$ is the derivative\footnote{This is assured to exist $dt$-almost everywhere.} of $t\mapsto X^{i}_t$. The characteristic function means that discontinuous or singular continuous controls\footnote{i.e., controls $t\mapsto X_t^{i}$ whose Lebesgue decomposition on any $[0,t_0]\subset \mathbb{R}_+$ has singular continuous or purely discontinuous components.} incur infinite costs.

\paragraph{Cost $A'$.} In the second formulation, indexed by $A'$, we modify the above to enforce the hard constraint that all inventory be liquidated by $T$,
\begin{equation}C_{A'}(X^{i}):=\frac{\varepsilon}{2}\int_0^T (\dot{X}^{i}_t)^2dt+\chi_{\{dX^i\ll dt\}^c}+\chi_{
\{X_T^{i} \not=0\}}. %
\end{equation}
Formally, this corresponds to setting $\varphi=\infty$ in~\eqref{eq:addl.cost.A}.

\paragraph{Cost $B$.} In our final variant, indexed $B$, trading need not be absolutely continuous. We charge a (deterministic but possibly time-dependent) cost $\vartheta_t/2\geq 0$ on block trades and enforce inventory liquidation,
\begin{equation}\label{eqn:CB}C_B (X^{i}):=\frac{1}{2}\sum_{t\in[0,T]} \vartheta_t(\Delta X_t^{i})^2+\chi_{\{X_T^{i} \not=0\}}.
\end{equation}

To unify the statements below, we use the symbol $\cdot$ as a placeholder for the type of cost ($A,A'$ or $B$). 
Combining the additional cost $C_\cdot$ with the impact cost~\eqref{eq:OWcost} and the value~\eqref{eq:markToMarket} of the terminal inventory, trader $i$ has the following objective function if we fix the actions $\boldsymbol{X}^{-i}=(X^{1},\dots,X^{i-1},X^{i+1},\dots,X^{N})$ of the other players,
\begin{align}\label{eqn:orig.obj} J_\cdot(X^{i};&\boldsymbol{X}^{-i})
=\mathbb{E}\left[\int_0^T S_{t-} dX_t^{i}+\frac{1}{2}\sum_{t\in[0,T]} \Delta S_t \Delta X_t ^{i} +(X_{0-}^iP_{0-}-X_T^iP_T)+C_\cdot(X^{i})\right].
\end{align}
The aim of trader $i$ is to minimize $J_{\cdot}(X^{i};\boldsymbol{X}^{-i})$. The next proposition uses the martingale property of the unaffected price and provides a more explicit formula for this quantity.

\begin{proposition}\label{prop:obj.func.rep}
    The objective function $J_
    \cdot(X^{i};\boldsymbol{X}^{-i})$ can be written
    \begin{align}\label{eqn:obj.impact.rep}
    J_\cdot(X^{i};\boldsymbol{X}^{-i})&=\mathbb{E}\left[\int_0^T I_{t-} dX_t^{i}+\frac{1}{2}\sum_{t\in[0,T]}  \Delta I_t \Delta X_t ^{i}+C_\cdot(X^{i})\right]\\
    &=\lambda\mathbb{E}\Bigg[\frac{1}{2}\int_0^T\int_0^T e^{-\beta |t-s|} dX_s^{i} dX_t^{i} + \int_0^T\int_0^{t-} e^{-\beta (t-s)} \sum_{j\not=i}dX_s^{j} dX_t^{i} \nonumber \\
    &\quad \quad \quad \quad + \frac{1}{2}\sum_{j\not=i}\sum_{t\in[0,T]} \Delta X_t^{j}\Delta X_t ^{i} \Bigg ]+\mathbb{E}\bigg[C_\cdot(X^{i})\bigg].\label{eqn:obj.schied.rep}
\end{align}
\end{proposition}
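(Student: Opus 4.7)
The identity is a direct computation from \eqref{eqn:orig.obj}: writing $S=P+I$ splits the impact cost into a $P$-part and an $I$-part. My plan is (i) to show that the $P$-part combined with the mark-to-market term $X_{0-}^iP_{0-}-X_T^i P_T$ contributes zero in expectation, yielding \eqref{eqn:obj.impact.rep}; and (ii) to rewrite the $I$-part by substituting the explicit convolution form of $I$ and applying a symmetrization of the exponential kernel to obtain \eqref{eqn:obj.schied.rep}. The two steps are independent.

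For step (i), I would apply semimartingale integration by parts to the product $X^i P$, with initial value taken at $0-$, to get
\[
X^i_T P_T - X^i_{0-}P_{0-} = \int_0^T P_{t-}\,dX^i_t + \int_0^T X^i_{t-}\,dP_t + [X^i,P]_T.
\]
Because $X^i$ is of bounded variation, $[X^i,P]_T=\sum_{t\in[0,T]} \Delta X^i_t\,\Delta P_t$, so the jump-correction in \eqref{eqn:orig.obj} reads $\tfrac{1}{2}\sum_t \Delta P_t\,\Delta X^i_t = \tfrac{1}{2}[X^i,P]_T$. Rearranging the above, the full $P$-part plus the mark-to-market term equals $-\int_0^T X^i_{t-}\,dP_t - \tfrac{1}{2}[X^i,P]_T$. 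Under $\mathbb{E}[[P,P]_T]<\infty$, $P$ is a square-integrable martingale, and since $X^i$ has bounded total variation on $[0,T]$ it is bounded, so the stochastic integral is a true martingale with zero mean. For $[X^i,P]_T$, predictability of $X^i$ confines all of its jumps to predictable stopping times $\tau$, at which $\mathbb{E}[\Delta P_\tau\mid\mathcal{F}_{\tau-}]=0$ while $\Delta X^i_\tau$ is $\mathcal{F}_{\tau-}$-measurable; summing over the countably many such jumps (with $|[X^i,P]_T|\le ([X^i,X^i]_T[P,P]_T)^{1/2}$ providing the dominating integrability) gives $\mathbb{E}[[X^i,P]_T]=0$.

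For step (ii), substituting $I_{t-}=\lambda\int_0^{t-} e^{-\beta(t-s)}\sum_j dX^j_s$ and $\Delta I_t=\lambda\sum_j\Delta X^j_t$ into \eqref{eqn:obj.impact.rep} and splitting each $j$-sum into the self-term ($j=i$) and cross-terms ($j\neq i$), the cross-term contributions match exactly the second double integral and cross-jump sum in \eqref{eqn:obj.schied.rep}. For the self-term I would invoke the symmetrization identity
\[
\int_0^T\!\!\int_0^{t-} e^{-\beta(t-s)}\,dX^i_s\,dX^i_t + \tfrac{1}{2}\sum_{t\in[0,T]}(\Delta X^i_t)^2 = \tfrac{1}{2}\int_0^T\!\!\int_0^T e^{-\beta|t-s|}\,dX^i_s\,dX^i_t,
\]
which follows by splitting the product measure on $[0,T]^2$ into $\{s<t\}$, $\{s>t\}$, and the diagonal: symmetry of the kernel equates the first two pieces, while the diagonal of the double Stieltjes integral picks up exactly $\sum_t (\Delta X^i_t)^2$.

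The main obstacle is the verification in step (i) that $\mathbb{E}[[X^i,P]_T]=0$, since it requires mixing a general local martingale (possibly with jumps at both predictable and totally inaccessible times) with the predictable BV integrator $X^i$; the rest of the argument is essentially algebra and an appeal to Fubini for the cross-term manipulation in step (ii).
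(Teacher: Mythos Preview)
Your proposal is correct and follows essentially the same route as the paper: integration by parts on $X^iP$ to eliminate the unaffected-price terms in expectation, followed by substitution of the explicit form of $I$ and the symmetrization identity for the self-interaction. The only cosmetic difference is that the paper handles $\mathbb{E}[[X^i,P]_T]=0$ by citing that $[X^i,P]$ is a local martingale (Jacod--Shiryaev, Proposition~I.3.14) and upgrading via boundedness of $X^i$, whereas you argue directly from predictability of the jump times of $X^i$; both are valid and amount to the same thing.
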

The proof is given in Appendix \ref{app:pf.obj.func.rep}. Next, we formally define the Nash equilibria to be studied below.

\begin{definition} \label{def:nash.equilibrium}A strategy profile $\boldsymbol{X}^*=(X^{*,1},\dots,X^{*,N})$ is a \emph{Nash equilibrium} for $J_\cdot$ if:
\begin{enumerate}
    \item $X^{*,i}$ is admissible for all $i=1,\dots, N$.
    \item For all $i=1,\dots,N$ and all admissible strategies $Z$ for trader $i$,
    \begin{align}\label{eq:Nash.condition}
      J_\cdot(Z;\boldsymbol{X}^{*,-i})\geq J_\cdot(X^{*,i};\boldsymbol{X}^{*,-i}).
    \end{align} 
\end{enumerate}
We say that $\boldsymbol{X}^*$ is a Nash equilibrium \emph{in the class of deterministic strategies} if $X^{*,i}$ is admissible and deterministic for all $i$, and~\eqref{eq:Nash.condition} holds for all admissible and deterministic~$Z$. Similar terminology applies for other subsets of admissible strategies.
\end{definition}

\subsection{Uniqueness}\label{subse:uniqueness}

The remainder of this section focuses on the uniqueness of equilibria (regardless of existence) and is valid for all three types of cost. 

\begin{proposition}\label{prop:nash.eq.unique}
    There is at most one Nash equilibrium. %
\end{proposition}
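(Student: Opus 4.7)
The plan is to run a pseudo-gradient monotonicity argument exploiting the strict positive definiteness of the Obizhaeva--Wang kernel. Suppose $\mathbf{X}^*$ and $\mathbf{Y}^*$ are two Nash equilibria and set $Z^i := X^{*,i} - Y^{*,i}$, $\tilde Z := \sum_{i=1}^N Z^i$. Since the admissible set is convex and each $J_\cdot(\,\cdot\,;\mathbf{X}^{-i})$ is convex in its first argument (the self-impact quadratic form is positive semi-definite and $C_\cdot$ is convex), the Nash property gives, for every $i$, the first-order inequalities
\[
D J_\cdot(X^{*,i};\mathbf{X}^{*,-i})[Y^{*,i}-X^{*,i}] \geq 0 \quad\text{and}\quad D J_\cdot(Y^{*,i};\mathbf{Y}^{*,-i})[X^{*,i}-Y^{*,i}] \geq 0,
\]
where $D$ denotes the Gateaux derivative in the strategy argument. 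Adding and summing over $i$ yields
\[
\sum_{i=1}^N \bigl( D J_\cdot(Y^{*,i};\mathbf{Y}^{*,-i}) - D J_\cdot(X^{*,i};\mathbf{X}^{*,-i}) \bigr)[Z^i] \;\geq\; 0. \qquad (\star)
\]

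Next I would evaluate the left-hand side of $(\star)$ using \cref{prop:obj.func.rep}. Write $\tilde Q(\mu,\nu) := \tfrac{1}{2}\mathbb{E}\int_0^T\!\int_0^T e^{-\beta|t-s|}\,d\mu_s\,d\nu_t$ for the symmetric bilinear form attached to the kernel, and $B(X,Y) := \mathbb{E}\int_0^T\!\int_0^{t-} e^{-\beta(t-s)}\,dY_s\,dX_t + \tfrac{1}{2}\mathbb{E}\sum_t \Delta Y_t\,\Delta X_t$ for the half-plane form appearing in the cross terms; a direct symmetrization yields the key identity $B(X,Y)+B(Y,X)=2\tilde Q(X,Y)$. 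The dependence of $J_\cdot(X^i;\mathbf{X}^{-i})$ on $X^i$ reads $\lambda\,\tilde Q(X^i,X^i) + \lambda \sum_{j\neq i} B(X^i,X^j) + C_\cdot(X^i)$, so substituting the Gateaux derivatives and collecting terms reduces the sum in $(\star)$ to
\[
-\lambda\Bigl(\tilde Q(\tilde Z,\tilde Z) + \sum_{i=1}^N \tilde Q(Z^i,Z^i)\Bigr) - \sum_{i=1}^N \bigl(D C_\cdot(X^{*,i}) - D C_\cdot(Y^{*,i})\bigr)[Z^i].
\]
Convexity of $C_\cdot$ makes the second sum nonnegative (monotonicity of the subdifferential), and $\tilde Q(\tilde Z,\tilde Z) \geq 0$ by positive semi-definiteness, so $(\star)$ forces $\sum_{i=1}^N \tilde Q(Z^i,Z^i) \leq 0$.

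To close the argument I would invoke strict positive definiteness of $e^{-\beta|t-s|}$ on signed measures on $[0,T]$. This follows from the Fourier representation $e^{-\beta|r|} = \tfrac{\beta}{\pi}\int_{\mathbb{R}} e^{i\xi r}/(\beta^2+\xi^2)\,d\xi$, giving $\int\!\int e^{-\beta|t-s|}\,d\mu(s)\,d\mu(t) = \tfrac{\beta}{\pi}\int_{\mathbb{R}} |\hat\mu(\xi)|^2/(\beta^2+\xi^2)\,d\xi > 0$ unless $\mu\equiv 0$. Hence each $\tilde Q(Z^i,Z^i) = 0$ forces $Z^i \equiv 0$ pathwise, so $\mathbf{X}^* = \mathbf{Y}^*$.

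The main delicacy I anticipate is making the Gateaux-derivative computation rigorous across the three cost regimes, since $C_\cdot$ contains characteristic functions encoding constraints (absolute continuity, full liquidation, or bounds on block trades). The cleanest remedy is to work within the convex effective domain of $C_\cdot$, on which the smooth remainder admits well-defined derivatives; both $X^{*,i}$ and $Y^{*,i}$ must lie in this domain (otherwise their Nash cost would be $+\infty$), and convexity of the domain ensures that the test directions $Y^{*,i}-X^{*,i}$ stay inside it. This is presumably the role of the Gateaux-derivative toolkit developed in \cref{app:opt.theory}.
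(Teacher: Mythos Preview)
Your argument is correct and is essentially the same monotonicity argument the paper uses: the paper packages it by defining $V(\alpha)=\sum_i\bigl(J(X^{\alpha,i};\boldsymbol{X}^{0,-i})+J(X^{1-\alpha,i};\boldsymbol{X}^{1,-i})\bigr)$ for $\boldsymbol{X}^\alpha=\alpha\boldsymbol{X}^1+(1-\alpha)\boldsymbol{X}^0$, observes $V(\alpha)\ge V(0)$ from the Nash property, and then computes $\dot V(0)$ term by term---your inequality $(\star)$ is precisely $\dot V(0)\ge 0$, and both proofs close via the strict positive definiteness of $e^{-\beta|t-s|}$ after the same symmetrization of the cross terms. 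The only cosmetic difference is that the paper's $V(\alpha)$ device sidesteps the Gateaux-derivative subtlety you flag by differentiating a scalar convex function of $\alpha$ on the effective domain rather than invoking $D C_\cdot$ directly.
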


The proof is detailed in Appendix \ref{app:pf.nash.eq.unique} for completeness. While technical, it essentially extends the arguments in \cite[Proposition~4.8]{SchiedStrehleZhang.17} to our setting. A key ingredient is the strict convexity of the objective.

\begin{lemma}\label{lem:strict.convex}
    For any admissible $\boldsymbol{X}^{-i}$,  the objective $J_\cdot(\,\cdot\,; \boldsymbol{X}^{-i})$ is strictly\footnote{An extended real-valued convex function $F$ is called strictly convex if it is strictly convex on its domain
$\mathrm{dom}(F)=\{h:F(h)<\infty\}$.} convex in its first argument.
\end{lemma}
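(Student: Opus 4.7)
My plan is to work directly from the representation in \cref{prop:obj.func.rep}. Looking at \eqref{eqn:obj.schied.rep}, the only pieces depending nonlinearly on $X^i$ are the self-impact quadratic form
\[
Q(X^i):=\frac{\lambda}{2}\int_0^T\!\!\int_0^T e^{-\beta|t-s|}\,dX^i_s\,dX^i_t
\]
and $C_\cdot(X^i)$; every cross term with $\boldsymbol{X}^{-i}$ is pathwise linear in $X^i$. Since an expectation of a pathwise strictly convex functional is strictly convex on the effective domain (if two admissible $X^i,Y^i$ differ then they disagree on a set of positive $\mathbb{P}$-measure, where the pathwise strict inequality bites), it is enough to prove that $Q+C_\cdot$ is pathwise strictly convex.

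The key analytic step will be to show that $Q$ is a strictly positive definite quadratic form in the signed measure $dX^i$ of finite total variation on $[0,T]$. The Obizhaeva--Wang kernel $k(t,s)=e^{-\beta|t-s|}$ is, up to a positive constant, the Fourier transform of the strictly positive density $\xi\mapsto \frac{2\beta}{\beta^2+\xi^2}$. So, viewing $\mu:=d(X^i_0-X^i_1)$ as a compactly supported signed measure on $\mathbb{R}$, Bochner/Parseval yields
\[
\int\!\!\int k(t-s)\,d\mu(s)\,d\mu(t)\;=\;\frac{1}{\pi}\int_{\mathbb{R}}|\hat\mu(\xi)|^2\,\frac{\beta}{\beta^2+\xi^2}\,d\xi,
\]
which is strictly positive unless $\hat\mu\equiv 0$, hence unless $\mu=0$. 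Since admissibility fixes the common initial condition $X^i_{0-}=x^i$, the vanishing of $\mu$ forces $X^i_0\equiv X^i_1$ on $[0,T]$, giving strict convexity of $Q$ in its argument.

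Strict convexity of $J_\cdot(\,\cdot\,;\boldsymbol{X}^{-i})$ then follows case by case. For costs $A$ and $A'$, strict convexity is already supplied by $\tfrac{\varepsilon}{2}\int_0^T(\dot X^i_t)^2\,dt$ on its convex effective domain (absolutely continuous paths, with the prescribed initial condition and, for $A'$, $X^i_T=0$); the terminal penalty and the characteristic functions are merely convex but do not disturb strictness. For cost $B$, the block term $\tfrac12\sum_t\vartheta_t(\Delta X^i_t)^2$ is only weakly convex in $X^i$ (it sees only the jumps), so here the strict positive definiteness of $Q$ is essential; the constraint $\chi_{\{X^i_T\ne 0\}}$ merely restricts attention to the convex affine set $\{X^i_T=0\}$, on which $Q$ remains strictly convex.

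The main obstacle will be making the Fourier/Parseval computation fully rigorous for a general signed measure of bounded variation on $[0,T]$. Since $dX^i$ has compact support and finite mass, its Fourier transform is bounded and continuous, so Parseval can be justified either by a direct appeal to Bochner's theorem for continuous positive-definite functions, or by a standard mollification reducing to the $L^2$ case. Once this is in hand, the remainder is bookkeeping: lifting pathwise strict convexity to expectation, and checking that the characteristic functions leave the effective domain convex.
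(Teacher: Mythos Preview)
Your proposal is correct and follows essentially the approach the paper has in mind: the paper omits the proof of \cref{lem:strict.convex}, pointing to \cite[Lemma~4.7]{SchiedStrehleZhang.17}, and the key ingredient there---as in your argument---is the strict positive definiteness (in the sense of Bochner) of the exponential kernel $e^{-\beta|t-s|}$, which the paper itself invokes explicitly in the proof of \cref{prop:nash.eq.unique}. Your decomposition into the strictly convex self-impact quadratic form $Q$, the linear cross terms, and the convex additional cost $C_\cdot$ is exactly right, and your case distinction for the three cost types is accurate (in particular, the observation that for $C_B$ the strictness must come from $Q$ since the block cost sees only jumps).
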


The proof of Lemma~\ref{lem:strict.convex} is analogous to \cite[Lemma 4.7]{SchiedStrehleZhang.17} and omitted. 
The last result of this section states that an equilibrium in the class of deterministic strategies is also an equilibrium in the larger class of all admissible strategies. To wit, if~\eqref{eq:Nash.condition} holds for deterministic competitors~$Z$, then it automatically holds for general~$Z$.

\begin{lemma}\label{lem:nash.eq.det}
    A Nash equilibrium in the class of deterministic strategies is a Nash equilibrium (in the class of admissible strategies).
\end{lemma}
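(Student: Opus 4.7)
The plan is to show that any admissible random competitor $Z$ to $X^{*,i}$ can be replaced by the deterministic ``pathwise mean'' $\bar Z_t := \mathbb{E}[Z_t]$ without increasing the objective against the deterministic opponents $\boldsymbol{X}^{*,-i}$. Since $\bar Z$ will then be a deterministic admissible strategy, the hypothesis yields
$J_\cdot(X^{*,i};\boldsymbol{X}^{*,-i})\le J_\cdot(\bar Z;\boldsymbol{X}^{*,-i})\le J_\cdot(Z;\boldsymbol{X}^{*,-i})$,
which is~\eqref{eq:Nash.condition} for all admissible $Z$. We may assume $J_\cdot(Z;\boldsymbol{X}^{*,-i})<\infty$ (otherwise the inequality is vacuous); in particular, whenever a terminal liquidation constraint is active, $Z_T=0$ almost surely.

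I would first verify that $\bar Z$ is admissible. The essentially bounded total variation gives $|Z_t|\le |x^i|+\|Z\|_{TV}$, so $\bar Z_t$ is well-defined, and taking expectations inside the partition sum that defines total variation yields $\|\bar Z\|_{TV}\le \mathbb{E}\|Z\|_{TV}<\infty$. C\`adl\`ag regularity of $\bar Z$ follows from dominated convergence applied to the c\`adl\`ag paths of $Z$; the same argument gives the identity $\bar Z_{t-}=\mathbb{E}[Z_{t-}]$ and hence $\Delta\bar Z_t=\mathbb{E}[\Delta Z_t]$. The conditions $\bar Z_{0-}=x^i$, $\bar Z_T=0$ (when constrained), and constancy on $[T,\infty)$ transfer from $Z$ directly, and a deterministic Borel function is predictable. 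In cases $A$ and $A'$ the requirement $d\bar Z\ll dt$ follows from Fubini applied to $Z_t=x^i+\int_0^t v_s\,ds$, giving $\dot{\bar Z}_t=\mathbb{E}[v_t]$.

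Next, using the representation~\eqref{eqn:obj.schied.rep} I would handle each piece of $J_\cdot$ separately. The cross terms involving $\boldsymbol{X}^{*,-i}$ are linear in the competitor's strategy with a deterministic integrand or coefficient, so by Fubini together with $\Delta\bar Z_t=\mathbb{E}[\Delta Z_t]$, their expectation under $Z$ equals their value under $\bar Z$. The self-interaction term $\tfrac12\!\int\!\!\int e^{-\beta|t-s|}dZ_s\,dZ_t$ dominates its $\bar Z$-analogue in expectation because the Obizhaeva--Wang kernel $K(s,t)=e^{-\beta|t-s|}$ is positive semidefinite, which makes the covariation measure $\mathbb{E}[d(Z-\bar Z)_s\,d(Z-\bar Z)_t]$ integrate nonnegatively against $K$. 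The additional cost $C_\cdot$ is handled piecewise by Jensen: $\mathbb{E}[\int\dot Z_t^2\,dt]\ge\int\dot{\bar Z}_t^2\,dt$, $\mathbb{E}[Z_T^2]\ge\bar Z_T^2$, and $\mathbb{E}[\sum_t\vartheta_t(\Delta Z_t)^2]\ge\sum_t\vartheta_t(\mathbb{E}[\Delta Z_t])^2=\sum_t\vartheta_t(\Delta\bar Z_t)^2$. Summing the three pieces gives $J_\cdot(\bar Z;\boldsymbol{X}^{*,-i})\le J_\cdot(Z;\boldsymbol{X}^{*,-i})$ and closes the argument. The main obstacle is the combination of admissibility verification for $\bar Z$ with the Fubini interchange inside the double Lebesgue--Stieltjes integral; once the positive-semidefiniteness of $K$ is noted and the identity $d\bar Z=\mathbb{E}[dZ]$ is formalized, every required inequality reduces cleanly to Jensen, uniformly across cost types $A$, $A'$, and $B$.
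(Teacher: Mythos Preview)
Your proposal is correct and follows precisely the standard averaging-plus-Jensen argument that the paper has in mind (the paper omits the proof, citing \cite[Lemma~4.9]{SchiedStrehleZhang.17}, which proceeds exactly by replacing a random competitor by its pathwise expectation and using convexity of the objective together with positive definiteness of the kernel). The decomposition you outline---linear cross terms handled by Fubini, the self-interaction term by positive semidefiniteness of $e^{-\beta|t-s|}$, and the additional cost $C_\cdot$ by Jensen---is the intended one, and your admissibility checks for $\bar Z$ (bounded variation, c\`adl\`ag via dominated convergence, $\Delta\bar Z_t=\mathbb{E}[\Delta Z_t]$) are the right ingredients.
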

The proof is similar to  \cite[Lemma 4.9]{SchiedStrehleZhang.17} and omitted. Lemma~\ref{lem:nash.eq.det} allows us to restrict to deterministic strategies in parts of the subsequent analysis: if we can find a unique deterministic equilibrium, then Lemma~\ref{lem:nash.eq.det} assures that it is also the unique equilibrium in the larger class of admissible strategies. (It does not assure the reverse: novel arguments will be necessary to see that non-existence of deterministic equilibria implies non-existence in general.)

\begin{remark}
    If $P=(P_t)_{t\geq0}$ were a more general semimartingale (as in e.g., \cite{NeumanVoss.23}), we could not completely remove the unaffected price dynamics from the objective as in Proposition \ref{prop:obj.func.rep}. In that case, solutions are typically stochastic, which precludes much of the explicit analysis in the following sections that relies on Lemma \ref{lem:nash.eq.det}.
\end{remark}

\begin{remark}
    In keeping with most of the extant work in this area, our analysis is conducted under \emph{full information}, meaning objectives, parameters, and initial inventories are common knowledge. An alternative is partial information, in which problem features like the trader inventories are hidden and must be inferred from price dynamics. This represents a substantial increase in complexity since traders also face a filtering problem for their competitors' states. While certainly of interest, this is outside of the scope of the current work.
\end{remark}

\section{Equilibria with Instantaneous Costs}\label{sec:N.inst.cost}

We begin by finding the unique equilibria in the cases where the additional costs are given by $C_{A}(\cdot)$ and $C_{A'}(\cdot)$. That is, in addition to price impact, trading is subject to an instantaneous cost with coefficient $\eps>0$, and there is either a cost on terminal inventory or terminal inventory is required to be zero.

As noted above, we can begin our search by focusing on deterministic strategies. From the form of the costs it is clear that we can further focus on inventory processes with $\dot{X}_t^{i}\in L^2[0,T]$ for all $i$. Therefore, it suffices to take
\begin{equation}\label{eqn:inv.ac}
	X_t^{i}=x^{i}+\int_0^t v_s^{i}ds,  \ \ \ t\in[0,T], \ \ \ i=1,\dots,N,
\end{equation}
for $v^{i}\in L^2[0,T]$. This allows us to parametrize the strategy profile through the auxiliary controls $\boldsymbol{v}=(v^{1},\dots,v^{N})$. Then, by \eqref{eqn:obj.impact.rep} we can write $J_\cdot(X^{i};\boldsymbol{X}^{-i})$ in terms of the impact process which now has the simplified form
\begin{equation*}
	I_t = \lambda\int_0^t e^{-\beta(t-s)}\sum_{j=1}^N v^{j}_s ds, \ \ \ t\in [0,T].
\end{equation*}
As a result, the optimization of the objective \eqref{eqn:orig.obj} can be recast as minimizing
\begin{equation}\label{eqn:cJ.A}
    \mathcal{J}_A(v^{i};\boldsymbol{v}^{-i})=\int_0^T I_t v_t^{i} +\frac{\varepsilon}{2}(v_t^{i})^2dt+\frac{\varphi}{2}(X_T^{i})^2,
\end{equation}
or, respectively,
\begin{equation}\label{eqn:cJ.A'}
	\mathcal{J}_{A'}(v^{i};\boldsymbol{v}^{-i})=\int_0^T I_t v_t^{i} +\frac{\varepsilon}{2}(v_t^{i})^2dt+\chi_{\{X_T^{i}\not=0\}},
\end{equation}
over $v^{i}\in L^2[0,T]$. Notably, the transformed objectives remain (strictly) convex on their domains. We will leverage the Hilbert space structure afforded by this reparametrization to arrive at a complete characterization of the equilibria. For the proofs of the statements in the remainder of this section, see Appendix \ref{app:N.inst.cost}.

\subsection{Terminal Inventory Penalty}\label{subse:terminalPenalty}

Variational arguments in $L^2[0,T]$ allow us to characterize the equilibrium for the cost $\mathcal{J}_A$ in terms of the (unique) solution to a $2N+1$ dimensional system of linear homogeneous ordinary differential equations (ODEs).
\begin{lemma}\label{lem:term.pen.ODE}
    The strategy profile $\boldsymbol{v}$ defines a Nash equilibrium for $\mathcal{J}_A$ if and only if it forms, along with $I$ and auxiliary processes $Y^{1},\dots, Y^{N}$, a solution to the ODE system
	\begin{align*}
		\dot{I}_t&=-\beta I_t+\lambda\sum_{i=1}^Nv_t^{i}, \\
		\dot{Y}^{i}_t&=\beta Y^{i}_t-\lambda v^{i}_t, \ \ \ i=1,\dots,N,\\
		\dot{v}_t^{i}&=\varepsilon^{-1}\left[\beta I_t-\beta Y^{i}_t-\lambda\sum_{j\not=i}v_t^{j}\right], \ \ \ i=1,\dots,N,
	\end{align*}
	subject to the initial and terminal conditions
	\begin{align*}
		I_0&=0, \ \ \ Y_T^{i}=0,\ \ \ 
		v_T^{i}= -\varepsilon^{-1}\left[\varphi X_T^{i}+I_T\right],  \ \ \ i=1,\dots, N.
	\end{align*}
\end{lemma}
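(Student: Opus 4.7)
The objective $\mathcal{J}_A(\,\cdot\,;\boldsymbol{v}^{-i})$ is strictly convex and coercive on $L^2[0,T]$ (by \cref{lem:strict.convex} together with the $\tfrac{\varepsilon}{2}\|v^i\|_{L^2}^2$ term), so $v^i$ minimizes it iff the Gateaux derivative vanishes in every direction $w\in L^2[0,T]$. I would therefore compute this derivative, extract the first-order condition (FOC), and then show that the FOC is equivalent to the stated ODE system with its boundary data.

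For the Gateaux derivative, perturbing $v^i\mapsto v^i+\delta w$ perturbs the impact by $\tilde I_t:=\lambda\int_0^t e^{-\beta(t-s)}w_s\,ds$. Expanding the three terms of $\mathcal{J}_A$ and collecting $O(\delta)$ yields
\[
\int_0^T\!\bigl(I_t w_t+\tilde I_t v_t^i\bigr)dt \;+\;\varepsilon\!\int_0^T v_t^i w_t\,dt\;+\;\varphi X_T^i\!\int_0^T w_t\,dt.
\]
The crucial manipulation is Fubini applied to the $\tilde I_t v_t^i$ term, which gives $\int_0^T w_s\bigl(\lambda\int_s^T e^{-\beta(t-s)}v_t^i\,dt\bigr)ds$. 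Defining the adjoint process $Y^i_s:=\lambda\int_s^T e^{-\beta(t-s)}v_t^i\,dt$, one checks directly that $Y^i_T=0$ and $\dot Y^i_s=\beta Y^i_s-\lambda v^i_s$. Since $w$ is arbitrary, the FOC reads
\[
\varepsilon v_t^i + I_t + Y_t^i + \varphi X_T^i = 0 \quad\text{for a.e. } t\in[0,T]. \tag{$\ast$}
\]

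From $(\ast)$ the ODE characterization is immediate: evaluating at $t=T$ and using $Y^i_T=0$ gives the terminal condition $v^i_T=-\varepsilon^{-1}(\varphi X_T^i+I_T)$; since $I$ and $Y^i$ are absolutely continuous, so is $v^i$, and differentiating $(\ast)$ in $t$ and substituting the ODEs for $\dot I$ and $\dot Y^i$ yields $\varepsilon\dot v_t^i=\beta I_t-\beta Y_t^i-\lambda\sum_{j\neq i}v_t^j$. The conditions $I_0=0$ and $Y_T^i=0$ are built into the definitions. Conversely, if $(\boldsymbol v,I,Y^1,\ldots,Y^N)$ solves the ODE system with the stated boundary data, then $(\ast)$ holds at $t=T$ by the terminal condition, and its time-derivative vanishes identically by the ODE for $\dot v^i$ combined with those for $\dot I,\dot Y^i$; hence $(\ast)$ holds on all of $[0,T]$, so each $v^i$ is the (unique, by strict convexity) best response to $\boldsymbol v^{-i}$, giving a Nash equilibrium by \cref{lem:nash.eq.det}.

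The main obstacle I anticipate is the Fubini/integration-by-parts step that introduces the adjoint process $Y^i$ and the correct identification of the boundary condition $Y^i_T=0$ from the ``$t\geq s$'' structure of the impact kernel. Once the FOC $(\ast)$ is in hand, the passage to the ODE is a routine differentiation, and the equivalence with optimality is underwritten by strict convexity; care is only needed to justify the variational argument within the Hilbert space $L^2[0,T]$, which is standard since all maps $v^i\mapsto I$, $v^i\mapsto Y^i$, $v^i\mapsto X_T^i$ are continuous and affine.
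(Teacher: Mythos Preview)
Your proposal is correct and follows essentially the same route as the paper: compute the Gateaux derivative, use Fubini to introduce the adjoint $Y^i_s=\lambda\int_s^T e^{-\beta(t-s)}v^i_t\,dt$, read off the pointwise first-order condition $\varepsilon v^i_t+I_t+Y^i_t+\varphi X^i_T=0$, and then differentiate to obtain the ODE system with the stated boundary data (and argue the converse by integrating back). The only superfluous step is the appeal to \cref{lem:nash.eq.det} at the end, since the lemma is stated for the deterministic objective $\mathcal{J}_A$ directly.
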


While admittedly a tour de force (deferred to Section~\ref{se:proof.of.equil.term.pen}), it turns out that this system can be solved in fully closed form. Thus we arrive at an explicit characterization of the Nash equilibrium for $\mathcal{J}_A$ and hence, through \eqref{eqn:inv.ac}, also for the general objective $J_A$. It turns out that the strategies depend linearly on the mean starting inventory,
\begin{equation*}
	\overline{x} = \frac{1}{N} \sum_{i=1}^N x^{i},
\end{equation*}
and the individual deviations from the mean. Since the expression for the solution is lengthy, the following theorem makes use of simplifying constants $z_1,z_2,z_3,\gamma_1,\gamma_2, \rho_0$, and $\rho_{\pm}$ whose definitions can be found in Table \ref{tab:constants}.

\begin{theorem}\label{thm:equil.term.pen}
    There is a unique Nash equilibrium for $J_A$. Its equilibrium strategy profile $\boldsymbol{X}^*=(X^{*,1},\dots,X^{*,N})$ is 
    \begin{equation*}
    	X^{*,i}_t= f_t (x^{i}-\overline{x})+g_t \overline{x}, \ \ \ t\in[0,T], \ \ \ i=1,\dots,N,
    \end{equation*}
    where
     \begin{align*}
    	f_t &=1-\frac{\left[\beta t+\frac{\lambda (e^{z_{3} t}-1)}{\varepsilon z_3 e^{z_3T}} \right]\varphi }{\varepsilon z_3 + \left[\beta T+\frac{\lambda(e^{z_3 T}-1)}{\varepsilon z_3e^{z_3 T}}\right]\varphi},\\
    	g_t &=1-\frac{\left[\beta\rho_{-}t+\frac{e^{z_1 t}-1}{z_1}-\frac{\gamma_1}{\gamma_2}\frac{(e^{z_2 t} -1)}{z_2}\right]\varphi}{\varepsilon(\rho_0+\beta\rho_{-})+\lambda N(\rho_{+}+\rho_{-})+ \left[\beta\rho_{-}T+\frac{e^{z_1 T}-1}{z_1}-\frac{\gamma_1}{\gamma_2}\frac{e^{z_2 T}-1}{z_2}\right]\varphi}.
    \end{align*}
\end{theorem}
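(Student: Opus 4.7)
The plan is to invoke \cref{lem:term.pen.ODE}, which reduces the search for an equilibrium to solving a linear, two-point boundary-value problem for the $(2N+1)$-dimensional vector $(I, Y^1,\ldots,Y^N, v^1,\ldots,v^N)$. The key structural observation is that both the dynamics and the boundary data are permutation-symmetric in the agents, so I would decompose each velocity as $v^i = \overline{v} + \widetilde{v}^i$ with $\overline{v} = N^{-1}\sum_j v^j$ and $\sum_i \widetilde{v}^i = 0$, and analogously for $Y^i$. Under this change of variables the large system splits into two decoupled blocks: a three-dimensional \emph{aggregate} block for $(I,\overline{Y},\overline{v})$, and $N$ identical copies of a two-dimensional \emph{deviation} block for $(\widetilde{Y}^i,\widetilde{v}^i)$. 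The boundary data split accordingly via $X_T^i = \overline{X}_T + \widetilde{X}_T^i$.

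For the deviation block, the $2\times 2$ coefficient matrix has trace $\beta+\lambda/\varepsilon = z_3$ and vanishing determinant, hence eigenvalues $0$ and $z_3$. Writing $(\widetilde{v}^i_t,\widetilde{Y}^i_t)$ in the corresponding eigenbasis gives a two-parameter family of solutions; the conditions $\widetilde{Y}^i_T = 0$ and $\widetilde{v}^i_T = -\varepsilon^{-1}\varphi\,\widetilde{X}^i_T$ (with $\widetilde{X}^i_T = (x^i-\overline{x}) + \int_0^T \widetilde{v}^i_s\,ds$ an affine functional of those parameters) fix them uniquely and linearly in $x^i-\overline{x}$. Integrating $\widetilde{v}^i$ then yields $\widetilde{X}^i_t = f_t(x^i-\overline{x})$ with $f_t$ as stated.

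The aggregate block proceeds along the same lines but is heavier: its $3\times 3$ coefficient matrix has $\det = 0$ (so $0$ is an eigenvalue) and a characteristic polynomial whose remaining roots are the constants $z_1, z_2$ from \cref{tab:constants}. Modal decomposition produces $\overline{v}_t = c_0 + c_1 e^{z_1 t} + c_2 e^{z_2 t}$ and compatible expressions for $I$ and $\overline{Y}$ dictated by the corresponding eigenvectors; this is where the auxiliary constants $\gamma_1,\gamma_2,\rho_0,\rho_\pm$ enter, packaging eigenvector components and integrals of the exponential modes. Imposing the three boundary conditions $I_0 = 0$, $\overline{Y}_T = 0$, and $\overline{v}_T = -\varepsilon^{-1}[\varphi\overline{X}_T + I_T]$ gives a $3\times 3$ linear system for $(c_0,c_1,c_2)$; solving it and integrating $\overline{v}$ produces $\overline{X}_t = g_t\,\overline{x}$. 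Reassembling $X^{*,i}_t = \overline{X}_t + \widetilde{X}^i_t$ then yields the claimed formula. To conclude, the ``if'' direction of \cref{lem:term.pen.ODE} certifies that this $\boldsymbol{X}^*$ is a Nash equilibrium in the class of deterministic strategies, \cref{lem:nash.eq.det} upgrades it to the class of all admissible strategies, and uniqueness follows from \cref{prop:nash.eq.unique}.

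The principal obstacle I anticipate is not conceptual but algebraic: diagonalizing the $3\times 3$ aggregate matrix with the four parameters $\beta,\lambda,\varepsilon,N$, inverting the associated $3\times 3$ boundary-condition system, and simplifying the resulting rational expressions into the compact closed form for $g_t$. Each individual step is elementary linear algebra, but the density of the final formula and the need to introduce the auxiliary constants of \cref{tab:constants} show that the primary technical burden is organizational bookkeeping rather than any genuinely new idea.
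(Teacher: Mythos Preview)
Your approach is correct and complete in outline; it differs from the paper's in a useful way. The paper works directly with the full $(2N+1)\times(2N+1)$ matrix $\mathsf{A}$ from \cref{lem:term.pen.ODE}: it writes down all eigenvalues $z_1, z_2$ (simple), $z_3$ (multiplicity $N-1$), $0$ (multiplicity $N$) together with explicit eigenvectors, assembles the fundamental solution $\boldsymbol{F}_t = \mathsf{Q}_t\boldsymbol{c}$ with $2N+1$ unknown constants, and then imposes all $2N+1$ boundary conditions at once. The resulting large linear system is reduced \emph{a posteriori} by summing blocks of equations---in effect, the mean/deviation split is applied to the constants $c_j$ rather than to the differential equations themselves.

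Your route applies the permutation symmetry \emph{a priori} at the level of the ODE, which decouples the system into a $3\times 3$ aggregate block $(I,\overline{Y},\overline{v})$ and $N-1$ independent copies of a $2\times 2$ deviation block $(\widetilde{Y}^i,\widetilde{v}^i)$ before any spectral analysis. This is conceptually cleaner and algebraically lighter: you diagonalise only a $3\times 3$ and a $2\times 2$ matrix, and the boundary conditions split automatically, so the bookkeeping burden you flag is substantially smaller than in the paper's brute-force execution. Both routes land on the same formula after invoking \cref{lem:nash.eq.det} and \cref{prop:nash.eq.unique}; yours simply exploits the structure earlier.
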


We note that the equilibrium strategy profile simplifies in two special cases, a scheme that will be repeated throughout the paper: For symmetric inventories $x^i\equiv x$, the mean-deviation $x^i-\overline{x}$ vanishes and the profile simplifies to $g_t\overline{x}$. The other extreme is the case $\sum_i x^i=0$ of zero net supply, where the mean vanishes and the profile simplifies to $f_t x^i$.

We can use this explicit solution to compute the equilibrium cost. We report the result in terms of additional simplifying constants $\psi,\xi, \mathfrak{p}$, and $\mathfrak{h}_j$ ($j=1,\dots,5$) whose definitions are provided in Table \ref{tab:constants}.

\begin{corollary}\label{cor:eq.cost.term.pen}
The equilibrium cost for the traders can be written as
    \begin{align*}
        J_A(X^{*,i};\boldsymbol{X}^{*,-i})= \int_0^TI_{t-}dX^{*,i}_t+\frac{\varepsilon}{2}\int_0^T(\dot{X}^{*,i}_t)^2dt+\frac{\varphi}{2} (X_T^{*,i})^2, \ \ \ i=1,\dots,N,
    \end{align*}
in terms of the equilibrium impact process\footnote{By using the form of the constants in Table \ref{tab:constants}, it can be verified that $I_0=0$.}
\begin{equation*}I_t=\frac{-N\lambda\left[\rho_{-}+\frac{1}{z_1+\beta}e^{z_1t} -\frac{\gamma_1}{ \gamma_2}\frac{1}{z_2+\beta}e^{z_2t}\right]\varphi}{\varepsilon(\rho_0+\beta\rho_{-})+\lambda N(\rho_{+}+\rho_{-})+ \left[\beta\rho_{-}T+\frac{e^{z_1 T}-1}{z_1}-\frac{\gamma_1}{\gamma_2}\frac{e^{z_2 T}-1}{z_2}\right]\varphi}\overline{x}, \ \ \ t\in[0,T].
\end{equation*}
Stated explicitly, we obtain the equilibrium impact cost
\begin{align*}
    \int_0^TI_{t-}dX^{*,i}_t &=\frac{\lambda N\varphi^2}{\varepsilon^2}\left[\frac{\mathfrak{h}_1}{\psi^2}\overline{x}^2+\frac{\mathfrak{h}_2}{\xi\psi}(x^{i}-\overline{x})\overline{x}\right],
\end{align*}
equilibrium instantaneous cost
\begin{align*}
    \frac{\varepsilon}{2}\int_0^T(\dot{X}^{*,i}_t)^2dt&=\frac{\varphi^2}{\varepsilon}\left[\frac{\mathfrak{h}_3}{2 \psi^2}\overline{x}^2+\frac{\mathfrak{h}_4}{2 \xi^2}(x^{i}-\overline{x})^2+\frac{\mathfrak{h}_5}{ \xi\psi}\overline{x}(x^{i}-\overline{x})\right],
\end{align*}
and equilibrium terminal penalty
\begin{align*}
\frac{\varphi}{2} (X_T^{*,i})^2
&=\frac{\varphi}{2}\left[\frac{\mathfrak{p}^2}{\psi^2}\overline{x}^2+\frac{ z_3^2}{ \xi^2}(x^i-\overline{x})^2+\frac{ 2z_3\mathfrak{p}}{ \xi\psi}(x^i-\overline{x})\overline{x}\right].
\end{align*}
\end{corollary}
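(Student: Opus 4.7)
The plan is to compute each of the three cost components directly by substituting the closed-form equilibrium strategy from \cref{thm:equil.term.pen} into the definitions, exploiting the fact that $J_A(X^{*,i};\boldsymbol{X}^{*,-i})$ decomposes into an impact cost, an instantaneous cost, and a terminal penalty (this decomposition follows from \eqref{eqn:obj.impact.rep} applied along the equilibrium, since at equilibrium the cross interaction terms aggregate into the impact integral $\int_0^T I_{t-}dX^{*,i}_t$). Because the equilibrium trajectories are absolutely continuous and deterministic, each integral is a deterministic Lebesgue integral of products of exponentials, and the expectation in \eqref{eqn:orig.obj} is redundant.

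First I would derive the impact process $I_t$. Writing $v_t^{*,i}=\dot X_t^{*,i}=\dot f_t(x^i-\overline x)+\dot g_t \overline x$ and summing over $i=1,\dots,N$, the deviations $(x^i-\overline x)$ cancel, so $\sum_j v_t^{*,j}=N\dot g_t\overline x$. Substituting into the impact dynamics gives
\begin{equation*}
I_t=\lambda\int_0^t e^{-\beta(t-s)}\sum_j v_s^{*,j}\,ds =\lambda N\overline x\int_0^t e^{-\beta(t-s)}\dot g_s\,ds.
\end{equation*}
Using the explicit form of $g_t$ (which is a linear combination of $t$, $e^{z_1 t}$ and $e^{z_2 t}$ up to the common denominator $\psi$), the inner integral reduces via $\int_0^t e^{-\beta(t-s)}e^{z_k s}ds=(z_k+\beta)^{-1}(e^{z_k t}-e^{-\beta t})$, and the $e^{-\beta t}$ contributions are exactly what produces the constant $\rho_-$ (by construction of the constants in \cref{tab:constants}). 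Checking $I_0=0$ amounts to verifying the identity
\[\rho_-+\frac{1}{z_1+\beta}-\frac{\gamma_1}{\gamma_2}\frac{1}{z_2+\beta}=0,\]
which is a purely algebraic consequence of the defining relations among the $z_k,\gamma_k,\rho_{\pm}$ collected in \cref{tab:constants}.

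Next I would plug $I_t$ and $v_t^{*,i}=\dot f_t(x^i-\overline x)+\dot g_t\overline x$ into $\int_0^T I_{t-}v_t^{*,i}dt$. Because $I_t$ is proportional to $\overline x$, the impact cost is a linear combination of $\overline x^2$ and $(x^i-\overline x)\overline x$ only---no pure $(x^i-\overline x)^2$ term can appear. Similarly, $\tfrac{\varepsilon}{2}\int_0^T(v_t^{*,i})^2dt$ expands into three pieces according to the square of $\dot f_t(x^i-\overline x)+\dot g_t\overline x$, yielding the three terms in the stated formula, while $\varphi(X_T^{*,i})^2=\varphi(f_T(x^i-\overline x)+g_T\overline x)^2$ expands analogously. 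In each case one computes integrals of the form $\int_0^T e^{(z_j+z_k)t}dt$ and $\int_0^T e^{z_k t}dt$, and then groups the results against the denominators $\psi=\varepsilon(\rho_0+\beta\rho_-)+\lambda N(\rho_++\rho_-)+[\dots]\varphi$ and $\xi=\varepsilon z_3+[\dots]\varphi$, and the numerators $\mathfrak{p}=g_T\cdot\psi/\varphi$ and $z_3=\xi\cdot f_T$-type quantities. The constants $\mathfrak{h}_1,\dots,\mathfrak{h}_5$ are \emph{defined} in \cref{tab:constants} to absorb exactly the resulting exponential polynomials.

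The main obstacle is purely bookkeeping: matching the sums of terms of the form $(z_j+\beta)^{-1}(z_j+z_k)^{-1}(e^{(z_j+z_k)T}-1)$ against the constants $\mathfrak{h}_j$. To keep this organized I would (i) first compute $\dot f_t, \dot g_t$ and $I_t$ in a canonical basis $\{1, e^{z_1 t}, e^{z_2 t}, e^{z_3 t}, e^{-\beta t}\}$, noting that $z_3$ appears only in $f$ while $z_1,z_2,-\beta$ appear only in $g$ and $I$; (ii) exploit the orthogonality coming from $\sum_i(x^i-\overline x)=0$ to argue that cross terms between the $f$-direction and the $g$- and $I$-directions give rise only to coefficients of $(x^i-\overline x)\overline x$; (iii) finally verify the identifications with $\mathfrak{h}_1,\dots,\mathfrak{h}_5,\mathfrak{p}$ term by term. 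No conceptual step is at stake beyond the algebraic reductions already implicit in the definitions of the constants, and the result follows.
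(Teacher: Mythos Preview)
Your proposal is correct and follows essentially the same route as the paper: compute $I_t$ from the summed equilibrium trading rates (where the deviation terms cancel), differentiate $X^{*,i}$ to obtain $\dot X^{*,i}_t$, expand the products $I_t\dot X^{*,i}_t$ and $(\dot X^{*,i}_t)^2$ in the exponential basis, integrate term by term, and identify the results with the constants $\mathfrak{h}_1,\dots,\mathfrak{h}_5$ and $\mathfrak{p}$ from \cref{tab:constants}. The paper organizes the bookkeeping by introducing auxiliary functions $h^1_t,\dots,h^5_t$ whose integrals are exactly the $\mathfrak{h}_j$, but the substance of the computation is the same direct substitution you describe.
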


The functions $f$ and $g$ appearing in Theorem \ref{thm:equil.term.pen} are positive and decreasing. The interpretation is that an ``average'' trader with initial inventory $\overline{x}$ unwinds their position over time. Traders whose initial endowments differ from the average adjust their strategy by a decreasing function of this discrepancy. For example, in a market where $x^i\geq0$ for all~$i$ (and so, $\overline{x}\geq0$) traders with less initial inventory than the average (i.e., $x^i\leq \overline{x}$) liquidate their inventory at a slower rate (and may even purchase more inventory) since $\dot{f}_t(x^i-\overline{x})\geq 0$. This is reflected in Corollary \ref{cor:eq.cost.term.pen} where the impact cost is larger (or smaller) than that of the ``average'' player based on the discrepancy of the trader's initial inventory. Intuitively, having less inventory to liquidate leads to a lower impact cost.

An illustration of equilibrium strategies and costs when $N=3$ is provided in Figure \ref{fig:equil.conv.phi}. In that example two of the traders need to liquidate a positive position, while one trader starts with zero inventory and acts as a ``predator.'' We see that both the cost of the liquidators and the profit of the predator\footnote{Note the sign convention; here a negative cost is a profit.} are decreasing in the resilience $\beta$. The strategy of the predator is to trade in the same direction as the liquidators before unwinding near the terminal time. At maturity there is a curious phenomenon where the predator actually holds positive inventory. This is due to the role of a \emph{finite} terminal penalty $\varphi$ and the economic interpretation of marking to the \emph{unaffected} price. Namely, since the impact of the liquidators has depressed the market price, there is an incentive to bias towards a \emph{long} position. This is also reflected in the fact that the liquidators do not close out all of their initial positions.

\subsection{Liquidation Constraint}\label{sec:N.inst.cost.liq.constr}

To find the unique equilibrium associated with the objective $J_{A'}$ enforcing full liquidation at~$T$, we argue that it coincides with the limit of the equilibrium in Theorem \ref{thm:equil.term.pen} as the penalty on terminal inventory tends to infinity, $\varphi\uparrow\infty$.

\begin{lemma}\label{lem:conv.equilibria}
If the Nash equilibrium $\boldsymbol{X}^*$ from Theorem \ref{thm:equil.term.pen} converges in $H^{1}[0,T]^{\times N}$ as $\varphi\uparrow\infty$, then the limit is a Nash equilibrium for $J_{A'}$ and the equilibrium costs in Corollary~\ref{cor:eq.cost.term.pen} converge to the equilibrium costs for $J_{A'}$.
\end{lemma}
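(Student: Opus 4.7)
The plan is to frame the problem as a $\Gamma$-convergence of best-response functionals, using the assumed $H^1$ convergence to pass to the limit in each player's individual optimization and then to deduce the Nash property of the limit via the minimizer-stability theorem. Write $\boldsymbol X^{*,\varphi}$ for the equilibrium from Theorem~\ref{thm:equil.term.pen} with hypothesized $H^1$ limit $\boldsymbol X^{**}$, and let $\boldsymbol v^{*,\varphi}\to \boldsymbol v^{**}$ denote the corresponding $L^2$ convergence of trading rates. For each player $i$, define the best-response functionals on $L^2[0,T]$ by
\begin{align*}
F_\varphi(v) &:= \mathcal{J}_A^{\varphi}(v;\boldsymbol v^{*,-i,\varphi}),\\
F_\infty(v) &:= \mathcal{J}_{A'}(v;\boldsymbol v^{**,-i}).
\end{align*}
The Nash property of $\boldsymbol v^{*,\varphi}$ combined with the strict convexity from Lemma~\ref{lem:strict.convex} makes $v^{*,i,\varphi}$ the unique minimizer of $F_\varphi$.

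The heart of the argument is to verify that $F_\varphi\xrightarrow{\Gamma} F_\infty$ in the strong $L^2$ topology as $\varphi\uparrow\infty$. Decompose $F_\varphi(v)=G(v;\boldsymbol v^{*,-i,\varphi})+\tfrac{\varphi}{2}(X_T)^2$, where $G$ collects the impact and instantaneous pieces. The impact map $\boldsymbol v\mapsto I$ is a bounded linear operator $L^2\to L^2$, hence $G$ is jointly continuous in its arguments. For the $\Gamma$-liminf inequality, a sequence $v_\varphi\to v$ in $L^2$ satisfies $X_T^\varphi\to X_T$, so $\tfrac{\varphi}{2}(X_T^\varphi)^2\to+\infty$ when $X_T\neq 0$ and is nonnegative otherwise; combined with the continuity of $G$, this yields $\liminf F_\varphi(v_\varphi)\geq F_\infty(v)$ in both cases. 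For the $\Gamma$-limsup inequality, any admissible $v$ for $F_\infty$ has $X_T=0$, so the constant recovery sequence $v_\varphi\equiv v$ kills the penalty term for every $\varphi$, and $G$ converges by continuity.

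With $F_\varphi\xrightarrow{\Gamma} F_\infty$ in hand and $v^{*,i,\varphi}\to v^{**,i}$ in $L^2$ by hypothesis, the standard consequences of $\Gamma$-convergence collected in \cref{app:opt.theory} give that $v^{**,i}$ minimizes $F_\infty$ and $F_\varphi(v^{*,i,\varphi})\to F_\infty(v^{**,i})$. Applied to every $i$, the first conclusion says that $\boldsymbol v^{**}$ is a Nash equilibrium for $\mathcal{J}_{A'}$ in the class of deterministic strategies, and Lemma~\ref{lem:nash.eq.det} lifts this to an equilibrium among all admissible strategies. Finiteness of the limiting cost forces $X_T^{**,i}=0$, so the terminal penalty in the limit collapses to zero, and by subtraction $\tfrac{\varphi}{2}(X_T^{*,i,\varphi})^2\to 0$. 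The three summands in Corollary~\ref{cor:eq.cost.term.pen} thus converge individually to their $\mathcal{J}_{A'}$ counterparts: the impact and instantaneous pieces by $H^1$ continuity of $G$, and the terminal penalty by what we just showed.

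The main obstacle, in my view, is handling the joint $\varphi$-dependence of $F_\varphi$, which enters both through the penalty coefficient and implicitly through the competitors' rates $\boldsymbol v^{*,-i,\varphi}$. Each effect is individually benign, but their combination must be treated simultaneously in the liminf step: the explicit penalty provides the singular behavior responsible for enforcing the liquidation constraint $X_T=0$ in the limit, while the implicit dependence requires the bounded linearity of the impact operator to guarantee joint continuity of $G$. Once these observations are combined, the remainder of the proof is a clean application of the minimizer-stability consequence of $\Gamma$-convergence from \cref{app:opt.theory}.
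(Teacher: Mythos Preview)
Your proposal is correct and follows essentially the same approach as the paper: both establish $\Gamma$-convergence of the best-response functionals $\mathcal{J}_A(\cdot;\boldsymbol{v}^{*,-i,\varphi})\to\mathcal{J}_{A'}(\cdot;\boldsymbol{v}^{**,-i})$ by splitting into a continuously convergent impact-plus-instantaneous part and a penalty part that $\Gamma$-converges to the indicator, then invoke the minimizer-stability theorem from \cref{app:opt.theory}. The only cosmetic difference is that the paper packages the combination step via \cite[Proposition~6.20]{DalMaso.93}, whereas you verify the liminf and limsup inequalities by hand; your additional remarks on the separate convergence of the three cost components and the explicit appeal to Lemma~\ref{lem:nash.eq.det} are correct and slightly more detailed than the paper's treatment.
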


The proof in \cref{se:proof.of.conv.equilibria} is based on the $\Gamma$-convergence of $\mathcal{J}_{A}(\cdot;\boldsymbol{v}^{*,-i}(\varphi))$ to $\mathcal{J}_{A'}(\cdot;\tilde{\boldsymbol{v}}^{-i})$ as $\varphi\uparrow\infty$, where $\boldsymbol{v}^{*}(\varphi)$ denotes the equilibrium for finite $\varphi$ and $\tilde{\boldsymbol{v}}$ its limit. 

Using \cref{lem:conv.equilibria}, we can deduce a characterization of the equilibrium by passing to the limit in Theorem~\ref{thm:equil.term.pen}. As mentioned in the introduction, our result provides a closed-form solution to the game previously studied in \cite{Strehle.17}. See Table \ref{tab:constants} for the definitions of the simplifying constants.

\begin{theorem}\label{thm:equil.liq.constr}
	There is a unique Nash equilibrium for $J_{A'}$. Its equilibrium strategy profile $\boldsymbol{X}^*=(X^{*,1},\dots,X^{*,N})$ is 
	\begin{equation*}
		X^{*,i}_t= \mathfrak{f}_t (x^{i}-\overline{x})+\mathfrak{g}_t \overline{x}, \ \ \ t\in[0,T], \ \ \ i =1,\dots,N,
	\end{equation*}
	where
	\begin{equation*}
		\mathfrak{f}_t =1-\frac{\beta t+\frac{\lambda (e^{z_{3} t}-1)}{\varepsilon z_3 e^{z_3T}}}{\beta T+ \frac{\lambda(e^{z_3 T}-1)}{\varepsilon z_3e^{z_3 T}}} , \ \ \ 
		\mathfrak{g}_t =1-\frac{\beta \rho_{-}t+\frac{e^{z_1 t}-1}{z_1}-\frac{\gamma_1}{\gamma_2}\frac{e^{z_2 t} -1}{ z_2}}{\beta \rho_{-}T+\frac{e^{z_1 T}-1}{z_1} -\frac{\gamma_1}{\gamma_2}\frac{e^{z_2 T}-1}{ z_2}}.\\ 
	\end{equation*}
        Moreover, this is the $H^{1}[0,T]^{\times N}$ limit of the equilibrium strategy profile in Theorem \ref{thm:equil.term.pen} as $\varphi\uparrow\infty$. 
\end{theorem}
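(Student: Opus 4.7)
The plan is to invoke \cref{lem:conv.equilibria}: it suffices to verify that the equilibrium $\boldsymbol{X}^*(\varphi)$ from \cref{thm:equil.term.pen} converges in $H^1[0,T]^{\times N}$ as $\varphi \uparrow \infty$ and to identify the limit with the claimed closed form. Uniqueness of the resulting equilibrium is then automatic from \cref{prop:nash.eq.unique}.

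Concretely, I would start from the representation $X^{*,i}_t(\varphi) = f_t(\varphi)(x^i - \bar{x}) + g_t(\varphi)\bar{x}$ from \cref{thm:equil.term.pen} and factor $\varphi$ out of the numerator and denominator in each of $f_t$ and $g_t$. For $f_t$, dividing top and bottom by $\varphi$ converts the denominator $\varepsilon z_3 + [\beta T + \tfrac{\lambda(e^{z_3T}-1)}{\varepsilon z_3 e^{z_3T}}]\varphi$ into $\varepsilon z_3/\varphi + \beta T + \tfrac{\lambda(e^{z_3T}-1)}{\varepsilon z_3 e^{z_3T}}$, whose $\varphi\uparrow\infty$ limit is exactly the denominator of $\mathfrak{f}_t$; the numerator is unchanged. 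An entirely analogous manipulation handles $g_t \to \mathfrak{g}_t$, with the only additional term to kill being $(\varepsilon(\rho_0+\beta\rho_-)+\lambda N(\rho_++\rho_-))/\varphi$ in the denominator. Hence $f_t(\varphi) \to \mathfrak{f}_t$ and $g_t(\varphi) \to \mathfrak{g}_t$ pointwise on $[0,T]$.

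The upgrade to $H^1$ convergence is where one must be careful but should be routine. Both $f_t(\varphi)$ and $g_t(\varphi)$ are linear combinations (with $\varphi$-dependent scalar coefficients) of the fixed smooth functions $1$, $t$, $e^{z_1 t}$, $e^{z_2 t}$, $e^{z_3 t}$ on $[0,T]$. The coefficient maps $\varphi \mapsto (\text{coefficients of } f_t(\varphi), g_t(\varphi))$ are rational in $\varphi$ with denominators that stay bounded away from zero and grow linearly for large $\varphi$, so they converge as $\varphi\to\infty$ and their time derivatives converge in the same sense. Consequently $f_t(\varphi), \dot{f}_t(\varphi), g_t(\varphi), \dot{g}_t(\varphi)$ all converge uniformly in $t$ on $[0,T]$, which is stronger than $L^2$ convergence. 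This gives $X^{*,i}(\varphi) \to \mathfrak{f}(x^i-\bar{x}) + \mathfrak{g}\bar{x}$ in $H^1[0,T]$ for every $i$, hence the required $H^1[0,T]^{\times N}$ convergence.

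Applying \cref{lem:conv.equilibria} then shows the limiting profile is a Nash equilibrium for $J_{A'}$, and \cref{prop:nash.eq.unique} rules out any other equilibrium. The main (minor) obstacle I anticipate is bookkeeping: one must check that the simplifying constants from \cref{tab:constants} (in particular $z_1, z_2, z_3, \gamma_1, \gamma_2, \rho_0, \rho_\pm$) do not themselves depend on $\varphi$, so that the $\varphi\to\infty$ limits really do reduce to the neat expressions $\mathfrak{f}_t, \mathfrak{g}_t$ displayed in the statement; once this is confirmed, the limits can be read off directly. No new variational argument is needed, since all the heavy analytic lifting is already packaged in \cref{thm:equil.term.pen} and \cref{lem:conv.equilibria}.
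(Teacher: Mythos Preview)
Your proposal is correct and matches the paper's own proof essentially step for step: identify the pointwise limits $\mathfrak{f}_t,\mathfrak{g}_t$ by dividing numerator and denominator by $\varphi$, observe that the $\varphi$-dependence sits only in scalar coefficients multiplying fixed smooth functions (the paper writes this as $f_t=1-\mathsf{F}_t\varphi/(\varepsilon z_3+\varphi\Xi)$, etc.) so that uniform convergence of the functions and their derivatives on $[0,T]$ follows, and then invoke \cref{lem:conv.equilibria} for existence and \cref{prop:nash.eq.unique} for uniqueness. Your remark that the constants in \cref{tab:constants} are $\varphi$-free is exactly the bookkeeping the paper relies on.
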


An inspection of the proof in Appendix \ref{app:pf.equil.liq.constr} actually gives the stronger result that the equilibria of Theorem \ref{thm:equil.term.pen} and their derivatives \textit{of all orders} converge uniformly on $[0,T]$ to their counterparts in Theorem \ref{thm:equil.liq.constr} as $\varphi\uparrow\infty$. As in Corollary \ref{cor:eq.cost.term.pen}, we obtain the equilibrium cost in terms of two additional constants $\Psi$ and $\Xi$ whose form is reported in Table \ref{tab:constants}.

\begin{figure}[!h]
  \centering
  \includegraphics[scale=0.21]{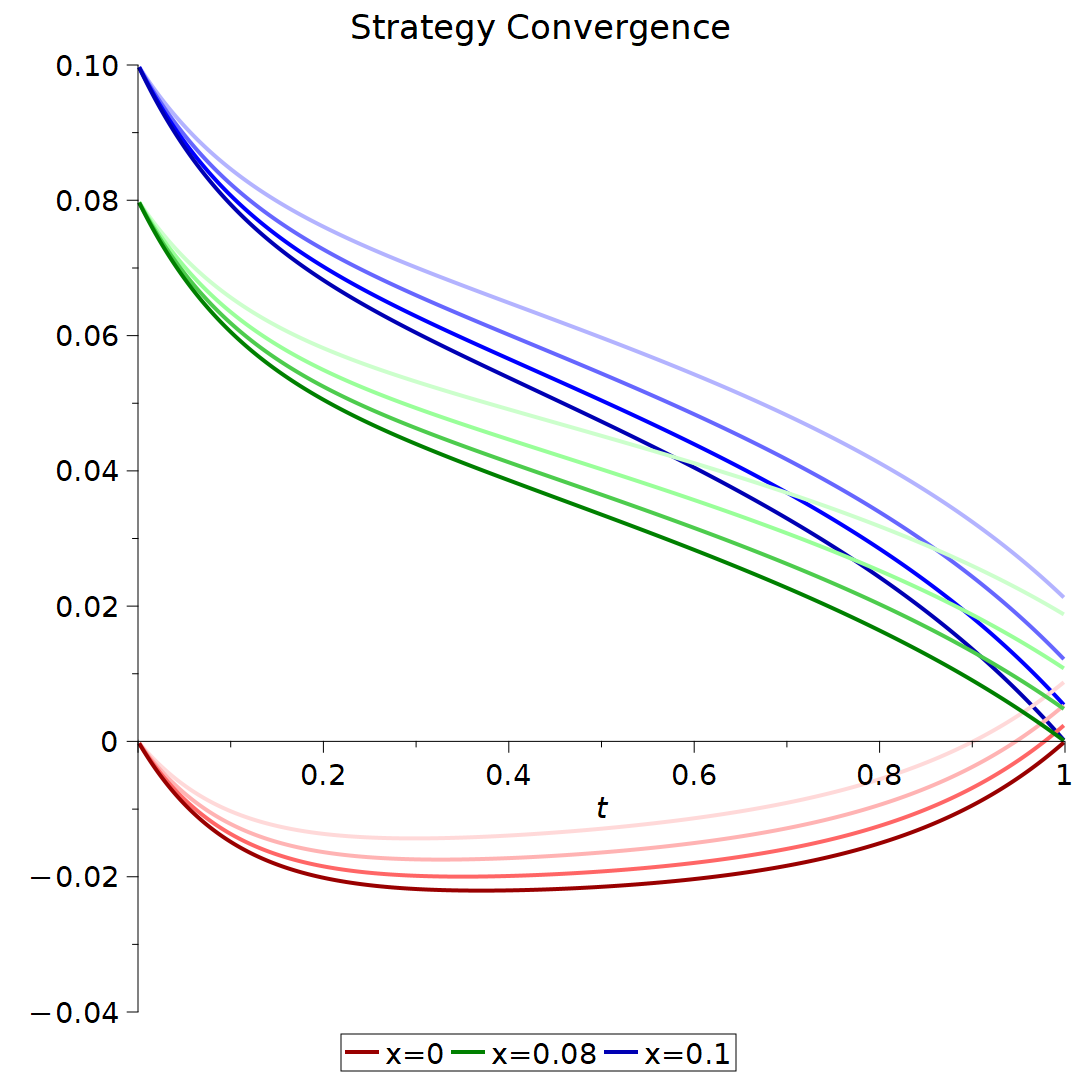}
  \hspace{0.2cm}
  \includegraphics[scale=0.21]{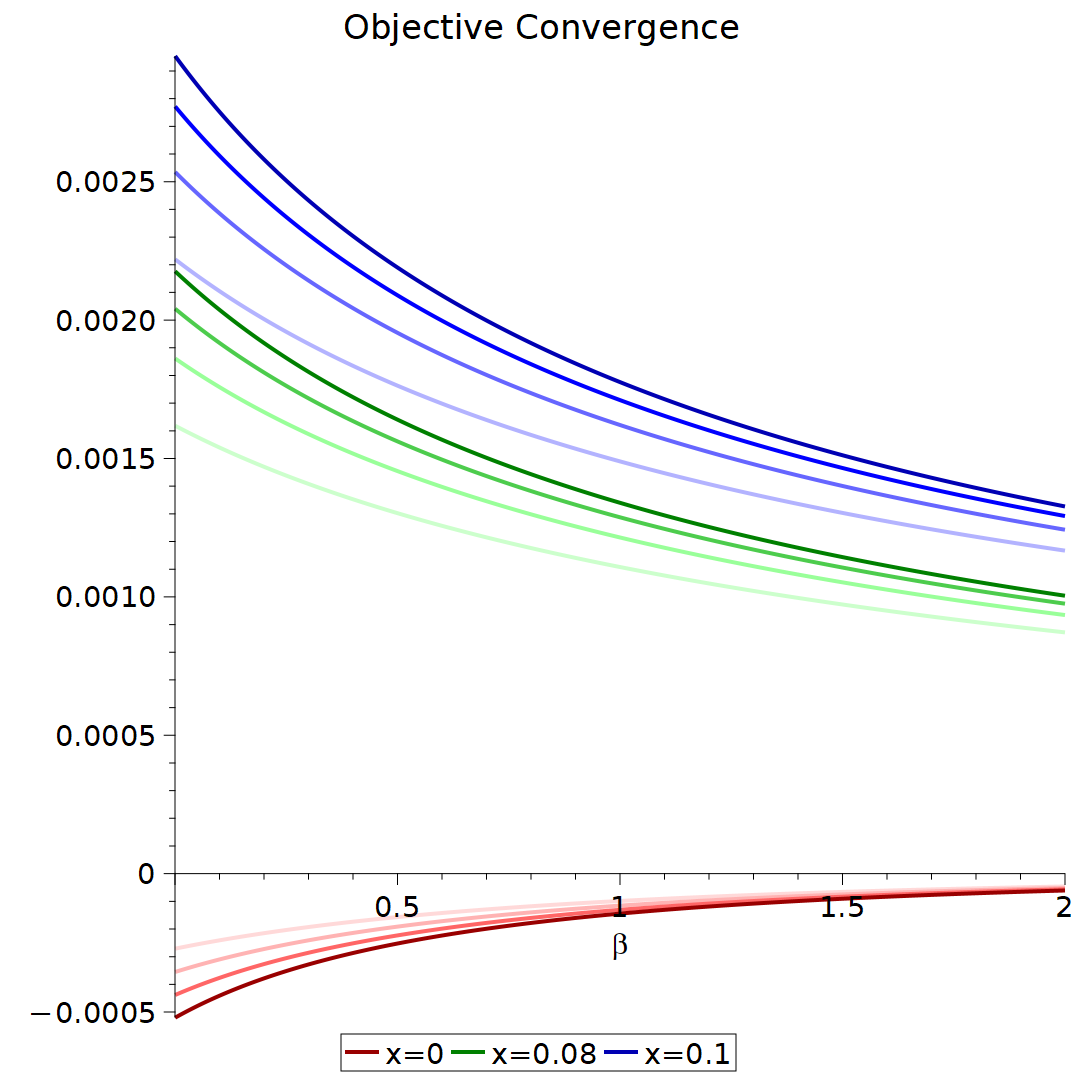}
  \caption{Convergence of the equilibrium in Theorem \ref{thm:equil.term.pen} to that of Theorem \ref{thm:equil.liq.constr} as $\varphi\uparrow \infty$ when $\lambda=0.2$, $\varepsilon=0.05$, $T=1$, and $N=3$. The three colors represent the three agents and range from light to dark as $\varphi$ increases ($\varphi=1,2,5,\infty$). The darkest colors are used for the limiting values corresponding to the liquidation constraint. (Left Panel) Equilibrium strategies $X^{*,i}_t$ as a function of $t$ for $\beta=1$. (Right Panel) Equilibrium costs from Corollaries~\ref{cor:eq.cost.term.pen} and \ref{cor:eq.cost.liq.constr} as a function of $\beta$.}
  \label{fig:equil.conv.phi}
\end{figure}

\begin{corollary}\label{cor:eq.cost.liq.constr}
The equilibrium cost for the traders can be written as
    \begin{align*}
        J_{A'}(X^{*,i};\boldsymbol{X}^{*,-i})= \int_0^TI_{t-}dX^{*,i}_t+\frac{\varepsilon}{2}\int_0^T(\dot{X}^{*,i}_t)^2dt, \ \ \ i=1,\dots,N,
    \end{align*}
in terms of the equilibrium impact process\footnote{As in Corollary \ref{cor:eq.cost.term.pen}, it can be verified using Table \ref{tab:constants} that $I_0=0$.}
\begin{equation*}I_t=\frac{-N\lambda\left[\rho_{-}+\frac{1}{z_1+\beta}e^{z_1t} -\frac{\gamma_1}{ \gamma_2}\frac{1}{z_2+\beta}e^{z_2t}\right]}{\beta\rho_{-}T+\frac{e^{z_1 T}-1}{z_1}-\frac{\gamma_1}{\gamma_2}\frac{e^{z_2 T}-1}{z_2}}\overline{x}, \ \ \ t\in[0,T].
\end{equation*}
Stated explicitly, we obtain the equilibrium impact cost
\begin{align*}
    \int_0^TI_{t-}dX^{*,i}_t &=\lambda N\left[\frac{\mathfrak{h}_1}{\Psi^2}\overline{x}^2+\frac{\mathfrak{h}_2}{\Xi\Psi}(x^{i}-\overline{x})\overline{x}\right],
\end{align*}
and equilibrium instantaneous cost
\begin{align*}
    \frac{\varepsilon}{2}\int_0^T(\dot{X}^{*,i}_t)^2dt&=\varepsilon\left[\frac{\mathfrak{h}_3}{2 \Psi^2}\overline{x}^2+\frac{\mathfrak{h}_4}{2 \Xi^2}(x^{i}-\overline{x})^2+\frac{\mathfrak{h}_5}{ \Xi \Psi}\overline{x}(x^{i}-\overline{x})\right].
\end{align*}
Moreover, this is the limit of the cost in Corollary \ref{cor:eq.cost.term.pen} as $\varphi\uparrow\infty$.
\end{corollary}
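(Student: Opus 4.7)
The plan is to deduce the corollary by $\varphi\uparrow\infty$ limit passage from Corollary \ref{cor:eq.cost.term.pen}. The key enabler is Theorem \ref{thm:equil.liq.constr}, which asserts $H^1[0,T]^{\times N}$ convergence of equilibria; this verifies the hypothesis of Lemma \ref{lem:conv.equilibria}, immediately yielding the ``moreover'' clause: the equilibrium costs from Corollary \ref{cor:eq.cost.term.pen} converge to the equilibrium costs for $J_{A'}$. It then remains to identify the limits in the advertised closed form.

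First, the decomposition of $J_{A'}(X^{*,i};\boldsymbol{X}^{*,-i})$ as impact cost plus instantaneous cost follows from the representation \eqref{eqn:obj.impact.rep} together with the liquidation constraint $X_T^{*,i}=0$ (visible from $\mathfrak{f}_T=\mathfrak{g}_T=0$ in Theorem \ref{thm:equil.liq.constr}), so the constraint term $\chi_{\{X_T^{i}\not=0\}}$ in $C_{A'}(X^{*,i})$ contributes zero. The explicit formula for $I_t$ then comes from substituting the equilibrium strategies of Theorem \ref{thm:equil.liq.constr} into $I_t=\lambda\int_0^t e^{-\beta(t-s)}\sum_j \dot{X}^{*,j}_s\,ds$; since $\sum_j (x^j-\overline{x})=0$, only the common $\mathfrak{g}$-component survives aggregation, giving $I_t = \lambda N \overline{x}\int_0^t e^{-\beta(t-s)}\dot{\mathfrak{g}}_s\,ds$. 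Differentiating $\mathfrak{g}$ produces a linear combination of constants and the exponentials $e^{z_1 s}, e^{z_2 s}$; convolving against $e^{-\beta(t-s)}$ yields the claimed expression in terms of $\rho_{-}$, $z_j+\beta$, and $\gamma_1/\gamma_2$.

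For the explicit cost decomposition I would pass to the $\varphi\to\infty$ limit in Corollary \ref{cor:eq.cost.term.pen}. The constants $\Psi, \Xi$ from Table \ref{tab:constants} are designed precisely as the rescaled limits of $\psi, \xi$, in the sense that $\varepsilon\psi/\varphi\to\Psi$ and $\varepsilon\xi/\varphi\to\Xi$: each denominator appearing in $f_t, g_t$ of Theorem \ref{thm:equil.term.pen} is affine in $\varphi$ with positive leading coefficient identifiable as $\Psi/\varepsilon$, respectively $\Xi/\varepsilon$. With this scaling the prefactor $\varphi^2/\varepsilon^2$ on the impact cost in Corollary \ref{cor:eq.cost.term.pen} cancels, leaving $\lambda N\,\mathfrak{h}_1/\Psi^2$; analogous cancellations give the remaining impact and instantaneous cost terms. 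The terminal penalty term $\varphi(X_T^{*,i})^2$ vanishes at rate $1/\varphi$, consistent with its absence here.

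The main obstacle is the bookkeeping: one must verify that $\psi, \xi$ are asymptotically linear in $\varphi$ with the correct leading coefficients, while the $\mathfrak{h}_j$ are $\varphi$-independent, so that all cancellations go through cleanly. This is a tedious but routine inspection of Table \ref{tab:constants}. As an alternative that sidesteps the scaling analysis altogether, one can compute the two integrals $\int_0^T I_t\dot{X}^{*,i}_t\,dt$ and $\tfrac{\varepsilon}{2}\int_0^T(\dot{X}^{*,i}_t)^2\,dt$ directly from the closed-form strategies $\mathfrak{f}_t, \mathfrak{g}_t$ and the formula for $I_t$ already established; the bilinear decomposition in $(x^i-\overline{x})$ and $\overline{x}$ then emerges naturally, with the coefficients collected into $\Psi, \Xi$ and $\mathfrak{h}_j$ by construction, and consistency with the Corollary \ref{cor:eq.cost.term.pen} limit is automatic from Lemma \ref{lem:conv.equilibria}.
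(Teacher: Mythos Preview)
Your proposal is correct and follows essentially the same approach as the paper: leverage Lemma~\ref{lem:conv.equilibria} (whose hypothesis is verified by Theorem~\ref{thm:equil.liq.constr}) to justify passing to the $\varphi\uparrow\infty$ limit in Corollary~\ref{cor:eq.cost.term.pen}, then observe from Table~\ref{tab:constants} that the $\mathfrak{h}_j$ are $\varphi$-independent while $\varphi^{-1}\psi\to\varepsilon^{-1}\Psi$ and $\varphi^{-1}\xi\to\varepsilon^{-1}\Xi$, so the prefactors cancel cleanly and the terminal penalty vanishes. Your additional remarks on deriving $I_t$ directly and on the alternative direct-integration route are consistent with the paper, which also notes that one could substitute the equilibrium strategies of Theorem~\ref{thm:equil.liq.constr} directly.
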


Figure \ref{fig:equil.conv.phi} illustrates the convergence of the equilibrium strategies and costs.

\FloatBarrier

\section{Equilibria with Block Costs}\label{sec:N.block.cost}

This section describes the equilibria for the cost $C_B(\cdot)$ in \eqref{eqn:CB}, which has no instantaneous cost on the trading rate but has additional costs on block orders. We shall see that an equilibrium exists only for very particular parameter values. The deeper meaning of this seemingly peculiar cost structure will be addressed in Section~\ref{sec:id.limit}.

\begin{remark}\label{rem:intuition}
Let us begin with some intuition. In the unregularized Obizhaeva--Wang model it is natural to expect that any potential equilibrium strategies would involve block trades, since the classic single-agent solution exhibits block trades at the initial and terminal times. In the $N$-player game, Proposition~\ref{prop:no.interior.jumps} below shows that no interior block trades can occur in a Nash equilibrium: a jump at a time $t\in(0,T)$ can be exploited by other traders with a round-trip perturbation concentrated around~$t$.

As in the single-agent solution, the endpoints $t=0$ and $t=T$ are different. At $t=0$ there is no past, and an initial block trade can jump-start resilience; at $t=T$ there is no future, so a terminal block trade avoids imposing transient impact on subsequent trading. Even for block trades at the initial and terminal times, Nash equilibrium imposes stringent constraints. In the naive Obizhaeva--Wang model, such block trades can still be exploited by a round-trip perturbation (now also involving a block trade), leading to non-existence of equilibrium and suggesting an additional penalty on block trades as a remedy. A more systematic motivation for the penalty will be given in Section~\ref{sec:id.limit}, where we show that the additional block costs are exactly the limits of instantaneous trading costs in the vanishing-regularization limit. 

For simplicity, let us focus on $N=2$ players with symmetric initial conditions. Intuitively, the additional penalty needs to satisfy two competing demands: It should be small enough so that replacing the block trade by fast continuous trading does not improve costs. But it should be large enough so that exploiting by a round-trip perturbation with an even larger block trade is no longer optimal for the opponent, as that would lead to an escalating feedback loop and hence non-existence.

It turns out that adding quadratic costs with the specific coefficients $\vartheta_0=\lambda(N-1)/2$ and $\vartheta_T=\lambda/2$ satisfies these demands and yields existence of equilibrium, whereas other values will not work in general. We observe that $\vartheta_0$ depends on $N$ whereas $\vartheta_T$ does not. In our equilibrium formulas, the initial jump arises purely through the mean component~$\overline{x}$. Stabilizing this common interaction at $t=0$ depends on how many opponents trade against the same instantaneous price formation, hence the $N$-dependence of $\vartheta_0$. By contrast, the terminal block trade clears the idiosyncratic deviation component $x^i-\overline{x}$. Since these deviations sum to zero, the relevant terminal condition is local and yields the single-agent coefficient~$\lambda/2$, independent of $N$.
\end{remark}

As in Section \ref{sec:N.inst.cost} we first restrict our search for an equilibrium to the class of deterministic strategies. However, since strategies with jumps and singular continuous components are no longer ruled out by the cost, we have a much larger space to search over. The following preliminary step narrows down the type of jumps that can arise in equilibrium. First, block trades do not occur on $(0,T)$. Second, if the additional costs $\vartheta_0, \vartheta_T$ are positive, the block trades at $t=0$ and $t=T$ are determined by the block trades of the other players.

\begin{proposition}
    \label{prop:no.interior.jumps}
    If $\boldsymbol{X}^{*}$ is a Nash equilibrium then $\boldsymbol{X}^{*}$ has no interior jumps,
    \[\Delta X_{t}^{*,i} = 0,  \ \ \ i=1,\dots,N, \ \ \ \forall t\in(0,T), \]
    and its initial and terminal jumps satisfy
    \[\vartheta_0\Delta X_{0}^{*,i}=\frac{\lambda}{2}\sum_{j\not=i}\Delta X_{0}^{*,j}, \ \ \ \vartheta_T\Delta X_{T}^{*,i}=-\frac{\lambda}{2}\sum_{j\not=i}\Delta X_{T}^{*,j},  \ \ \ i=1,\dots, N.\]
    In particular, $\vartheta_0 = 0$ implies $\Delta X_0^{*,i}=0$ for all $i$, and $\vartheta_T = 0$ implies $\Delta X_T^{*,i}=0$ for all $i$.
\end{proposition}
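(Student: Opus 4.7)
The plan is to extract the conclusion from the first-order optimality condition that $X^{*,i}$ must satisfy in a Nash equilibrium. By strict convexity of $J_B(\,\cdot\,;\boldsymbol{X}^{*,-i})$ (Lemma~\ref{lem:strict.convex}), $X^{*,i}$ is the unique minimizer of the objective over the affine set of admissible strategies with $X^i_T=0$. The tangent directions at an admissible path are precisely the finite signed Borel measures $h$ on $[0,T]$ with $h([0,T])=0$, since these preserve the initial condition $X^i_{0-}=x^i$ and the liquidation constraint. Since $X^{*,i}+\tau h$ is admissible for all $\tau\in\mathbb{R}$, the Gateaux derivative must vanish:
\[DJ_B(X^{*,i};h)=0 \quad \text{for every finite signed Borel measure $h$ on $[0,T]$ with $h([0,T])=0$.}\]

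Using the quadratic representation~\eqref{eqn:obj.schied.rep}, I would compute $DJ_B(X^{*,i};h) = \int_0^T \eta^{(i)}(t)\,dh_t$, where, writing $X^{-i}:=\sum_{j\neq i}X^{*,j}$,
\[\eta^{(i)}(t) := \lambda\left[\int_0^T e^{-\beta|t-s|}\, dX^{*,i}_s + \int_0^{t-} e^{-\beta(t-s)}\, dX^{-i}_s + \tfrac{1}{2}\Delta X^{-i}_t\right] + \vartheta_t\,\Delta X^{*,i}_t.\]
Feeding in $h = \delta_{t_0}-\delta_{t_1}$ for arbitrary $t_0,t_1\in[0,T]$ forces $\eta^{(i)}(t_0)=\eta^{(i)}(t_1)$, so $\eta^{(i)}$ is identically constant on $[0,T]$. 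The first integral is Lipschitz in $t$; the second is left-continuous, jumping upward by $\Delta X^{-i}_{t_0}$ immediately past any jump time $t_0$ of $X^{-i}$; and the two remaining atomic terms vanish off the jump sets of $X^{-i}$ and $X^{*,i}$ respectively. For a candidate jump time $t_0\in(0,T)$, approaching $t_0$ from either side through sequences of non-jump times yields
\begin{align*}
\eta^{(i)}(t_0^-) &= \lambda(A+B), \\
\eta^{(i)}(t_0) &= \lambda(A+B+\tfrac{1}{2}\Delta X^{-i}_{t_0}) + \vartheta_{t_0}\Delta X^{*,i}_{t_0}, \\
\eta^{(i)}(t_0^+) &= \lambda(A+B+\Delta X^{-i}_{t_0}),
\end{align*}
where $A,B$ are the common continuous-part values of the two integrals at $t_0$. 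Since all three expressions must equal the common constant, one obtains the pair $\vartheta_{t_0}\Delta X^{*,i}_{t_0}=\pm\tfrac{\lambda}{2}\sum_{j\neq i}\Delta X^{*,j}_{t_0}$. Adding and subtracting these two identities gives $\vartheta_{t_0}\Delta X^{*,i}_{t_0}=0$ and $\sum_{j\neq i}\Delta X^{*,j}_{t_0}=0$ for every $i$; the latter, summed and solved over $i$, forces (for $N\geq 2$) $\Delta X^{*,i}_{t_0}=0$ for every $i$, proving the interior claim.

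At the boundaries only one of the two comparisons is available: $t_0=0$ admits only $\eta^{(i)}(0)=\eta^{(i)}(0^+)$, which yields $\vartheta_0\Delta X^{*,i}_0 = \tfrac{\lambda}{2}\sum_{j\neq i}\Delta X^{*,j}_0$, while $t_0=T$ admits only $\eta^{(i)}(T^-)=\eta^{(i)}(T)$, yielding $\vartheta_T\Delta X^{*,i}_T = -\tfrac{\lambda}{2}\sum_{j\neq i}\Delta X^{*,j}_T$. The ``in particular'' statements then follow from the same linear-algebraic reduction as in the interior step applied to the relation $\sum_{j\neq i}\Delta X^{*,j}_0=0$ (resp.\ at $T$) that results when the corresponding $\vartheta$ vanishes. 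The main technical obstacle is the careful book-keeping of the three limit values of $\eta^{(i)}$ near a jump time: the asymmetry between $\eta^{(i)}(t_0^-)$ and $\eta^{(i)}(t_0^+)$ is produced solely by the strict upper endpoint $t-$ in the inner integral of~\eqref{eqn:obj.schied.rep}, which makes the jump of $X^{-i}$ at $t_0$ visible only after $t_0$, and it is precisely this asymmetry that generates the two signed equations and delivers $\Delta X^{*,i}_{t_0}=0$ in the interior. Justifying that delta-measure directions are admissible Gateaux variations is routine given the quadratic form of the objective.
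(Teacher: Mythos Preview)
Your argument captures the paper's core idea exactly: perturb by round-trip point masses $h=\delta_{t_0}-\delta_{t_1}$, deduce that the ``adjoint'' expression $\eta^{(i)}$ is constant, and compare its left/right limits at a candidate jump time to extract the two signed relations $\vartheta_{t_0}\Delta X^{*,i}_{t_0}=\pm\tfrac{\lambda}{2}\sum_{j\neq i}\Delta X^{*,j}_{t_0}$. The paper proceeds identically (its Lemma with $\boldsymbol{\Upsilon}$ is precisely your ``$\eta^{(i)}$ is constant'', obtained from the same round-trip perturbations), and then uses the same linear-algebra step to conclude.

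There is, however, a genuine gap in generality. The proposition is stated for arbitrary (stochastic) Nash equilibria, but your perturbations $h=\delta_{t_0}-\delta_{t_1}$ are deterministic. Since $J_B$ involves an expectation, the first-order condition $DJ_B(X^{*,i};h)=0$ only yields $\mathbb{E}[\eta^{(i)}(t_0)]=\mathbb{E}[\eta^{(i)}(t_1)]$, hence only
\[
\mathbb{E}\Big[\vartheta_{t_0}\Delta X^{*,i}_{t_0}\pm\tfrac{\lambda}{2}\sum_{j\neq i}\Delta X^{*,j}_{t_0}\Big]=0,
\]
which does not force the pathwise conclusions $\Delta X^{*,i}_{t_0}=0$ a.s. The paper closes this gap by enlarging the perturbation class to $\mathds{1}_A(\mathds{1}_{\{\cdot\geq\tau\}}-\mathds{1}_{\{\cdot\geq\sigma\}})$ for predictable times $\tau\leq\sigma$ and $A\in\mathcal{F}_{\tau-}$; this upgrades ``$\mathbb{E}[\eta^{(i)}(\cdot)]$ constant'' to ``$\mathbb{E}[\eta^{(i)}(\sigma)\mid\mathcal{F}_{\tau-}]$ constant in $\sigma$'', and combined with predictability of $\boldsymbol{X}^*$ (so that the jump quantities are $\mathcal{F}_{\tau-}$-measurable) yields the almost-sure relations. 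Your approximation by non-jump times then has to be done with predictable stopping times as well, which is the content of the paper's auxiliary approximation lemma. If you intended only the deterministic case (which, by Lemma~\ref{lem:nash.eq.det}, is enough for the later existence proofs), your argument is complete; for the proposition as stated, you need these stochastic refinements.
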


The proof in Appendix~\ref{app:N.block.cost} uses necessary conditions for optimality that arise from perturbing a reference strategy by a round trip jump trade. A somewhat technical manipulation of these conditions leads to a relationship between the jumps that must hold when a representative trader acts optimally.

 In view of \cref{prop:no.interior.jumps}, we would like to limit our search to strategies that only jump at the beginning and end of the trading period. In addition, we want to restrict ourselves to absolutely continuous trading on $(0,T)$. The next lemma justifies this reduction.

\begin{lemma}\label{lem:nash.eq.a.c.}
    A Nash equilibrium for $J_B$ in the class of deterministic admissible strategies that are absolutely continuous on $(0,T)$ is a Nash equilibrium (in the class of admissible strategies).
\end{lemma}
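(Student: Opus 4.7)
The plan is to first show that $\boldsymbol{X}^*$ is a Nash equilibrium in the full class of deterministic admissible strategies, and then invoke \cref{lem:nash.eq.det} to extend to all admissible strategies. So, fixing a player $i$ and a deterministic admissible deviation $Z$, the goal is to verify $J_B(Z;\boldsymbol{X}^{*,-i})\ge J_B(X^{*,i};\boldsymbol{X}^{*,-i})$. I will construct, for every $\eta>0$, a deterministic admissible $\tilde Z$ absolutely continuous on $(0,T)$ with $J_B(\tilde Z;\boldsymbol{X}^{*,-i}) \le J_B(Z;\boldsymbol{X}^{*,-i})+\eta$. The Nash property of $\boldsymbol{X}^*$ in the restricted class, applied to $\tilde Z$, then yields the conclusion up to $\eta$, and letting $\eta\downarrow 0$ finishes.

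For the construction, I decompose the signed increment measure $dZ$ as $\Delta Z_0\,\delta_{\{0\}}+\Delta Z_T\,\delta_{\{T\}}+\mu$, where $\mu$ is supported on $(0,T)$. I keep the two boundary atoms intact so that $\tilde Z$ has the same initial and terminal jumps as $Z$; in particular $\tilde Z_T=Z_T=0$ and admissibility is preserved. I replace $\mu$ by its mollification $\mu*\psi_\delta$ with a symmetric, compactly supported mollifier $\psi_\delta$, together with a small truncation/rescaling near the endpoints so that the mollified measure is supported in $(0,T)$, is absolutely continuous, and has the same total mass as $\mu$. The candidate $\tilde Z=\tilde Z_\delta$ is then the c\`adl\`ag inventory path with $\tilde Z_{0-}=x^i$ and increment measure $\Delta Z_0\,\delta_{\{0\}}+\Delta Z_T\,\delta_{\{T\}}+\mu*\psi_\delta$; by construction it is deterministic, admissible, and absolutely continuous on $(0,T)$.

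It remains to show $J_B(\tilde Z_\delta;\boldsymbol{X}^{*,-i})\to J_B(Z;\boldsymbol{X}^{*,-i})-\tfrac{1}{2}\sum_{t\in(0,T)}\vartheta_t(\Delta Z_t)^2$ as $\delta\downarrow 0$, using the representation \eqref{eqn:obj.schied.rep}. Three observations drive this. First, the self-impact quadratic form $\tfrac{\lambda}{2}\iint e^{-\beta|t-s|}\,d\nu(s)\,d\nu(t)$ is continuous in $\nu$ under weak-$*$ convergence of signed measures with uniformly bounded total variation, since the kernel is bounded and continuous; thus the self-impact of $\tilde Z_\delta$ converges to that of $Z$. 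Second, the cross-impact term $\iint_{s<t}e^{-\beta(t-s)}\sum_{j\ne i}dX_s^{*,j}dZ_t$ is, because $\boldsymbol{X}^{*,-i}$ is absolutely continuous on $(0,T)$ with smooth densities, a continuous linear functional of the interior measure with respect to weak-$*$ convergence and is unchanged at the two boundary atoms, so it passes to the limit. Third, the cross-jump sum $\sum_{j\ne i}\sum_t\Delta X_t^{*,j}\Delta Z_t$ vanishes on $(0,T)$ for both $Z$ and $\tilde Z_\delta$ since $\boldsymbol{X}^{*,-i}$ has no interior jumps, and agrees at $\{0,T\}$ because those atoms were kept. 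Hence the only strictly decreasing contribution is the block cost $\tfrac{1}{2}\sum_{t\in(0,T)}\vartheta_t(\Delta Z_t)^2\ge 0$ on $Z$'s interior jumps, which is eliminated in $\tilde Z_\delta$.

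The main obstacle will be the technical verification of weak-$*$ continuity of the self-impact quadratic form when $\mu$ has a nontrivial singular continuous component or an infinite accumulation of interior jumps, together with the bookkeeping needed for the mollification near the endpoints so that admissibility (in particular bounded total variation and the liquidation constraint) is preserved exactly. Once these approximation estimates are in hand, the Nash property of $\boldsymbol{X}^*$ in the restricted class applied to $\tilde Z_\delta$, combined with the cost bound above, gives Nash in the deterministic class, and \cref{lem:nash.eq.det} extends to the full admissible class.
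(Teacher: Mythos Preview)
Your proposal is correct and follows essentially the same strategy as the paper: approximate $Z$ on $(0,T)$ by an absolutely continuous path while preserving the boundary jumps at $0$ and $T$, show that the objective converges (with the interior block-cost term dropped), apply the Nash property in the restricted class, and finish via \cref{lem:nash.eq.det}. The paper carries this out with Bernstein polynomials (exploiting their endpoint-matching and TV-nonincreasing properties) and handles the cross-impact term by integration by parts, whereas you use mollification and weak-$*$ continuity of the quadratic form; both implementations work, and your identification of the endpoint bookkeeping as the main technical nuisance is accurate.
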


The proof in Appendix \ref{app:N.block.cost} proceeds through an approximation argument using Bernstein polynomials. While the proof is fairly technical, the basic idea is that if an agent is incentivized to deviate using a general strategy $Z$, deviating to a smoothed version of $Z$ still reduces the execution cost.

We now lean on Proposition \ref{prop:no.interior.jumps} and Lemma \ref{lem:nash.eq.a.c.} to set up our deterministic problem. To control the trading speed on $(0,T)$, we continue to work with functions $v^{i}\in L^2[0,T]$. We also parametrize the initial and terminal jumps using constants, $\Delta X_0^i  =: a^i\in\mathbb{R}$ and $\Delta X_T^i=:b^i\in \mathbb{R}$. If we let $\theta_a := \vartheta_0$ and $\theta_b :=\vartheta_T$ it is not hard to verify via \eqref{eqn:obj.impact.rep} that the objective for a representative trader $i$ can be recast as minimizing
\begin{align}
  &\cJ_B(a^{i},v^{i};\boldsymbol{v}^{-i}) =  \frac{1}{2} I_0 a^{i} + \int_0^T I_t v^{i}_t \,dt + \frac{1}{2}(I_{T-} + I_T) b^{i}+ \frac{\theta_a}{2}(a^{i})^2+\frac{\theta_b}{2}(b^{i})^2,\nonumber
\end{align} 
where the impact process $I$ is given by
\begin{equation*}
    I_t=\begin{cases}
    \lambda e^{-\beta t}\sum_{j=1}^Na^{j}+\int_0^t \lambda e^{-\beta(t-s)}\sum_{j=1}^Nv^{j}_sds, & t\in[0,T),\\
    I_{T-}+\lambda \sum_{j=1}^Nb^{j}, & t=T
    \end{cases}
\end{equation*}
for $I_{0-}=0$, and the inventories satisfy
\begin{equation*} X^{i}_t=\begin{cases}
    x^{i}+a^{i}+\int_0^tv^{i}_sds, & t\in[0,T),\\
    0, & t=T,
\end{cases}
\end{equation*}
for $X_{0-}^i=x^i$. In particular, this implies that $b^i = -X_{T-}^i$.
In view of the liquidation constraint, $b^i$ is entirely determined by the initial jump $a^i$ and the trading speed $v^i$, hence our minimization is over $\mathbb{R}\times L^2[0,T]$. This retains a Hilbert space structure so, as in Section~\ref{sec:N.inst.cost}, we can apply variational arguments to solve the game. The following lemma shows that the deterministic equilibrium (if it exists) is similarly characterized by a $2N+1$ dimensional system of linear homogeneous ordinary differential equations. However, this time the ODE is written in terms of $\boldsymbol{X}$ (rather than $\boldsymbol{v}$) and the imposition of block costs leads to additional free boundary conditions that must also be pinned down as part of the solution.

\begin{lemma}\label{lem:block.cost.ODE}
The strategy profile $\boldsymbol{X}$ defines a Nash equilibrium for $\mathcal{J}_B$ if and only if it forms, along with $I$ and auxiliary processes $Y^{1},\dots, Y^{N}$, a solution\footnote{More precisely (since $\boldsymbol{X}$ and $I$ may have jumps at $0$ and $T$), $\boldsymbol{X}$ and $I$ satisfy the ODE on $[0,T)$.} to the ODE system 
    \begin{align*}
    \dot{I}_t&=\frac{\beta}{N-1}\left[I_t-\sum_{j=1}^NY^{j}_t\right],\\
    \dot{Y}_t^{i}&=-\frac{\beta}{N-1}\left[I_t-\sum_{j=1}^NY^{j}_t\right], \ \ \ i=1,\dots,N,\\
    \dot{X}_t^{i}&=\frac{\beta}{\lambda(N-1)}\left[I_t+(N-1)Y^{i}_t-\sum_{j=1}^NY^{j}_t\right], \ \ \ i=1,\dots,N,
\end{align*}
subject to the initial and terminal conditions
\begin{align*}
    I_0&=\lambda\sum_{i=1}^Na^{i}, \ \ \ X_0^{i}=x^{i}+a^{i}, \ \ \ Y_T^{i}=\lambda b^{i}, \ \ \ i=1,\dots,N,
\end{align*}
where 
\begin{equation*}\theta_a a^{i} =\frac{\lambda}{2}\sum_{j\not=i}a^{j}, \ \ \ \theta_b b^{i}=-\frac{\lambda}{2}\sum_{j\not=i}b^{j}, \ \ \ \text{and} \ \ \  b^{i}=-X^{i}_{T-}, \  \ \ \ i=1,\dots,N.
\end{equation*}

\end{lemma}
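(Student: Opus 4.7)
The plan is to read off the characterization from the first-order conditions in the Hilbert space $\R \times L^2[0,T]$ afforded by the preceding reductions. By \cref{lem:nash.eq.det,lem:nash.eq.a.c.} I may assume the candidate equilibrium is deterministic and absolutely continuous on $(0,T)$, and by \cref{prop:no.interior.jumps} the only potential jumps occur at $t=0$ and $t=T$, subject to the stated jump relations and to $b^i = -X^i_{T-}$. A strategy for agent~$i$ is then parametrized by $(a^i,v^i) \in \R \times L^2[0,T]$, and since $\cJ_B(\,\cdot\,;\boldsymbol{v}^{-i})$ is strictly convex in this pair by \cref{lem:strict.convex}, the Nash property is equivalent to the vanishing of its Gateaux derivatives in both $v^i$ and $a^i$.

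For the $v^i$-variation against a test function $w \in L^2[0,T]$, I would expand each term: the induced perturbation $I^w_t := \lambda \int_0^t e^{-\beta(t-s)} w_s\,ds$ enters the integrand $I_t v^i_t$ and, after a Fubini rearrangement, contributes $\int_0^T w_t \tilde Y^i_t\,dt$ with $\tilde Y^i_t := \lambda \int_t^T e^{-\beta(s-t)} v^i_s\,ds$; simultaneously $b^i$ shifts by $-\int_0^T w_s\,ds$, which feeds into the $\tfrac12(I_{T-}+I_T)b^i$ and $\tfrac{\theta_b}{2}(b^i)^2$ boundary terms. Collecting all contributions and simplifying with the terminal jump relation $\theta_b b^i = -\tfrac{\lambda}{2}\sum_{j\neq i} b^j$ from \cref{prop:no.interior.jumps} yields the pointwise identity
\[
    I_t + Y^i_t \;=\; I_{T-} + \lambda b^i \qquad \text{for a.e.\ } t \in [0,T),
\]
where $Y^i_t := \tilde Y^i_t + \lambda b^i e^{-\beta(T-t)}$ satisfies $\dot Y^i_t = \beta Y^i_t - \lambda v^i_t$ and $Y^i_T = \lambda b^i$. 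Differentiating this constancy and substituting $\dot I_t = -\beta I_t + \lambda \sum_j v^j_t$ together with the dynamics of $Y^i$ gives $\beta(Y^i_t - I_t) + \lambda \sum_{j\neq i} v^j_t = 0$ for every $i$; summing over $i$ and solving the resulting linear system determines $v^i_t = \tfrac{\beta}{\lambda(N-1)}\bigl[I_t + (N-1)Y^i_t - \sum_j Y^j_t\bigr]$, which is the stated formula for $\dot X^i_t$, and reinjecting produces the claimed ODEs for $\dot I_t$ and $\dot Y^i_t$. The boundary conditions $I_0 = \lambda\sum_j a^j$, $X^i_0 = x^i + a^i$, and $Y^i_T = \lambda b^i$ are then immediate from the definitions of these processes.

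It remains to treat the $a^i$-variation. A direct computation yields a single scalar equation which, after evaluating the already-derived constancy $I_t + Y^i_t = I_{T-} + \lambda b^i$ at $t=0$, collapses to $\theta_a a^i - \tfrac{\lambda}{2}\sum_{j\neq i} a^j = \theta_b b^i + \tfrac{\lambda}{2}\sum_{j\neq i} b^j$; both sides vanish by \cref{prop:no.interior.jumps}, so this FOC is automatically satisfied and adds no new information. The converse direction follows immediately: given a solution of the ODE system with the stated boundary and jump conditions, reversing the above computation shows both Gateaux derivatives vanish, and strict convexity promotes this to the Nash property. The main technical obstacle is the bookkeeping in the $v^i$-variation, because perturbing $v^i$ simultaneously affects $I_t$ on $[0,T)$, the terminal jump $I_T - I_{T-}$, and the implicitly determined $b^i$, so the several linear contributions must be tracked carefully in order to land on the clean constancy identity for $I + Y^i$.
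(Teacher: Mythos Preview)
Your proposal is correct and follows essentially the same variational route as the paper: compute the Gateaux derivative of $\cJ_B$ in $(a^i,v^i)\in\R\times L^2[0,T]$, obtain the constancy $I_t+Y^i_t=\text{const}$ from the $v^i$-variation, differentiate and algebraically solve for $v^i_t$ to land on the stated ODE system, and check that the $a^i$-variation is consistent. The only organizational difference is that you import the jump relations $\theta_a a^i=\tfrac{\lambda}{2}\sum_{j\neq i}a^j$ and $\theta_b b^i=-\tfrac{\lambda}{2}\sum_{j\neq i}b^j$ from \cref{prop:no.interior.jumps} and use them to simplify the first-order conditions, whereas the paper derives those same relations \emph{from} the first-order conditions of $\cJ_B$ (by comparing the $\eta$-FOC at $t\downarrow 0$ with the $a$-FOC, and by sending $t\uparrow T$). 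Your shortcut is legitimate since \cref{prop:no.interior.jumps} applies to any Nash equilibrium for $J_B$ and \cref{lem:nash.eq.a.c.} lets you pass between $J_B$ and $\cJ_B$; the paper's version is slightly more self-contained but ultimately equivalent.
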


It turns out that the existence of an equilibrium depends crucially on the choice of initial and terminal block costs $\theta_a=\vartheta_0$ and $\theta_b=\vartheta_T$. As the next theorem shows,  there is a single choice yielding existence for general initial inventories. Moreover, that choice consists of different values for the initial and terminal costs, except in the case $N=2$ of two traders. 

\begin{theorem}\mbox{}\label{thm:equil.block.cost}
\begin{enumerate}
    \item[(1)] If $\vartheta_0=\frac{\lambda (N-1)}{2}$ and 
    $\vartheta_T = \frac{\lambda}{2}$, then a Nash equilibrium for $J_B$ exists.

    \item[(2)] If $\vartheta_0\not=\frac{\lambda (N-1)}{2}$ and 
    $\vartheta_T = \frac{\lambda}{2}$, then a Nash equilibrium for $J_B$ 
    exists if and only if $\overline{x}=0$.

    \item[(3)] If $\vartheta_0=\frac{\lambda (N-1)}{2}$ and 
    $\vartheta_T \not= \frac{\lambda}{2}$, then a Nash equilibrium for $J_B$ 
    exists if and only if $x^i=x^j$ for all $i,j\in\{1,\dots,N\}$.

    \item[(4)] If $\vartheta_0\not=\frac{\lambda (N-1)}{2}$ and 
    $\vartheta_T \not= \frac{\lambda}{2}$, then a Nash equilibrium for $J_B$ 
    exists if and only if $x^i=0$ for all $i=1,\dots,N$.
\end{enumerate}
    When a Nash equilibrium exists, it is uniquely defined through the strategy profile $\boldsymbol{X}^*=(X^{*,1},\dots,X^{*,N})$ given by
    \begin{equation*}
		X^{*,i}_t= \mathbbm{f}_t (x^{i}-\overline{x})+\mathbbm{g}_t \overline{x}, \ \ \ t\in[0,T], \ \ \  i =1,\dots,N,
	\end{equation*}
	where
	\begin{equation*}
		\mathbbm{f}_t =1-\frac{\beta t}{\beta T+1} , \ \ \ t\in[0,T), \ \ \ \mathbbm{f}_{0-} = 1 \ \ \ \text{and} \ \ \ \mathbbm{f}_T = 0,
        \end{equation*}
        \begin{equation*}
		\mathbbm{g}_t =1-\frac{N(\beta t + 1) (N+1) e^{\beta \frac{N+1}{N-1} T}+2Ne^{\beta \frac{N+1}{N-1} t}-(N-1)}{N((\beta T+1)(N+1)+2)e^{\beta \frac{N+1}{N-1} T}-(N-1)}, \ \ \ t\in[0,T] \ \ \ \text{and} \ \ \ \mathbbm{g}_{0-} = 1. 
        \end{equation*}
\end{theorem}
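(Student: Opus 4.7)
The plan is to characterize equilibria through the ODE system of Lemma~\ref{lem:block.cost.ODE} together with the jump constraints from Proposition~\ref{prop:no.interior.jumps}, and then exploit the linear structure to solve explicitly. Lemma~\ref{lem:nash.eq.a.c.} reduces the search to deterministic strategies that are absolutely continuous on $(0,T)$, Proposition~\ref{prop:no.interior.jumps} confines jumps to $\{0,T\}$, and Proposition~\ref{prop:nash.eq.unique} handles uniqueness.

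First I would analyze the algebraic jump constraints $\vartheta_0 a^i = (\lambda/2)\sum_{j\neq i} a^j$ and $\vartheta_T b^i = -(\lambda/2)\sum_{j\neq i} b^j$. Rewriting the first as $(\vartheta_0 + \lambda/2)a^i = (\lambda N/2)\overline{a}$ and averaging over $i$ gives $\bigl(\vartheta_0 - \lambda(N-1)/2\bigr)\overline{a} = 0$. Hence either $\vartheta_0 = \lambda(N-1)/2$, forcing $a^i = \overline{a}$ (common across $i$, with $\overline{a}$ free), or $\overline{a}=0$, which then forces $a^i = 0$ for every $i$. The analogous manipulation for $\vartheta_T$ yields $\bigl(\vartheta_T + \lambda(N-1)/2\bigr)\overline{b} = 0$; since $\vartheta_T > 0$ this always forces $\overline{b}=0$, after which the individual constraint gives either $\vartheta_T = \lambda/2$ (the $b^i$ are free subject to $\sum_i b^i = 0$) or $b^i = 0$ for every $i$.

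Next I would decompose the ODE of Lemma~\ref{lem:block.cost.ODE} into mean and deviation components $\hat X^i = X^i - \bar X$, $\hat Y^i = Y^i - \bar Y$. A direct computation yields $\dot{\hat Y}^i \equiv 0$ and $\dot{\hat X}^i = (\beta/\lambda)\hat Y^i$, so $\hat Y^i$ is constant on $[0,T]$ and $\hat X^i$ is affine on $(0,T)$. Combined with $\hat Y^i(T) = \lambda \hat b^i$ and the liquidation identity $\hat b^i = -\hat X^i(T-)$, this pins down
\[
\hat b^i = -\frac{(x^i - \overline{x}) + (a^i - \overline{a})}{1+\beta T},
\]
and reading off the affine form $\hat X^i(t) = (x^i-\overline{x}) + (a^i - \overline{a}) + \beta\hat b^i t$ recovers the stated $\mathbbm{f}_t$. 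The mean dynamics form a closed three-dimensional linear system in $(I, \bar Y, \bar X)$; setting $\mu = \beta(N+1)/(N-1)$, one sees that $I - N\bar Y$ grows like $e^{\mu t}$ while $I + \bar Y$ is conserved. The terminal condition $\bar Y(T) = 0$ expresses all constants in terms of the scalar $\overline{a}$, and integrating $\dot{\bar X} = \beta(I - \bar Y)/[\lambda(N-1)]$ gives $\bar X$ in closed form. The requirement $\bar X(T-) = 0$ (from $\overline{b}=0$ and $\bar X(T)=0$) then becomes a single linear equation for $\overline{a}$ in terms of $\overline{x}$ whose coefficient on $\overline{a}$ is strictly positive.

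Combining the two pieces yields the trichotomy. In case~(1), the mean equation determines $\overline{a}$ and the deviation formula determines $\hat b^i$; after algebraic simplification the mean part collapses to $\mathbbm{g}_t \overline{x}$. In case~(2), $\overline{a}=0$ is forced and the mean equation is solvable only when $\overline{x}=0$. In case~(3), $\hat b^i = 0$ is forced and the deviation formula is solvable only when $x^i = \overline{x}$ for every $i$. When the stated conditions hold the candidate satisfies the first-order ODE system, which by strict convexity (Lemma~\ref{lem:strict.convex}) is sufficient for optimality of each player, so the profile is a Nash equilibrium; uniqueness follows from Proposition~\ref{prop:nash.eq.unique}. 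The main obstacle I anticipate is the algebraic simplification bringing $\bar X(t)$ into the compact form $\mathbbm{g}_t\overline{x}$ displayed in the theorem: the intermediate expressions involve $(N\pm 1)$, $\mu$, and $e^{\pm \mu T}$ in combinations that have to be massaged carefully. A secondary care point is the ``only if'' direction in parts~(2) and~(3): the above argument produces necessary conditions only within deterministic, $(0,T)$-absolutely continuous strategies, and Lemma~\ref{lem:nash.eq.a.c.} together with Proposition~\ref{prop:no.interior.jumps} is needed to upgrade the non-existence conclusion to the full admissible class.
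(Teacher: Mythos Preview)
Your treatment of existence and the explicit form of the equilibrium is essentially the same as the paper's: the paper also reduces via Lemma~\ref{lem:nash.eq.a.c.} and Proposition~\ref{prop:no.interior.jumps}, solves the ODE of Lemma~\ref{lem:block.cost.ODE} (via an eigendecomposition rather than your mean/deviation split, but these are equivalent), and then uses the algebraic constraints on $\boldsymbol{a},\boldsymbol{b}$ (your first paragraph is exactly the paper's Lemma~\ref{lem:bdy.sys.of.eqns}) to pin down the free parameters and obtain $\mathbbm{f}$, $\mathbbm{g}$.

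The genuine gap is in the ``only if'' direction of (2) and (3). You write that Lemma~\ref{lem:nash.eq.a.c.} together with Proposition~\ref{prop:no.interior.jumps} upgrades non-existence from the deterministic absolutely continuous class to the full admissible class, but neither result does this. Lemma~\ref{lem:nash.eq.a.c.} says that an equilibrium \emph{in} the restricted class is also an equilibrium in the full class; it is silent on whether a (possibly stochastic) equilibrium in the full class must arise this way. Proposition~\ref{prop:no.interior.jumps} tells you that any equilibrium has no interior jumps and that the boundary jumps satisfy certain linear relations, but the ODE characterization of Lemma~\ref{lem:block.cost.ODE} applies only to deterministic strategies, so you cannot feed a stochastic equilibrium into it. Thus your argument rules out only deterministic equilibria when $\vartheta_0\neq\lambda(N-1)/2$ or $\vartheta_T\neq\lambda/2$, leaving open the possibility of a genuinely random one.

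The paper treats this as a ``major technical hurdle'' and devotes a separate Step~3 to it. The method is to start from the necessary condition of Lemma~\ref{lem:gen.foc.jumps} (valid for any admissible optimizer, stochastic or not) with general predictable stopping times $\tau\leq\sigma$, average over $i$ or take differences, and extract an integral identity for the processes $\overline{X}$ (Case~(2)) or $X^{*,i}-\overline{X}$ (Case~(3)). Taking expectations and differentiating yields a closed ODE for $m_t=\mathbb{E}[\overline{X}_t]$ whose unique solution is incompatible with $m_0=\overline{x}$ unless $\overline{x}=0$; in Case~(3) a martingale argument forces $X^{*,i}-\overline{X}$ to be constant, hence $x^i=\overline{x}$. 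You will need an argument of this kind to close the proof.
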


We observe that the equilibrium, while sharing the affine form, has a much simpler expression compared to the one for instantaneous cost in the preceding section.

\begin{remark}
    When $\vartheta_0\not=\frac{\lambda(N-1)}{2}$ and 
    $\vartheta_T\not=\frac{\lambda}{2}$, (4) shows that an 
    equilibrium exists if and only if $x^i = 0$ for all $i=1,\dots,N$. 
    In that case, substituting in $x^i=0$ for all $i$ shows $X^{*,i}\equiv0$. In words, the only possible equilibrium is that all agents have zero inventory and do not trade. More generally, when $\vartheta_0\not=\frac{\lambda(N-1)}{2}$ \emph{or} $\vartheta_T\not=\frac{\lambda}{2}$, the equilibrium exists only for initial inventories such that no block trade occurs at $t=0$ or $t=T$, respectively, and in that sense the value of $\vartheta_t$ does not matter.
\end{remark}

\begin{remark}
  In certain games without a (pure) Nash equilibrium, existence is re-established by allowing for randomization, meaning that agents' strategies can depend on privately observed randomization devices. Seminar participants have rightfully asked whether that is the case in the context of \cref{thm:equil.block.cost}. The answer is negative. In a nutshell, randomized strategies can be de-randomized in a way that strictly decreases the objective function, ruling out the existence of randomized equilibria. The mathematical details are somewhat lengthy and are reported separately in \cite{CampbellNutz.25b}.
\end{remark}

The proof of Theorem~\ref{thm:equil.block.cost} in Appendix \ref{app:N.block.cost} has two parts. The first is based  on the ODE system from Lemma~\ref{lem:block.cost.ODE}, providing  the explicit solution for the ``good'' parameter values and proving that there is no solution for the ``bad'' parameter values. On the strength of Lemma~\ref{lem:nash.eq.a.c.}, the former establishes existence and uniqueness of the equilibrium also for general admissible strategies, for those parameter values. Whereas for the bad parameter values, verifying that the non-existence of a deterministic equilibrium extends to the full class of admissible strategies is a major technical hurdle. Achieving this occupies a sizable part of the proof, which combines original arguments with ideas of \cite[Theorem 4.5(b)]{SchiedStrehleZhang.17}.

The final result of this section provides the equilibrium cost in closed form.

\begin{corollary}\label{cor:eq.cost.block.cost}
    When a Nash equilibrium for $J_B$ exists, the cost for the traders is
    \begin{align*}
        J_{B}(X^{*,i};\boldsymbol{X}^{*,-i})&= \int_0^TI_{t-}dX^{*,i}_t+\frac{1}{2}\left(\Delta I_0\Delta X^{*,i}_0+\Delta I_T \Delta X^{*,i}_T\right)\\
        & \quad \quad \quad \quad  \quad \quad +\frac{1}{2}\left(\vartheta_0(\Delta X^{*,i}_0)^2+\vartheta_T(\Delta X^{*,i}_T)^2\right), \ \ \ i=1,\dots,N,
    \end{align*}
    in terms of the equilibrium impact process
    \[I_t=-\frac{ \lambda  N (N +1)\left(e^{\beta \frac{N +1}{N -1}t}+N e^{\beta \frac{N +1}{N -1}T}\right) }{N \left(( \beta T +1) (N+1) +2\right) e^{\beta \frac{\left(N +1\right) T}{N -1}}-(N -1)}\overline{x}, \ \ \ t\in[0,T], \ \ \ I_{0-}=0.
    \]  
    Stated explicitly, we obtain the equilibrium impact cost
    \begin{align*}
        &\int_0^TI_{t-}dX^{*,i}_t +\frac{1}{2}\left(\Delta I_0\Delta X^{*,i}_0+\Delta I_T \Delta X^{*,i}_T\right) = \frac{\lambda N}{\beta T+1} \overline{x}(x^i-\overline{x})\\
        &\quad \quad \quad + \frac{ \lambda N^3 (N+1)\left( \left(\left(\beta T +\frac{1}{2}\right) (N+1) +3\right) e^{\frac{2 \left(N +1\right) \beta  T}{N -1}}-\frac{2 \left(N -1\right)}{N^2} \left(N e^{\frac{\left(N +1\right) \beta  T}{N -1}}+\frac{1}{4}\right) \right)}{\left(N \left(\left(\beta T +1\right)(N+1) + 2\right) e^{\frac{\left(N +1\right) \beta  T}{N -1}}-(N -1)\right)^{2}}
    \overline{x}^2,
    \end{align*}
    and equilibrium block trade cost
    \begin{multline*}
        \frac{1}{2}\left(\vartheta_0(\Delta X^{*,i}_0)^2+\vartheta_T(\Delta X^{*,i}_T)^2\right)\\= \frac{\vartheta_0 (N+1)^2 (1+ N e^{\beta \frac{N+1}{N-1} T})^2 \overline{x}^2}{2 \left(N((\beta T+1) (N+1) + 2)e^{\beta \frac{N+1}{N-1} T} - (N-1)\right)^2} +\frac{\vartheta_T (x^i-\overline{x})^2}{2 (\beta T+1)^2}.
    \end{multline*}
\end{corollary}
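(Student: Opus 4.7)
The plan is to insert the explicit equilibrium from \cref{thm:equil.block.cost} into the objective representation \eqref{eqn:obj.impact.rep} and compute each resulting term. The guiding structural observation is that $\sum_{i=1}^N (x^i - \overline{x}) = 0$, so the aggregate inventory satisfies $\sum_{j=1}^N X^{*,j}_t = N\mathbbm{g}_t \overline{x}$ on $[0,T)$ (with $\sum_j X^{*,j}_{0-} = N\overline{x}$), and hence the impact $I_t$ is a scalar multiple of $\overline{x}$ depending only on $\mathbbm{g}$. As a consequence, $\Delta I_T = \lambda \sum_j \Delta X^{*,j}_T = 0$, since each $\Delta X^{*,j}_T$ is proportional to $x^j - \overline{x}$.

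First I would read off the jumps: $\mathbbm{f}_0 = \mathbbm{f}_{0-} = 1$, $\mathbbm{g}_{0-} = 1$, $\mathbbm{g}_T = 0$, and $\mathbbm{f}_{T-} = 1/(\beta T+1)$ give $\Delta X^{*,i}_0 = (\mathbbm{g}_0 - 1)\overline{x}$ and $\Delta X^{*,i}_T = -(x^i - \overline{x})/(\beta T+1)$. A short simplification shows $\mathbbm{g}_0 - 1 = -(N+1)(Ne^{kT}+1)/D$, where I write $k := \beta(N+1)/(N-1)$ and $D$ for the common denominator in $\mathbbm{g}$. Substituting these jump sizes into $\tfrac{1}{2}[\vartheta_0 (\Delta X^{*,i}_0)^2 + \vartheta_T (\Delta X^{*,i}_T)^2]$ with $\vartheta_0 = \lambda(N-1)/2$ and $\vartheta_T = \lambda/2$ yields the block trade cost in the stated form. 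Notice that the initial jump produces only an $\overline{x}^2$ contribution and the terminal jump only an $(x^i - \overline{x})^2$ contribution, matching the displayed decomposition.

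Next I would derive $I_t$ by convolution. On $[0,T)$,
\[ I_t = \lambda N(\mathbbm{g}_0 - 1)e^{-\beta t}\overline{x} + \lambda N \overline{x} \int_0^t e^{-\beta(t-s)} \dot{\mathbbm{g}}_s\, ds, \]
and since $\dot{\mathbbm{g}}_s$ is a sum of a constant and $e^{ks}$, the integral is elementary. Using the identity $\beta + k = 2\beta N/(N-1)$ (so that $2Nk/(\beta+k) = N+1$), one finds that the $e^{-\beta t}$ contributions cancel, leaving the stated closed form. This also recovers the initial condition $I_0 = \lambda \sum_j a^j$ from \cref{lem:block.cost.ODE} as a sanity check.

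Finally, for the impact cost, using $I_{0-} = 0$ and $\Delta I_T = 0$,
\[ \int_0^T I_{t-}\, dX^{*,i}_t + \tfrac{1}{2}\Delta I_0 \Delta X^{*,i}_0 = \int_0^T I_t \dot{X}^{*,i}_t\, dt + I_{T-} \Delta X^{*,i}_T + \tfrac{1}{2}I_0 (\mathbbm{g}_0 - 1)\overline{x}. \]
With $\dot{X}^{*,i}_t = \dot{\mathbbm{f}}_t(x^i - \overline{x}) + \dot{\mathbbm{g}}_t \overline{x}$ and $\dot{\mathbbm{f}}_t \equiv -\beta/(\beta T+1)$, the coefficient of $(x^i - \overline{x})\overline{x}$ becomes $-[\beta \int_0^T I_t\, dt + I_{T-}]/(\beta T+1)$. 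A satisfying cancellation occurs here: evaluating $\beta \int_0^T I_t\, dt$ produces $-\lambda N \overline{x}[(N-1)(e^{kT}-1) + N(N+1)\beta T e^{kT}]/D$, and adding $I_{T-} = -\lambda N(N+1)^2 e^{kT}\overline{x}/D$ makes the $e^{kT}$-coefficient in the bracket collapse to $D + (N-1)$, so the whole bracket simplifies to exactly $-\lambda N \overline{x}$. This delivers the cross term $\lambda N \overline{x}(x^i - \overline{x})/(\beta T+1)$. The remaining pure $\overline{x}^2$ coefficient is $\overline{x} \int_0^T I_t \dot{\mathbbm{g}}_t\, dt + \tfrac{1}{2}I_0(\mathbbm{g}_0 - 1)\overline{x}$, whose integrand involves only $e^{kt}$, $te^{kt}$, and $e^{2kt}$. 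The main obstacle is purely algebraic bookkeeping to group the resulting $e^{kT}$ and $e^{2kT}$ terms over $D^2$ into the compact numerator displayed in the corollary; no further ideas are needed beyond patient simplification.
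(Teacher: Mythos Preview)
Your approach is correct and essentially the same as the paper's: substitute the explicit equilibrium from \cref{thm:equil.block.cost} into the cost representation, compute the jumps $\Delta X^{*,i}_0$, $\Delta X^{*,i}_T$, derive the impact $I_t$ from the aggregate inventory, and then evaluate each term directly. The paper merely records the derivative $\dot X^{*,i}_t$, the jumps, the impact process (noting $\Delta I_T=0$), and then defers the remainder to ``a direct computation (omitted for the sake of brevity)''; you have supplied that computation in more detail, including the clean cancellation $\beta\int_0^T I_t\,dt + I_{T-} = -\lambda N\overline{x}$ that produces the cross term. One cosmetic point: the corollary keeps $\vartheta_0,\vartheta_T$ as symbols in the block cost formula, so there is no need to specialize them to $\lambda(N-1)/2$ and $\lambda/2$ at that step.
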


We can make several observations about the structure of the solution. First, we observe that the equilibrium costs are linear in the liquidity parameter $\lambda$. This does not factor into the strategies of Theorem \ref{thm:equil.block.cost}, while the resilience $\beta$ features prominently. An illustration of the solution for the same three player game as Figure \ref{fig:equil.conv.phi} appears in Figure \ref{fig:equil.conv.eps}. As $\beta\downarrow 0$ this approximates permanent impact and we see that $\mathbbm{g}_t\to 0$, $\mathbbm{f}_t\to 1$ on $[0,T)$. Consequently, there is no trading on $(0,T)$ and every player liquidates their inventory in two block trades at $t=0$ and $t=T$. The limiting cost is
\[\lim_{\beta\downarrow0} J_{B}(X^{*,i};\boldsymbol{X}^{*,-i}) = \lambda N \overline{x}(x^i-\overline{x}) + \frac{\lambda N}{2}\overline{x}^2 + \frac{\vartheta_0}{2}\overline{x}^2+\frac{\vartheta_T}{2}(x^i-\overline{x})^2.\]
On the other hand, if we send $\beta\uparrow \infty$ this corresponds to temporary price impact and the limiting cost is $0$. We find $\mathbbm{g}_t,\mathbbm{f}_t\to 1-\frac{t}{T}$ which recovers the familiar time-weighted average price (TWAP) strategy that is optimal in the Almgren--Chriss model.

\section{Identifying the Limit of Small Instantaneous Cost}\label{sec:id.limit}

This section connects the equilibrium with instantaneous cost (Theorem~\ref{thm:equil.liq.constr}) and the equilibrium with block cost (Theorem \ref{thm:equil.block.cost}). Namely, we show that the latter is the limit of the former for vanishing instantaneous cost $\eps\to0$. The equilibrium with instantaneous cost is canonical in that it does not require a particular choice of parameters. The limit of vanishing instantaneous cost then gives rise to the seemingly unprincipled pair of ``good'' block cost parameters in a natural way. Indeed, our result implies that any different choice of parameters would lead to a discontinuity in the equilibrium cost. 

\begin{theorem}\label{thm:small.eps.lim}
    As $\varepsilon\downarrow 0$ the equilibrium $\boldsymbol{X}^*=\boldsymbol{X}^{*,\eps}$ in Theorem \ref{thm:equil.liq.constr} converges uniformly on compact subsets of $(0,T)$ to the equilibrium $\boldsymbol{X}^*=\boldsymbol{X}^{*,0}$ in Theorem \ref{thm:equil.block.cost}. Furthermore, the equilibrium cost in Corollary \ref{cor:eq.cost.liq.constr} converges to that of Corollary \ref{cor:eq.cost.block.cost} when $\vartheta_0 =\frac{\lambda (N-1)}{2}$ and $\vartheta_T = \frac{\lambda}{2}$. In particular, for any $\delta\in(0,T)$,		\begin{align*}
     \varepsilon \int_0^{\delta}(\dot{X}^{*,\eps,i}_t)^2 dt \to   \vartheta_0(\Delta X^{*,0,i}_0)^2 \quad\mbox{and}\quad \varepsilon \int_{\delta}^{T}(\dot{X}^{*,\eps,i}_t)^2 dt \to   \vartheta_T(\Delta X^{*,0,i}_T)^2.
		\end{align*} 
\end{theorem}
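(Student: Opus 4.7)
The plan is to prove \cref{thm:small.eps.lim} via direct asymptotic analysis of the closed-form expressions from \cref{thm:equil.liq.constr} and \cref{cor:eq.cost.liq.constr}, exploiting the fact that we already possess fully explicit formulas on both sides of the limit. This bypasses any need for abstract compactness/$\Gamma$-convergence machinery (although a $\Gamma$-convergence approach would also be conceivable, with $\varepsilon\int_0^T(\dot X^i)^2dt$ playing the role of a singular perturbation whose $\Gamma$-limit on BV functions would need to be identified as a block cost). First, I would establish asymptotic expansions as $\varepsilon\downarrow 0$ of each of the constants $z_1,z_2,z_3,\gamma_1,\gamma_2,\rho_0,\rho_\pm,\Psi,\Xi$ appearing in \cref{tab:constants}. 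These arise from the characteristic roots of the linear ODE system in \cref{lem:term.pen.ODE}, which degenerates as $\varepsilon\downarrow 0$; one expects some roots to scale like $\varepsilon^{-1/2}$, producing the boundary layers that account for the emerging block trades, while others remain $O(1)$ and govern the interior dynamics.

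With these expansions in hand, I would analyze $\mathfrak{f}_t$ and $\mathfrak{g}_t$ termwise. On any compact $K\subset(0,T)$, the boundary-layer exponentials $e^{z_k t}$ and $e^{z_k(T-t)}$ (with large positive $z_k$) decay uniformly to zero, and matching the surviving polynomial-in-$t$ terms against the asymptotic denominators should recover exactly $\mathbbm{f}_t=1-\beta t/(\beta T+1)$ and the closed form for $\mathbbm{g}_t$ from \cref{thm:equil.block.cost}. Uniform convergence on $K$ then follows from the explicit form (e.g.\ by dominated convergence of the remainder or by noting equicontinuity via uniform bounds on $\mathfrak{f}'$ and $\mathfrak{g}'$ away from the endpoints). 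The convergence of $I_t$ in \cref{cor:eq.cost.liq.constr} to that in \cref{cor:eq.cost.block.cost} is obtained by the same mechanism.

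For the cost convergence, I would substitute the asymptotic expansions into the explicit expressions of \cref{cor:eq.cost.liq.constr}, tracking each of $\mathfrak{h}_1,\ldots,\mathfrak{h}_5$ and the denominators $\Psi,\Xi$; matching terms should reproduce exactly the impact and block contributions of \cref{cor:eq.cost.block.cost} with $\vartheta_0=\lambda(N-1)/2$ and $\vartheta_T=\lambda/2$. The boundary-layer decomposition of the instantaneous cost is the most interesting point. Write
\begin{equation*}
\varepsilon\int_0^\delta(\dot X^{*,\varepsilon,i}_t)^2\,dt=\varepsilon\int_0^\delta\bigl(\dot{\mathfrak{f}}_t(x^i-\overline{x})+\dot{\mathfrak{g}}_t\overline{x}\bigr)^2 dt,
\end{equation*}
and split the square. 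The derivative $\dot{\mathfrak{g}}_t$ contains an exponential concentrated in an $O(\sqrt{\varepsilon})$ boundary layer near $t=0$; on $[0,\delta]$ the integral of its square, multiplied by $\varepsilon$, gives a finite positive limit that should match $\vartheta_0(\Delta \mathbbm{g}_0)^2\overline{x}^2=\vartheta_0(\Delta X_0^{*,0,i})^2$. The cross term and the $\dot{\mathfrak{f}}$ term vanish on $[0,\delta]$ since their boundary layer lies at $t=T$, not at $t=0$; and the interior contribution from $[0,\delta]$ is $o(1)$ after multiplication by $\varepsilon$. A mirror computation handles $[\delta,T]$, where both $\dot{\mathfrak{f}}$ and $\dot{\mathfrak{g}}$ contribute boundary layers at $T$, together yielding $\vartheta_T(\Delta X_T^{*,0,i})^2$.

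The main obstacle, as often with such singularly perturbed systems, is bookkeeping: the constants in \cref{tab:constants} depend on $\varepsilon$ through several coupled channels, and one must resist the temptation to discard terms that look negligible but combine to give $O(1)$ contributions after the boundary-layer integration. In particular, verifying that the emerging block-cost coefficients are \emph{exactly} $\lambda(N-1)/2$ and $\lambda/2$—rather than some other combination—requires care. A useful cross-check, and possibly a cleaner route to identify these coefficients without trusting the asymptotics blindly, is to observe that any subsequential limit of $\boldsymbol{X}^{*,\varepsilon}$ must satisfy the equilibrium characterization of \cref{lem:block.cost.ODE} combined with the jump condition in \cref{prop:no.interior.jumps}; uniqueness in \cref{thm:equil.block.cost} then pins down both the limit and the block-cost values.
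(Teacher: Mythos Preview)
Your overall strategy---direct asymptotic analysis of the closed forms in \cref{thm:equil.liq.constr} and \cref{cor:eq.cost.liq.constr}---is precisely what the paper does, so the high-level plan is sound. However, several of your stated expectations about the asymptotics are incorrect and would mislead the actual computation. The characteristic roots do \emph{not} scale like $\varepsilon^{-1/2}$: from the explicit formulas, $z_1\to\tfrac{N+1}{N-1}\beta$ (finite), while $z_2\sim-\lambda(N-1)/\varepsilon$ and $z_3=\beta+\lambda/\varepsilon$ both blow up like $\varepsilon^{-1}$. Consequently the boundary layers have width $O(\varepsilon)$, not $O(\sqrt\varepsilon)$. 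More importantly, $\mathfrak g$ carries a boundary layer only at $t=0$ (through $e^{z_2 t}$) and is continuous at $T$ in the limit; only $\mathfrak f$ (through $e^{-z_3(T-t)}$) produces the layer at $T$. Your claim that ``both $\dot{\mathfrak f}$ and $\dot{\mathfrak g}$ contribute boundary layers at $T$'' is therefore wrong; the terminal block cost $\vartheta_T(\Delta X_T^{*,0,i})^2$ comes entirely from the $\mathfrak f$ part, consistent with $\Delta X_T^{*,0,i}=-(x^i-\overline x)/(1+\beta T)$ depending only on $x^i-\overline x$.

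Two further points of comparison. For the upgrade from pointwise to locally uniform convergence, the paper does not argue via equicontinuity but instead shows directly that $\mathfrak f$ and $\mathfrak g$ are monotone (by checking $\dot{\mathfrak f}\le 0$ and $\dot{\mathfrak g}\le 0$ from the signs of $z_1,z_2,\gamma_1,\gamma_2,\rho_-$) and then invokes Dini's theorem---this is cleaner than bounding derivatives. For the split instantaneous cost, the paper computes $\mathfrak h_i^{\delta,T}:=\int_\delta^T h_t^i\,dt$ explicitly for $i=3,4,5$ and shows $\varepsilon\mathfrak h_3^{\delta,T}\to 0$, $\varepsilon\mathfrak h_4^{\delta,T}\to\lambda/2$, $\varepsilon\mathfrak h_5^{\delta,T}\to 0$; this confirms that the $(x^i-\overline x)^2$ term alone survives on $[\delta,T]$. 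Your alternative ``cross-check'' via subsequential limits and \cref{lem:block.cost.ODE} is an interesting idea the paper does not pursue, but note that it would require first establishing relative compactness of $(\boldsymbol X^{*,\varepsilon})$ in a topology strong enough to pass to the limit in the equilibrium characterization.
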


\begin{remark}\label{rk:small.eps.lim}
  We can observe that in addition to the locally uniform convergence on $(0,T)$, the strategies converge at $T$ but not at $0$. This is merely due to the convention for the jump; the right continuous modification of the limiting strategy coincides with the strategy in Theorem \ref{thm:equil.block.cost} everywhere on $[0,T]$.
\end{remark} 

\begin{figure}[!htbp]
  \centering
  \includegraphics[scale=0.21]{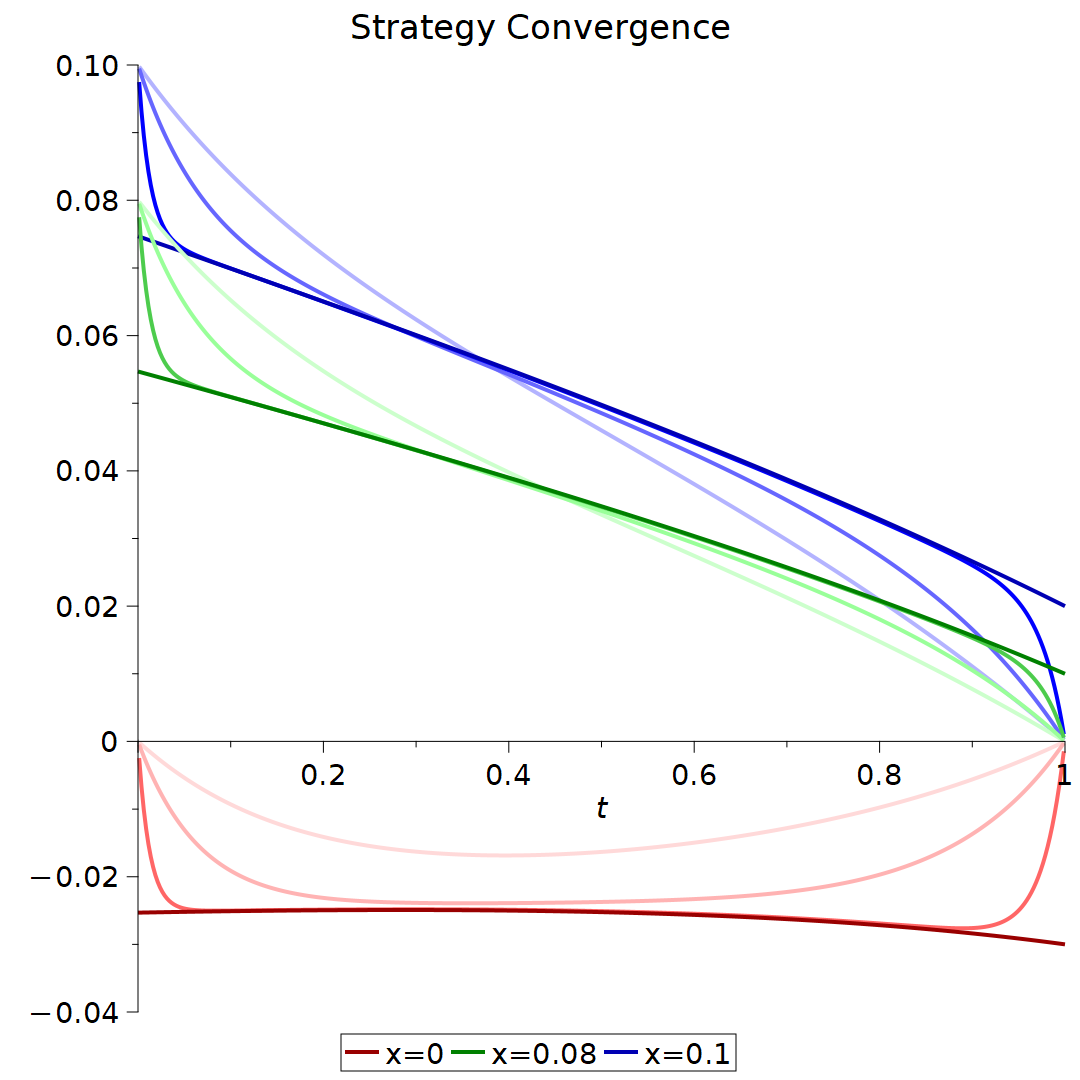}
  \hspace{0.2cm}
  \includegraphics[scale=0.21]{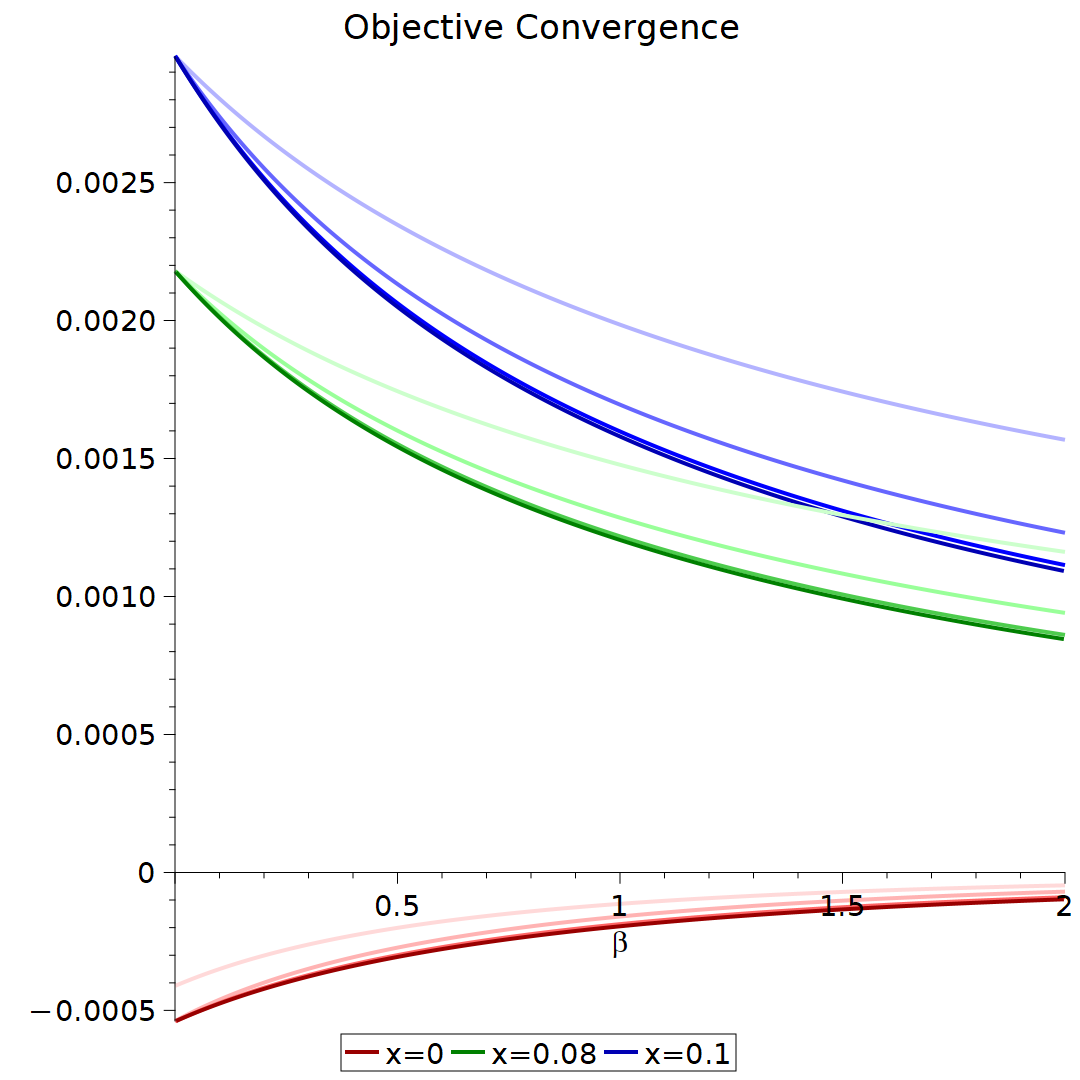}
  \caption{Convergence of the equilibrium in Theorem \ref{thm:equil.liq.constr} to that of Theorem \ref{thm:equil.block.cost} as $\varepsilon \downarrow 0$ when $\lambda=0.2$, $T=1$, and $N=3$. The three colors represent the three agents and range from light to dark as $\varepsilon$ decreases ($\eps=0.1,0.03,0.005,0$). The darkest colors correspond to the block cost. 
 (Left Panel) Equilibrium strategies $X^{*,\varepsilon, i}_t$ for $\beta=1$. (Right Panel) Equilibrium costs from Corollaries \ref{cor:eq.cost.liq.constr} and \ref{cor:eq.cost.block.cost} as a function of $\beta$.}
  \label{fig:equil.conv.eps}
\end{figure}

Figure \ref{fig:equil.conv.eps} illustrates the convergence of the equilibrium strategies and their costs. \Cref{thm:small.eps.lim} not only justifies the particular block costs found in \cref{sec:N.block.cost}, but also shows that optimal trading is more aggressive than in the single player case. The next remark quantifies that observation.

\begin{remark}
The limiting behavior of the instantaneous cost observed in \cref{thm:small.eps.lim} is qualitatively different from the single-player case. If we pose the same objectives when $N=1$, the optimal instantaneous cost becomes negligible as $\varepsilon\downarrow0$.
More precisely, one can show (e.g., based on the formula in \cite[Theorem~1]{ChenHorstTran.19}) that
\begin{align*}
    \frac{\varepsilon}{2}\int_0^T(\dot{X}^{*,\varepsilon,i}_t)^2\,dt \sim C \varepsilon^{1/2}
\end{align*}
for a constant $C=C(\lambda,\beta, T, x^i)\geq 0$ that is strictly positive when $x^i\neq0$.
In particular, this implies that $\|\dot{X}^{*,\varepsilon,i}\|_{L^2[0,T]}$ is of order $\varepsilon^{-1/4}$ for nonzero initial inventories.
Whereas in the game with $N>1$, the instantaneous cost converges to the (generally non-zero) block cost by \cref{thm:small.eps.lim}, implying that $\|\dot{X}^{*,\varepsilon,i}\|_{L^2[0,T]}$ is of order $\varepsilon^{-1/2}$ when
at least one of $x^i$ or $\overline{x}$ is nonzero.
\end{remark}

\section{Costs of Anarchy and Predation}\label{se:cost.anarchy.predation}

In this section we compare the equilibrium cost of Section~\ref{sec:N.block.cost} with the classic Obizhaeva--Wang solution for a single trader in the absence of competition.

\begin{remark}
Recall that Corollary~\ref{cor:eq.cost.block.cost} decomposes each trader’s equilibrium cost into an impact component and the additional block costs at the initial and terminal times. Below, we define the costs of anarchy and predation using only the impact costs. The motivation is twofold. First, impact costs quantify the price distortion imposed on the market and therefore isolate the strategic externalities caused by competition and predation. Second, aggregated impact costs depend only on the net inventory $x=N\overline{x}$, allowing for clean benchmarks against the single-agent Obizhaeva--Wang problem. By contrast, the additional block costs depend on the cross-sectional distribution of inventories, so any welfare comparison including those costs would have to specify an inventory configuration.
\end{remark}

\subsection{Cost of Anarchy}\label{se:cost.anarchy}

For $N\geq2$ traders we can define a notion of ``population impact cost,'' $\mathsf{PIC}_N(x)$, in the game by aggregating the impact cost from Corollary \ref{cor:eq.cost.block.cost} across traders. If we let $x=N\overline{x}$ be the net inventory in the market and use the formula in Corollary~\ref{cor:eq.cost.block.cost}, 
we get that the population impact cost is
\begin{align*}\mathsf{PIC}_N(x):=\frac{ \lambda N^2 (N+1)\left( \left(\left(\beta T +\frac{1}{2}\right) (N+1) +3\right) e^{\frac{2 \left(N +1\right) \beta  T}{N -1}}-\frac{2 \left(N -1\right)}{N^2} \left(N e^{\frac{\left(N +1\right) \beta  T}{N -1}}+\frac{1}{4}\right) \right)}{\left(N \left(\left(\beta T +1\right)(N+1) + 2\right) e^{\frac{\left(N +1\right) \beta  T}{N -1}}-(N -1)\right)^{2}}x^2.
\end{align*}
This can be compared to the impact cost of liquidating the net inventory optimally using the single-player solution that a central planner would employ, 
$\mathsf{PIC}_1(x):=\lambda x^2(\beta T+2)^{-1}$.

We define the (relative, excess) \textit{system cost of anarchy}, $\mathsf{CoA}_N$, as the percent increase in the population impact cost of liquidating the net inventory,
\begin{equation}\label{eqn:CoA}\mathsf{CoA}_N:=\left[\frac{\mathsf{PIC}_N(x)}{\mathsf{PIC}_1(x)}-1\right]\cdot 100\%, \ \ \ x\not=0.
\end{equation}
Note that, from the form of the population impact, $\mathsf{CoA}_N$ does not depend on the net inventory, $x$, or liquidity parameter, $\lambda$. Keeping the trading horizon $T$ fixed, it depends only on the price impact decay per unit time, $\beta$, and the dimension $N$. We visualize this dependence in Figure \ref{fig:incr.sys.imp.cost}.  We note that when the net inventory in the market is $0$, the net impact cost in the game coincides with the cost when $N=1$, $\mathsf{PIC}_N(0)=\mathsf{PIC}_1(0)=0$, and there is no cost of anarchy.

\begin{figure}[!htb]
  \centering
  \includegraphics[scale=0.21]{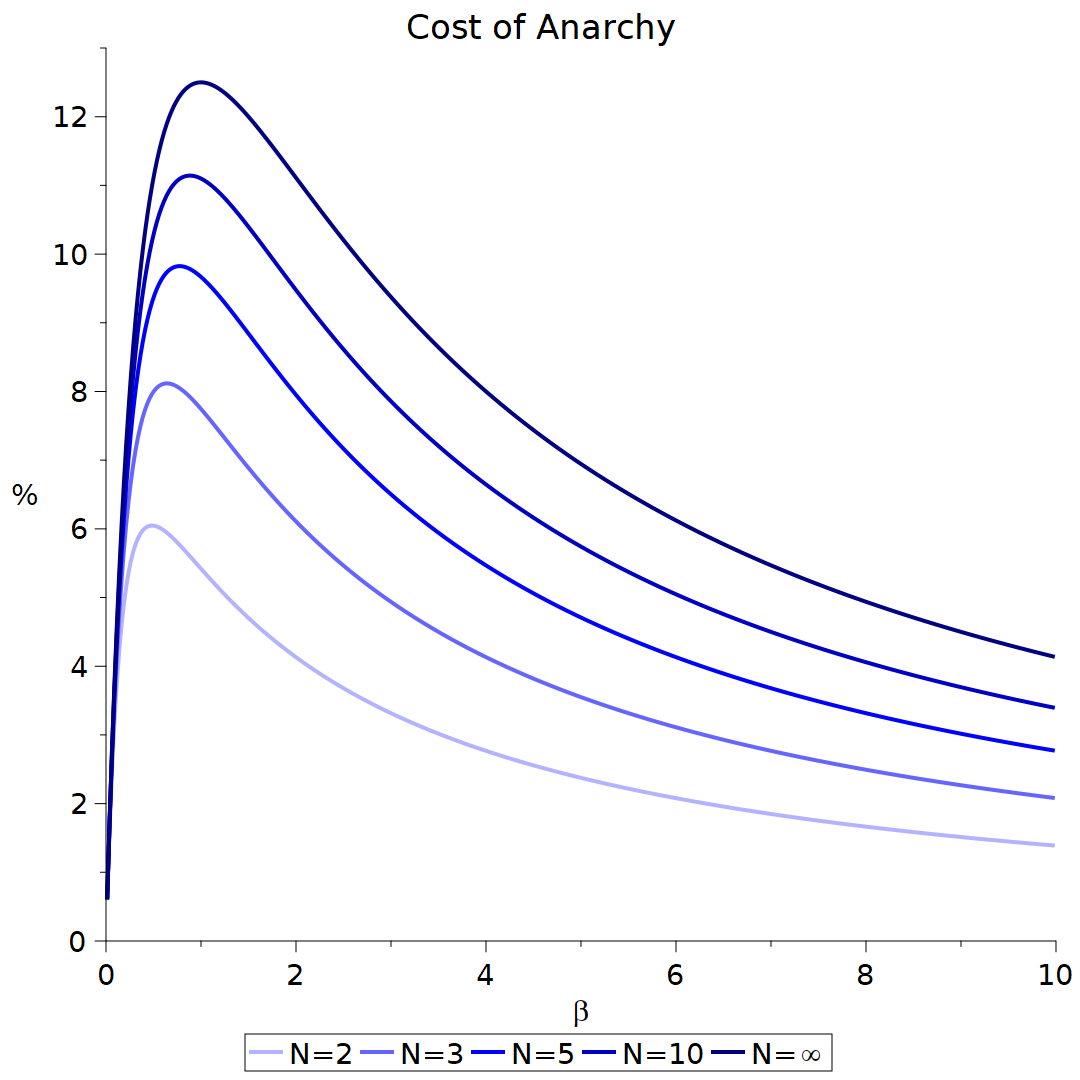}
  \hspace{0.2cm}
  \includegraphics[scale=0.21]{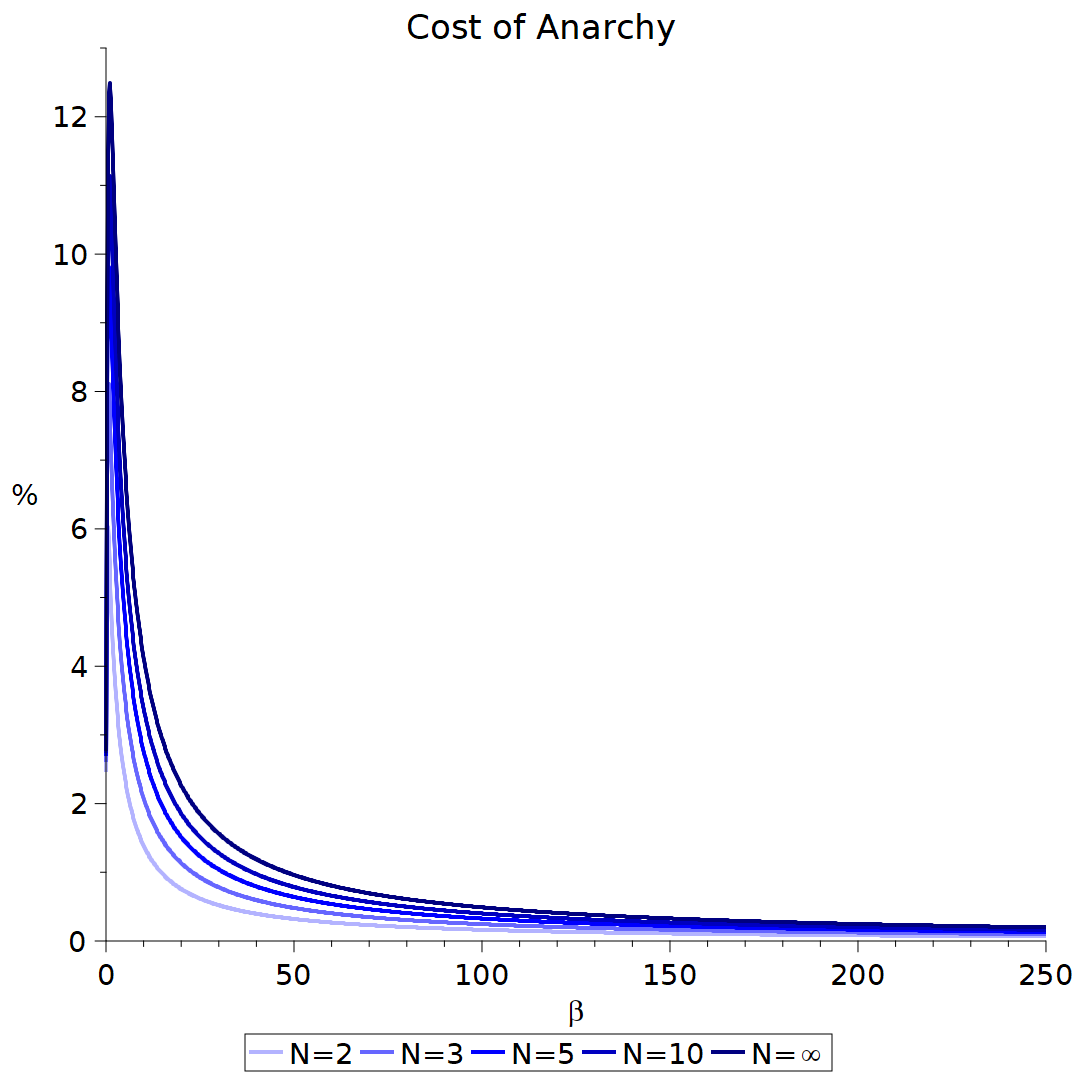}
  \caption{Cost of Anarchy $\mathsf{CoA}_N$ of \eqref{eqn:CoA}, illustrated as a function of $\beta>0$ for various population sizes $N$ when $T=1$. This is the percent increase (over the $N=1$ case) in impact cost incurred by the population to liquidate their inventory. Both panels show the same function; the right panel shows a larger range of $\beta$ to highlight the limit $\beta\to\infty$.}
  \label{fig:incr.sys.imp.cost}
\end{figure}

We can describe the behavior of the cost of anarchy at the extremes of system size and impact decay. The cost of anarchy increases as $N$ increases, but has a finite limit, \[\lim_{N\uparrow\infty} \mathsf{CoA}_N = \frac{\beta T}{2(\beta T+1)^2}\cdot 100\%.\]
Interestingly, as illustrated in Figure \ref{fig:incr.sys.imp.cost}, the cost of anarchy is maximized at an intermediate value of $\beta$. For large $N$ this maximum occurs near $\beta\approx \frac{1}{T}$ and amounts to roughly $12.5\%$. A partial explanation is provided by the limits,
\[\lim_{\beta\downarrow0} \mathsf{CoA}_N = \lim_{\beta\uparrow\infty} \mathsf{CoA}_N  = 0\%.\]
For $\beta\to0$, there is essentially no resilience and the transient impact behaves like permanent impact. Hence, all liquidation strategies have the same cumulative (across agents) impact cost and the cost of anarchy tends to zero. For $\beta\to\infty$, the behavior is akin to temporary price impact with no block costs. Agents unwind their inventory at a constant rate in equilibrium, and that is also the central planner's limiting strategy. The impact costs tend to zero in either case, $\lim_{\beta\uparrow\infty}\mathsf{PIC}_N(x) =\lim_{\beta\uparrow\infty}\mathsf{PIC}_1(x)=0$. They do so at the same rate, resulting in $\lim_{\beta\uparrow\infty} \mathsf{CoA}_N =0$. 

\subsection{Cost of Predation}\label{se:cost.predation}

\begin{figure}[htb]
  \centering
  \includegraphics[scale=0.21]{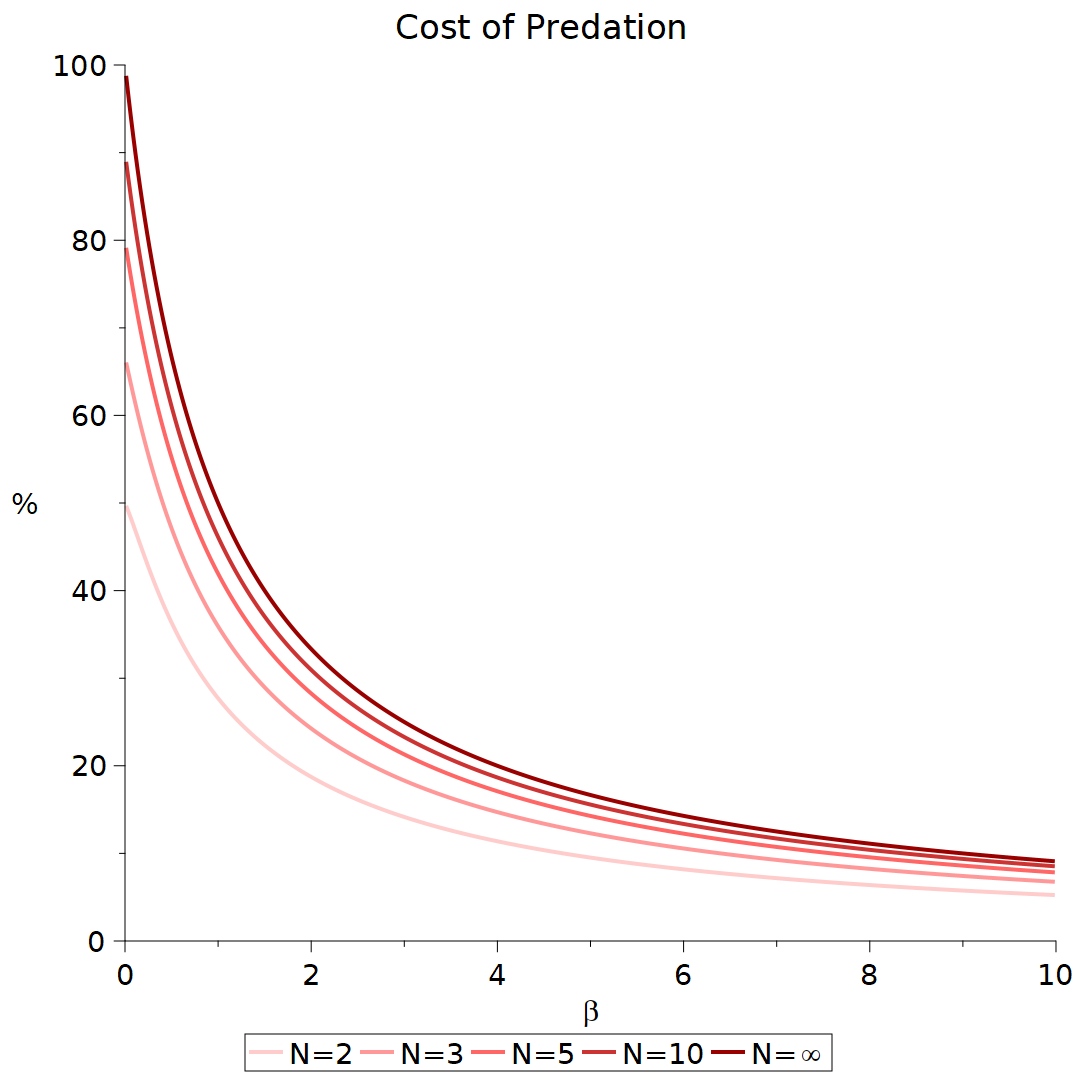}
  \hspace{0.2cm}
  \includegraphics[scale=0.21]{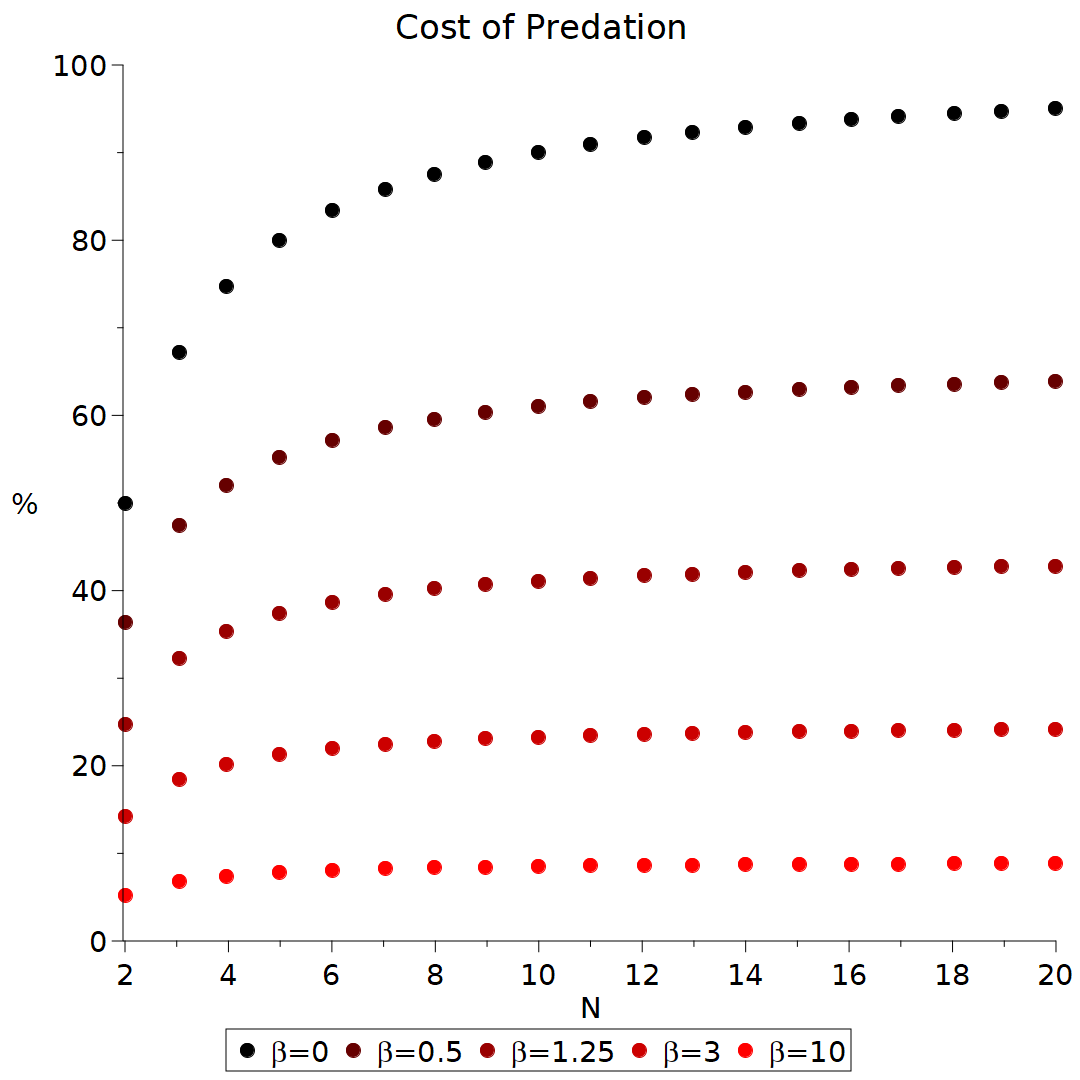}
\caption{Cost of Predation $\mathsf{CoP}_N$ of \eqref{eqn:CoP} when $T=1$. This is the percent increase (over the $N=1$ case) in impact cost for the liquidator when there are $N-1$ predators in the market. (Left Panel) Illustration as a function of $\beta>0$ for various population sizes $N$.  (Right Panel) Illustration as a function of $N$ for several values of $\beta$.}
  \label{fig:incr.sys.imp.cost2}
\end{figure}

Suppose now that there is a single ``liquidator'' in the market that must unwind some non-zero inventory $x$. We investigate what happens when $N-1$ ``predators''---i.e., agents with zero initial inventory---are introduced to the system (cf.\ \cite{PedersenBrunnermeier.05, Schoneborn.08}). In this case, the mean inventory becomes $\overline{x}=x/N$ and we can compare the impact cost in the absence of other traders with the cost faced in an equilibrium with predators. We find the liquidator's impact cost, $\mathsf{LIC}_N(x)$ for $N\geq2$ by substituting $x^i=x$ and $\overline{x}=x/N$ into the impact cost of Corollary \ref{cor:eq.cost.block.cost},
\begin{align*}
        \mathsf{LIC}_N(x)&:=\lambda\Bigg[ \frac{ (N-1)}{N(\beta T+1)}\\
        &\quad + \frac{ N (N+1)\left( \left(\left(\beta T +\frac{1}{2}\right) (N+1) +3\right) e^{\frac{2 \left(N +1\right) \beta  T}{N -1}}-\frac{2 \left(N -1\right)}{N^2} \left(N e^{\frac{\left(N +1\right) \beta  T}{N -1}}+\frac{1}{4}\right) \right)}{\left(N \left(\left(\beta T +1\right)(N+1) + 2\right) e^{\frac{\left(N +1\right) \beta  T}{N -1}}-(N -1)\right)^{2}}\Bigg]
    x^2.
\end{align*}

When $N=1$ the liquidator's impact cost and the population impact cost coincide, $\mathsf{LIC}_1(x) =\lambda x^2 (\beta T+2)^{-1}=\mathsf{PIC}_1(x)$. We define the (relative, excess) \textit{cost of predation}, $\mathsf{CoP}_N$ as the percent increase in the liquidator's impact cost due to the presence of $N-1$ predators,
\begin{equation}\label{eqn:CoP}
\mathsf{CoP}_N :=\left[\frac{\mathsf{LIC}_N(x)}{\mathsf{LIC}_1(x)}-1\right]\cdot 100\%.
\end{equation}
Once again, this cost depends only on $\beta$ and $N$. We illustrate the dependence on these parameters in Figure \ref{fig:incr.sys.imp.cost2} where we see that the cost is increasing in $N$ and decreasing in $\beta$. 

By rearranging, we can decompose the cost of predation into two parts,
\begin{align}\label{eq:eq:Cop.decomp}
\mathsf{CoP}_N = \left[\frac{(N-1)(\beta T+2)}{N(\beta T+1)}-\frac{N-1}{N}\right]\cdot 100\%+\frac{1}{N}\mathsf{CoA}_N. 
\end{align} 
Using this expression, we see that the limiting behavior is
\[\lim_{N\uparrow\infty} \mathsf{CoP}_N = (1+\beta T)^{-1}\cdot 100\%, \ \ \ \text{while} \ \ \ \lim_{\beta\uparrow\infty} \mathsf{CoP}_N = 0\% \ \ \ \text{and} \ \ \ \lim_{\beta\downarrow0} \mathsf{CoP}_N = \frac{N-1}{N} \cdot 100\%.\]
From~\eqref{eq:eq:Cop.decomp} we also see immediately that $\mathsf{CoP}_N>0$ for all $N\geq2$; that is, the presence of predators is always costly. While this seems natural, it should be contrasted with the result in \cite[Corollary~3.3]{SchiedZhang.17}. There, using the Almgren--Chriss model with temporary and permanent (but no transient) impact, the ``predator'' indeed acts as a predator when the permanent impact parameter is large enough, but acts as a liquidity provider when the permanent impact parameter is small relative to the temporary impact parameter.

\begin{remark}
    A similar analysis can be conducted when there are $m\geq 1$ liquidators seeking to liquidate the \emph{same} inventory $x$. The mean is then replaced by $\overline{x}=\frac{m}{N}x$ and we can compare $\mathsf{LIC}_{1}(x)$ to the liquidator's impact cost with $N-m$ predators and $m$ liquidators
    \begin{align*}\mathsf{LIC}_{N,m}(x)&:= \lambda\Bigg[\frac{ m(N-m) }{N(\beta T+1)}\\
        & + \frac{ m^2 N (N+1)\left( \left(\left(\beta T +\frac{1}{2}\right) (N+1) +3\right) e^{\frac{2 \left(N +1\right) \beta  T}{N -1}}-\frac{2 \left(N -1\right)}{N^2} \left(N e^{\frac{\left(N +1\right) \beta  T}{N -1}}+\frac{1}{4}\right) \right)}{\left(N \left(\left(\beta T +1\right)(N+1) + 2\right) e^{\frac{\left(N +1\right) \beta  T}{N -1}}-(N -1)\right)^{2}}\Bigg]x^2.
    \end{align*}
   This expression is quadratic in the parameter $m$ and the limiting behavior of the Cost of Predation in this setting will depend on the relative scaling of $m$ and $N$. For instance, in the extreme case $m=N$ we have $\mathsf{LIC}_{N,N}(x)=N\mathsf{PIC}_N(x)$.
\end{remark}

\section{Conclusion}\label{sec:conclusion}

We have derived the equilibria for $N$-player games with transient price impact, with and without regularization by an instantaneous cost on the trading rate. The equilibria are obtained in closed form. For the unregularized case, we have shown that an equilibrium exists only when a particular, time-dependent block cost is added. The equilibrium block cost is explained as the limit of the equilibrium instantaneous costs when their parameter $\eps\to0$. The fact that these costs do not vanish in the limit reflects that strategies become aggressive in the presence of competition. The limiting model is particularly tractable and we have given simple expressions for the impact costs of anarchy and predation. The presence of
predators is costly to the liquidator, with the cost stabilizing as the number of predators increases.

Our results give rise to several follow-up questions. First, one can ask if the aforementioned observations are specific to exponential kernels. We hope to show that similar results hold for a broad class of regular decay kernels, whereas singular kernels have a different behavior. In particular, because singular kernels charge an infinite cost to block trades, equilibria may exist without additional regularization. Second, the limit $N\to\infty$ gives rise to mean-field games (e.g., \cite{CardaliaguetLehalle.18}). It turns out that the two limits $N\to\infty$ and $\eps\to0$ do not commute, and we will relate the various limits in future research. Third, the present game uses the full information setup, like the vast majority of the literature. One can ask what happens, for instance, if agents do not know their competitors' initial inventories. Results in that direction are scarce; see \cite{MoallemiParkVanRoy.12} for a model where all impact is permanent and \cite{CasgrainJaimungal.19,CasgrainJaimungal.20} for a mean-field model. One can expect that in a partial information setup, traders' strategies reflect an incentive to hide the trading motive, as seen in \cite{MoallemiParkVanRoy.12}. Finally, an interesting extension is to allow for alpha signals and time-varying liquidity, and to study how these features alter the analysis.

\appendix
\section*{Appendix}

\section{Notational Remarks}\label{se:notation}

\paragraph{Integration.} We write $\int_a^b f\,d\mu$ to denote the integral $\int_{[a,b]}f\,d\mu$ of a function $f$ against a measure $\mu$ on $[a,b]\subset \mathbb{R}$, including any atoms $\mu$ may have at $a$ or $b$. When the left or right endpoint is not included, we write $\int_{a+}^bf\,d\mu :=\int_{(a,b]}f\,d\mu$ or $\int_{a}^{b-}f\,d\mu :=\int_{[a,b)}f\,d\mu$, respectively.

\paragraph{Spaces.} As usual, $L^2[0,T]$ is the Hilbert space of (equivalence classes of) real-valued square-integrable functions on $[0,T]$ with inner product
\[\langle v,w\rangle_{L^2} = \int_0^Tv_tw_t\,dt.\]
Similarly, $H^1[0,T]$ is the Sobolev space of functions $v\in L^2[0,T]$ that admit weak derivatives $\dot{v}\in L^2[0,T]$. It is a Hilbert space with inner product
\[\langle v,w\rangle_{H^1}=\langle v,w\rangle_{L^2} +\langle \dot{v},\dot{w}\rangle_{L^2}.\]
If $\mathbb{H}$ is any Hilbert space we write $\mathbb{H}^{\times N}$ to denote its $N$-fold Cartesian product.

\section{Reminder on Gateaux Derivative and $\Gamma$-Convergence}\label{app:opt.theory}

For ease of reference, this section collects some standard definitions and results to be used in the proofs below. 
Let $\mathbb{H}$ be a real separable Hilbert space with inner product $\langle\cdot,\cdot\rangle_\mathbb{H}$. For a real-valued function $F$ on $\mathbb{H}$ we define the \textit{Gateaux differential} of~$F$ at~$h\in \mathbb{H}$ in the direction $\eta\in \mathbb{H}$ by 
\[\delta_\eta F(h) = \lim_{\epsilon\to0} \frac{F(h+\epsilon\eta) - F(h)}{\epsilon} \]
when the limit exists. If the Gateaux differential exists for all directions $\eta \in\mathbb{H}$ and 
the mapping $\eta \mapsto \delta_\eta F(h)$ is a continuous linear functional on $\mathbb{H}$ then
we say that~$F$ is \textit{Gateaux differentiable} at~$h$ and call $\delta_\cdot F(h)$ the \textit{Gateaux derivative} of~$F$ at~$h$. %
By Riesz' representation theorem we can identify the derivative with an element $DF(h)\in \mathbb{H}$,
    \[\delta_\eta F(h) = \langle DF(h), \eta \rangle_{\mathbb{H}}, \ \ \ \forall\eta\in\mathbb{H}.\]
The following result is standard (see, e.g., \cite[Theorem 3.24]{Peypouquet.15}).
\begin{proposition}
If $F:\mathbb{H}\to \mathbb{R}\cup \{\infty\}$ is a proper\footnote{We say that $F:\mathbb{H}\to \mathbb{R}\cup \{\infty\}$ is proper if $\{h\in\mathbb{H}: F(h)<\infty\} \not=\emptyset$.} convex function with Gateaux derivative at every $h\in\mathbb{H}$ then $h^*$ is a minimizer of $F$ if and only if $\delta_\eta F(h^*) = 0$ for all $\eta \in \mathbb{H}$.
\end{proposition}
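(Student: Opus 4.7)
The plan is to prove both directions using the standard convex-analysis trick of reducing to one dimension via the slice $\phi_\eta(t) := F(h^* + t\eta)$, which inherits convexity from $F$ on its effective domain.

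For the sufficiency direction, suppose $\delta_\eta F(h^*) = 0$ for all $\eta \in \mathbb{H}$. Fix an arbitrary $h$ and set $\eta := h - h^*$; I may assume $h \in \mathrm{dom}(F)$, for otherwise $F(h) = +\infty \geq F(h^*)$ trivially. Then $\phi_\eta$ is a finite convex function on $[0,1]$, and the slope inequality for convex functions gives $\phi_\eta(1) - \phi_\eta(0) \geq (\phi_\eta)'_+(0) = \delta_\eta F(h^*) = 0$. Rearranging yields $F(h) \geq F(h^*)$, establishing that $h^*$ is a minimizer.

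For the necessity direction, assume $h^*$ minimizes $F$. For any $\eta \in \mathbb{H}$ and any $\epsilon > 0$, minimality gives $(F(h^* + \epsilon\eta) - F(h^*))/\epsilon \geq 0$, and passing to the limit yields $\delta_\eta F(h^*) \geq 0$. Applying the same argument with $-\eta$ in place of $\eta$ yields $\delta_{-\eta} F(h^*) \geq 0$. To conclude $\delta_\eta F(h^*) = 0$, I invoke the relation $\delta_{-\eta}F(h^*) = -\delta_\eta F(h^*)$ which holds at points of Gateaux differentiability of a convex function, so that $\delta_\eta F(h^*) \leq 0$ follows; combined with the reverse inequality this gives equality.

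The main obstacle is the last step of the necessity argument: for a general convex $F$, the one-sided Gateaux differential $\delta_\eta F$ is only sublinear in $\eta$, and one only has $\delta_\eta F(h^*) + \delta_{-\eta} F(h^*) \geq 0$ in general. The identity $\delta_{-\eta}F(h^*) = -\delta_\eta F(h^*)$ requires that the one-dimensional convex slice $\phi_\eta$ have matching left and right derivatives at $0$, i.e.\ genuine differentiability rather than mere existence of a right-hand limit. I would handle this by adopting the standard convex-analysis convention (as in the Peypouquet reference) that Gateaux differentiability of a convex function at a point means the directional derivative map is odd (equivalently, linear on lines through the origin), which is automatic whenever each slice $\phi_\eta$ is differentiable at $0$. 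Under this convention the reduction to the one-dimensional slope inequalities closes both implications cleanly.
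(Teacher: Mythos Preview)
The paper does not supply its own proof of this proposition; it simply records the result as standard and cites Peypouquet's textbook. Your argument is the standard one and is correct. You are also right to flag the subtlety in the necessity direction: under the paper's literal one-sided definition $\delta_\eta F(h) = \lim_{\epsilon\downarrow 0}\epsilon^{-1}(F(h+\epsilon\eta)-F(h))$, the identity $\delta_{-\eta}F(h^*) = -\delta_\eta F(h^*)$ is not automatic for convex $F$ (witness $F(x)=|x|$ at $x=0$), and the ``only if'' direction requires the two-sided reading of Gateaux differentiability that is standard in the cited reference and that you correctly adopt to close the argument.
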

In particular, if $F$ is a proper convex function with a %
Gateaux derivative at every $h\in \mathbb{H}$, then $h^*$ is a minimizer of $F$ if and only if $DF(h^*) =0$. We will use this fact frequently.

Another notion used below is $\Gamma$-convergence, for which \cite{DalMaso.93} is a standard reference.
\begin{definition}\label{def:gamma.conv}
    We say that a sequence $F_n:\mathbb{H}\to\mathbb{R}\cup\{\infty\}$, $n\geq0$, $\Gamma$-converges to a function $F:\mathbb{H}\to\mathbb{R}\cup\{\infty\}$ (and write $F_n\xrightarrow{\Gamma} F$) if:
    \begin{enumerate}
        \item For all $h\in\mathbb{H}$ and all sequences $(h_n)_{n\geq0}$ with $h_n\to h$,
        \[F(h)\leq \liminf_{n\to\infty} F_n(h_n).\]
        \item For all $h\in \mathbb{H}$ there exists a sequence $(h_n)_{n\geq0}$ with $h_n\to h$ and
        \[F(h)\geq \limsup_{n\to\infty} F_n(h_n).\]
    \end{enumerate}
\end{definition}

The following result (see, e.g., \cite[Proposition 7.18]{DalMaso.93}) connects this type of convergence to the consistency of minimizers.

\begin{theorem}[Fundamental Theorem of $\Gamma$-Convergence\footnote{Other common variants of this theorem include an equicoercivity assumption on $(F_n)_{n\geq0}$ to imply the convergence of the minimum values.
}]\label{thm:fund.thm.gamma.conv}
Suppose that $F_n\xrightarrow{\Gamma} F$ and $h^*_n\in\argmin_{h\in\mathbb{H}}F_n(h)$, $n\geq0$. Then every limit point of $(h^*_n)_{n\geq0}$ is a minimizer of $F$. Moreover, if $(h^*_n)_{n\geq0}$ converges, so do the minimum values,
\[\lim_{n\to\infty} \inf_{h\in\mathbb{H}}F_n(h)=\inf_{h\in\mathbb{H}}F(h).\]
\end{theorem}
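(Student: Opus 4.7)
The plan is to exploit the two inequalities built into the definition of $\Gamma$-convergence: the liminf inequality to control $F$ at any limit point of the minimizers from below, and the existence of a recovery sequence to control $F_n$ at arbitrary test points from above. The proof splits cleanly into the two assertions in the statement.

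For the first assertion, fix a subsequence $h^*_{n_k}\to h^*$. By condition (i) of \cref{def:gamma.conv} applied to the sequence obtained by padding $h^*_{n_k}$ with $h^*$ in the non-subsequence indices (or directly along the subsequence), we get
\[
F(h^*)\;\leq\;\liminf_{k\to\infty} F_{n_k}(h^*_{n_k}).
\]
To compare this with the value of $F$ at an arbitrary competitor $h\in\mathbb{H}$, invoke condition (ii) to produce a recovery sequence $h_n\to h$ with $\limsup_n F_n(h_n)\leq F(h)$. The minimizing property of $h^*_n$ gives $F_n(h^*_n)\leq F_n(h_n)$ for every $n$, whence $F_{n_k}(h^*_{n_k})\leq F_{n_k}(h_{n_k})$ along the subsequence. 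Chaining the three inequalities yields $F(h^*)\leq F(h)$ for every $h\in\mathbb{H}$, so $h^*\in\argmin F$.

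For the second assertion, assume $h^*_n\to h^*$ (in $\mathbb{H}$), which by the first part is a minimizer of $F$. The same liminf inequality gives
\[
F(h^*)\;\leq\;\liminf_{n\to\infty} F_n(h^*_n)\;=\;\liminf_{n\to\infty}\inf_{h\in\mathbb{H}} F_n(h).
\]
For the matching limsup bound, pick any $h\in\mathbb{H}$ and an associated recovery sequence $h_n\to h$. Then $\inf_{\mathbb{H}} F_n = F_n(h^*_n)\leq F_n(h_n)$, so
\[
\limsup_{n\to\infty}\inf_{h\in\mathbb{H}} F_n(h)\;\leq\;\limsup_{n\to\infty} F_n(h_n)\;\leq\; F(h).
\]
Taking the infimum over $h\in\mathbb{H}$ on the right produces $\limsup_n \inf F_n \leq \inf F = F(h^*)$, and combining with the previous display gives the claimed equality $\lim_n \inf_{\mathbb{H}} F_n = \inf_{\mathbb{H}} F$.

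There is no real obstacle here; the only subtlety worth flagging is that in the first assertion one must be careful to apply the liminf inequality along the actual convergent subsequence (extracting a further subsequence if needed to ensure that $F_{n_k}(h^*_{n_k})$ itself achieves its liminf), and that the recovery sequence chosen for the test point $h$ is used at the indices $n_k$, which is legitimate because $\limsup_{k} F_{n_k}(h_{n_k})\leq \limsup_n F_n(h_n)\leq F(h)$. Everything else is a direct chaining of the two defining inequalities of $\Gamma$-convergence with the definition of $h^*_n$ as a minimizer.
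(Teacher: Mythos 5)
Your proof is correct: the chaining of the $\Gamma$-liminf inequality (applied along the convergent subsequence via the padding-to-a-full-sequence device, which is the right way to handle the fact that \cref{def:gamma.conv} is stated for full sequences) with a recovery sequence for an arbitrary competitor and the minimality of $h^*_n$ is exactly the standard textbook argument, and the second assertion follows from the same two inequalities combined with the first part. The paper does not prove this statement itself but cites \cite[Proposition 7.18]{DalMaso.93}, so there is no alternative route to compare against; the only cosmetic remark is that the extraction of a further subsequence you mention at the end is unnecessary, since the liminf/limsup chain already closes as written.
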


\section{Proofs for Section \ref{sec:prob.formulation}}\label{app:proofsProblemformulationAndUniqueness}

\subsection{Proposition \ref{prop:obj.func.rep}}\label{app:pf.obj.func.rep}

We can expand \eqref{eqn:orig.obj} to get
\begin{align*}J_{\cdot}(X^{i};\boldsymbol{X}^{-i})&=\mathbb{E}\bigg[\int_0^T P_{t-} dX_t^{i}+\frac{1}{2}\sum_{t\in[0,T]}  \Delta P_t \Delta X_t^{i} +\int_0^T I_{t-} dX_t^{i} + \frac{1}{2}\sum_{t\in[0,T]} \Delta I_t \Delta X_t ^{i}\\
&\quad \quad \quad \quad +(X_{0-}^iP_{0-}-X^{i}_TP_T)+C_{\cdot}(X^{i})\bigg].
\end{align*}
Using integration by parts,
\begin{align*}
    \int_0^T P_{t-} dX_t^{i}  &= P_TX_T^i-P_{0-}X_{0-}^i-\int_0^T X_{t-}^i dP_t -[X^i,P]_T
\end{align*}
and thus
\begin{align*}J_{\cdot}(X^{i};\boldsymbol{X}^{-i})&=\mathbb{E}\bigg[-\int_0^T X^{i}_{t-} dP_t - [X^{i},P]_T+\frac{1}{2}\sum_{t\in[0,T]}  \Delta P_t \Delta X_t^{i} \\
&\quad \quad \quad \quad +\int_0^T I_{t-} dX_t^{i} + \frac{1}{2}\sum_{t\in[0,T]} \Delta I_t \Delta X_t^{i}  +C_{\cdot}(X^{i})\bigg].
\end{align*}
Both $\int_0^T X^{i}_{t-} dP_t$ and $[X^{i},P]_T=\sum_{t\in[0,T]} \Delta X_t^{i} \Delta P_t=\int_0^T\Delta X^i_tdP_t$ are terminal values of martingales since $P$ is a c\`adl\`ag square-integrable martingale and $X^i$ is a bounded predictable process of essentially bounded variation. Hence, these terms vanish in expectation. 
Since these arguments hold for arbitrary admissible~$X^{i}$, we obtain the expression in \eqref{eqn:obj.impact.rep}  of Proposition \ref{prop:obj.func.rep}. 

To obtain the remaining representation in \eqref{eqn:obj.schied.rep}, we insert the form of $I$ into \eqref{eqn:obj.impact.rep} to get
\begin{align}\label{eqn:obj.intermediate.rep}J_{\cdot}(X^{i};&\boldsymbol{X}^{-i})
= \mathbb{E}\bigg[C_{\cdot}(X^{i})+\lambda \int_0^T\int_0^{t-} e^{-\beta (t-s)} \sum_{j=1}^NdX_s^{j} dX_t^{i} + \frac{\lambda}{2}\sum_{j=1}^N\sum_{t\in[0,T]}  \Delta X_t^{j} \Delta X_t ^{i}\bigg].%
\end{align}
Observe that by splitting and interchanging the order of integration, we can write
\begin{align*}\int_0^T\int_0^{t-} &e^{-\beta (t-s)} dX_s^{i} dX_t^{i}\\
&=\frac{1}{2}\int_0^T\int_0^{t-} e^{-\beta (t-s)} dX_s^{i} dX_t^{i}+\frac{1}{2}\int_0^T\int_0^{t-} e^{-\beta (t-s)} dX_s^{i} dX_t^{i}\\
&=\frac{1}{2}\int_0^T\int_0^{t-} e^{-\beta (t-s)} dX_s^{i} dX_t^{i}+\frac{1}{2}\int_0^T\int_{s}^T e^{-\beta (t-s)} dX_t^{i} dX_s^{i}-\frac{1}{2}\sum_{s\in[0,T]}(\Delta X^i_s)^2\\
&=\frac{1}{2}\int_0^T\int_0^T e^{-\beta |t-s|} dX_s^{i} dX_t^{i}-\frac{1}{2}\sum_{t\in[0,T]}(\Delta X^i_t)^2.
\end{align*}
Substituting this expression into \eqref{eqn:obj.intermediate.rep} completes the proof.
\qed

\subsection{Proposition \ref{prop:nash.eq.unique}}\label{app:pf.nash.eq.unique}

Suppose there are two distinct equilibria $\boldsymbol{X}^0$ and $\boldsymbol{X}^1$.
Define
    \[\boldsymbol{X}^\alpha = \alpha \boldsymbol{X}^1 +(1-\alpha)\boldsymbol{X}^0, \ \ \ \alpha\in[0,1], \]
and
\begin{equation*}V(\alpha) = \sum_{i=1}^N\left(J_\cdot(X^{\alpha,i};\boldsymbol{X}^{0,-i})+J_\cdot(X^{1-\alpha,i};\boldsymbol{X}^{1,-i})\right).\end{equation*}
A consequence of Lemma \ref{lem:strict.convex} is that $V$ is strictly convex. Moreover, by the Nash equilibrium definition,
\[V(\alpha)\geq \sum_{i=1}^N \left(J(X^{0,i};\boldsymbol{X}^{0,-i})+J(X^{1,i};\boldsymbol{X}^{1,-i}) \right),\]
and this lower bound is attained by setting $\alpha=0$. Next, we will differentiate $V$ at $0$ to obtain the contradiction $\dot{V}(0)<0$.

To facilitate this we deal with each term in the expression for $J_{\cdot}$ separately. Let 
\begin{align*}\mathfrak{J}_1^i(\alpha) &= \frac{\lambda}{2}\int_0^T\int_0^T e^{-\beta |t-s|} dX_s^{\alpha,i} dX_t^{\alpha, i}+\frac{\lambda}{2}\int_0^T\int_0^T e^{-\beta |t-s|} dX_s^{1-\alpha,i} dX_t^{1-\alpha, i}.
\end{align*}
By differentiating and setting $\alpha=0$ we find
\begin{align*}
    \dot{\mathfrak{J}}^i_1(0) &=-\lambda\int_0^T\int_0^T e^{-\beta |t-s|} d(X_s^{1,i} -X_s^{0, i})d(X_t^{1,i} -X_t^{0, i}).
\end{align*}
Summing over $i$,
\begin{equation} \label{eqn:J1.sum} \sum_{i=1}^N\dot{\mathfrak{J}}^i_1(0)=-\lambda\sum_{i=1}^N\int_0^T\int_0^T e^{-\beta |t-s|} d(X_s^{1,i} -X_s^{0, i})d(X_t^{1,i} -X_t^{0, i}).
\end{equation}
Next, let
\begin{align*}
    \mathfrak{J}_2^i(\alpha) &= \lambda \int_0^T\int_0^{t-} e^{-\beta (t-s)} \sum_{j\not=i}dX_s^{0,j} dX_t^{\alpha, i}+\lambda \int_0^T\int_0^{t-} e^{-\beta (t-s)} \sum_{j\not=i}dX_s^{1,j} dX_t^{1-\alpha, i}.
\end{align*}
Differentiating,
\begin{align*}\dot{\mathfrak{J}}_2^i(\alpha)&=\lambda \int_0^T\int_0^{t-} e^{-\beta (t-s)} \sum_{j\not=i}dX_s^{0,j} d(X^{1,i}-X^{0,i})\\
&\quad -\lambda \int_0^T\int_0^{t-} e^{-\beta (t-s)} \sum_{j\not=i}dX_s^{1,j} d(X^{1,i}- X^{0,i})\\
&=-\lambda \int_0^T\int_0^{t-} e^{-\beta (t-s)} \sum_{j\not=i}d(X_s^{1,j} - X_s^{0,j})d(X^{1,i}_t- X^{0,i}_t),
\end{align*}
which is constant in $\alpha$. If we sum over $i$ and manipulate the terms,

\begin{align*}\sum_{i=1}^N\dot{\mathfrak{J}}_2^i(0)&= -\lambda \sum_{i=1}^N\sum_{j\not=i}\int_0^T\int_0^{t-} e^{-\beta (t-s)} d(X_s^{1,j} - X_s^{0,j}) d(X^{1,i}_t- X^{0,i}_t)\\
&= -\frac{\lambda}{2} \sum_{i=1}^N\sum_{j\not=i}\int_0^T\int_0^{t-} e^{-\beta (t-s)} d(X_s^{1,j} - X_s^{0,j}) d(X^{1,i}_t- X^{0,i}_t)\\
& \quad -\frac{\lambda}{2} \sum_{i=1}^N\sum_{j\not=i}\int_0^T\int_{s+}^T e^{-\beta (t-s)}  d(X^{1,i}_t- X^{0,i}_t) d(X_s^{1,j} - X_s^{0,j})\\
&= -\frac{\lambda}{2} \sum_{i=1}^N\sum_{j\not=i}\int_0^T\int_0^T e^{-\beta |t-s|} d(X_s^{1,j} - X_s^{0,j}) d(X^{1,i}_t- X^{0,i}_t)\\
&\quad +\frac{\lambda}{2}\sum_{i=1}^N\sum_{j\not=i}\sum_{t\in[0,T]}\Delta(X_t^{1,j} - X_t^{0,j})\Delta(X_t^{1,i} - X_t^{0,i}).
\end{align*}
Due to the positive definiteness of the kernel $e^{-\beta |t-s|}$ (in the sense of Bochner) we have %
\[-\sum_{i}\sum_{j\not=i}\int_0^T\int_0^Te^{-\beta|t-s|}dM_s^jdM_t^i\leq \sum_{i}\int_0^T\int_0^Te^{-\beta|t-s|}dM_s^idM_t^i\] for arbitrary Lebesgue--Stieltjes integrators $M^i$. Applying this,
\begin{align}\label{eqn:J2.sum.bd}
\sum_{i=1}^N\dot{\mathfrak{J}}_2^i(0)&\leq  \frac{\lambda}{2} \sum_{i=1}^N\int_0^T\int_0^T e^{-\beta |t-s|} d(X_s^{1,i} - X_s^{0,i}) d(X^{1,i}_t- X^{0,i}_t)\\
&\quad +\frac{\lambda}{2}\sum_{i=1}^N\sum_{j\not=i}\sum_{t\in[0,T]}\Delta(X_t^{1,j} - X_t^{0,j})\Delta(X_t^{1,i} - X_t^{0,i}).\nonumber
\end{align}
Next, we treat the jumps. Let
\begin{align*}\mathfrak{J}_3^i(\alpha)&=\frac{\lambda}{2}\sum_{j\not=i}\sum_{t\in[0,T]} \Delta X_t ^{\alpha, i} \Delta X_t^{0,j}+\frac{\lambda}{2}\sum_{j\not=i}\sum_{t\in[0,T]} \Delta X_t ^{1-\alpha, i} \Delta X_t^{1,j}.
\end{align*}
Differentiating gives
\begin{align*}\dot{\mathfrak{J}}_3^i(\alpha)&=-\frac{\lambda}{2}\sum_{j\not=i}\sum_{t\in[0,T]} \Delta \left( X_t^{0,i}-X_t ^{1, i}\right)\Delta X_t^{0,j}+\frac{\lambda}{2}\sum_{j\not=i}\sum_{t\in[0,T]} \Delta \left(X_t^{0,i}-X_t ^{1, i}\right) \Delta X_t^{1,j}.
\end{align*}
Then, by collecting terms and summing over $i$,
\begin{equation}\label{eqn:J3.sum.bd}\sum_{i=1}^N\dot{\mathfrak{J}}_3^i(0)=-\frac{\lambda}{2}\sum_{i=1}^N\sum_{j\not=i}\sum_{t\in[0,T]} \Delta \left(X_t^{1,i}-X_t^{0, i}\right)\Delta \left(X_t^{1,j}-X_t^{0,j}\right).
\end{equation}

It remains to treat the terms arising from the cost $C_{\cdot}$. Since $\boldsymbol{X}^0$ and $\boldsymbol{X}^1$ are assumed to be Nash equilibria they (and their convex combinations) must have a finite cost (as measured by $C_{\cdot}$) almost surely. Therefore, we may ignore the characteristic function terms. We only consider the case of $C_B(\cdot)$ as the proof with $C_A(\cdot)$ or $C_{A'}(\cdot)$ is similar. Let
\begin{align*}\mathfrak{J}_4^i(\alpha) &=C_B(X^{\alpha,i})+C_B(X^{1-\alpha,i})
=\sum_{t\in[0,T]} \frac{\vartheta_t}{2}(\Delta X_t^{\alpha, i})^2+\sum_{t\in[0,T]} \frac{\vartheta_t}{2}(\Delta X_t^{1-\alpha, i})^2.
\end{align*}
Differentiating and setting $\alpha=0$,
\begin{align*}\dot{\mathfrak{J}}_4^i(0)&=\sum_{t\in[0,T]} \vartheta_t\left(\Delta X_t^{0,i}(\Delta X_t^{1,i}-\Delta X_t^{0,i}) -\Delta X_t^{1,i}(\Delta X_t^{1,i}-\Delta X_t^{0,i})\right)\nonumber\\
&=-\sum_{t\in[0,T]} \vartheta_t(\Delta X_t^{1,i}-\Delta X_t^{0,i})^2.\nonumber
\end{align*}
Summing over $i$, we have
\begin{align}\sum_{i=1}^N\dot{\mathfrak{J}}_4^i(0)=-\sum_{i=1}^N\sum_{t\in[0,T]} \vartheta_t(\Delta X_t^{1,i}-\Delta X_t^{0,i})^2\leq 0. \label{eqn:J4.sum.bd}
\end{align}
Aggregating the above expressions we recover $V(\alpha)$,
\[V(\alpha) = \mathbb{E}\left[\sum_{i=1}^N(\mathfrak{J}_1^i(\alpha)+\mathfrak{J}_2^i(\alpha)+\mathfrak{J}_3^i(\alpha)+\mathfrak{J}_4^i(\alpha))\right].\]
By an application of the dominated convergence theorem (or a direct verification) we may pass the derivative under the expectation to obtain
\begin{align*}\dot{V}(0) &= \mathbb{E}\left[\sum_{i=1}^N(\dot{\mathfrak{J}}_1^i(0)+\dot{\mathfrak{J}}_2^i(0)+\dot{\mathfrak{J}}_3^i(0)+\dot{\mathfrak{J}}_4^i(0))\right]\\
&\leq -\mathbb{E}\left[\frac{\lambda}{2}\sum_{i=1}^N\int_0^T\int_0^T e^{-\beta |t-s|} d(X_s^{1,i} -X_s^{0, i})d(X_t^{1,i} -X_t^{0, i})\right].
\end{align*}
The last inequality follows from \eqref{eqn:J1.sum}, \eqref{eqn:J2.sum.bd}, \eqref{eqn:J3.sum.bd}, and \eqref{eqn:J4.sum.bd}. Finally, since $\boldsymbol{X}^0\not=\boldsymbol{X}^1$ and the kernel $e^{-\beta |t-s|}$ is strictly positive definite, we conclude $\dot{V}(0)<0$. This contradiction completes the proof.
\qed

\section{Proofs for Section \ref{sec:N.inst.cost}}\label{app:N.inst.cost}

\subsection{Equilibrium with Terminal Inventory Penalty}

\subsubsection{Lemma \ref{lem:term.pen.ODE}}
Taking the Gateaux differential of $\cJ_A(\cdot; \boldsymbol{v}^{-i})$ in~\eqref{eqn:cJ.A} in an arbitrary direction $\eta\in L^2[0,T]$,
\begin{align*}
    \delta_\eta \cJ_A(v^{i};\boldsymbol{v}^{-i})&=\int_0^T(\delta_\eta I_tv_t^{i}+\eta_tI_t+\varepsilon v_t^{i}\eta_t)dt+\varphi X_T^{i}\int_0^T\eta_tdt,
\end{align*}
where
\begin{equation*}
    \delta_\eta I_t=\int_0^t e^{-\beta(t-s)}\lambda\eta_sds, \ \ t\in[0,T].
\end{equation*}
By changing the order of integration,
\begin{align*}
    \int_0^T\delta_\eta I_tv_t^{i}dt&=\int_0^T\int_0^t\lambda e^{-\beta(t-s)}v_t^{i}\eta_sdsdt
    =\int_0^T\int_s^T\lambda e^{-\beta(t-s)}v_t^{i}\eta_sdtds.
\end{align*}
Swapping the roles of the integration variables $s$ and $t$, we get the expression
\begin{equation*}
\int_0^T\delta_\eta I_tv_t^{i}dt=\int_0^T\int_t^T\lambda e^{-\beta(s-t)}v_s^{i}ds\eta_tdt,
\end{equation*}
which leads to the representation
\begin{equation*}
    \delta_\eta \mathcal{J}_A (v^{i};\boldsymbol{v}^{-i})=\int_0^T[Y_t^{i}+I_t+\varepsilon v_t^{i}+\varphi X_T^{i}]\eta_tdt
\end{equation*}
for
\begin{equation*}%
    Y_t^{i}:=\int_t^T\lambda e^{-\beta (s-t)}v_s^{i} ds, \ \ \ t\in [0,T].
\end{equation*}

From this, we see that $\eta \mapsto \delta_\eta \mathcal{J}_A(v^{i};\boldsymbol{v}^{-i})$ is a well-defined continuous linear operator on $L^2[0,T]$ for every $v^{i}\in L^2[0,T]$. Letting
\begin{equation*}
   D \mathcal{J}_A(v^{i};\boldsymbol{v}^{-i})=\left(Y_t^{i}+I_t+\varepsilon v_t^{i}+\varphi X_T^{i}\right)\in L^2[0,T] 
\end{equation*}
we have the following form of the derivative in terms of the usual $L^2[0,T]$ inner product,
\[\delta_\eta \mathcal{J}_A(v^{i};\boldsymbol{v}^{-i})=\langle D \mathcal{J}_A(v^{i};\boldsymbol{v}^{-i}),\eta\rangle_{L^2[0,T]}.\]
By standard optimality theory in Hilbert spaces (see Appendix \ref{app:opt.theory}), an element $v^{i}\in L^2[0,T]$ is a minimizer of $\cJ_A(\cdot,\boldsymbol{v}^{-i})$ if and only if  \[D \mathcal{J}_A(v^{i};\boldsymbol{v}^{-i}) =0,\]
where the equality is to be understood in the $L^2$ sense. Equivalently, the representative $D \mathcal{J}_A(v^{i};\boldsymbol{v}^{-i})$ of the Gateaux derivative must be $0$ almost everywhere; that is,
\begin{equation}\label{eqn:foc.inst.cost}
    \boxed{Y_t^{i}+I_t+\varepsilon v_t^{i}=-\varphi X_T^{i},  \ \ \ \text{a.e.} \ \ \  t\in[0,T].}
\end{equation}
Since $Y^{i}$ and $I$ are continuous by definition, we have that any $v^{i}$ satisfying \eqref{eqn:foc.inst.cost} is equal almost everywhere to a continuous function. From here on, we identify any such $v^{i}\in L^2[0,T]$ with its continuous version. Then, by observing that $Y^{i}$ and $I$ are differentiable (and that $X_T^{i}$ is a constant), we have that any $v^{i}$ satisfying \eqref{eqn:foc.inst.cost} is also differentiable. The condition \eqref{eqn:foc.inst.cost} then implies that
\begin{equation}\label{eqn:foc.der.inst.cost}
    \boxed{\dot{Y}_t^{i}+\dot{I}_t+\varepsilon \dot{v}_t^{i}=0, \ \ \  t\in[0,T].}
\end{equation}

To form a Nash equilibrium we require the simultaneous optimality of each of the controls in the strategy profile $\boldsymbol{v}$. As a result, \eqref{eqn:foc.inst.cost} and \eqref{eqn:foc.der.inst.cost} must hold for all $i=1,\dots,N$. In this way, we arrive at the following equilibrium system of ODEs,
\begin{align*}
    0&=\dot{Y}^{i}_t+\dot{I}_t+\varepsilon \dot{v}_t^{i}, \ \ \ i=1,\dots,N,\\
    \dot{Y}^{i}_t&=\beta Y^{i}_t-\lambda v^{i}_t, \ \ \ i=1,\dots,N,\\
    \dot{I}_t&=-\beta I_t+\lambda\sum_{i=1}^Nv_t^{i},
\end{align*}
subject to the initial and terminal conditions
\begin{align*}
    I_0&=0,\\
    Y_T^{i}&=0, \ \ \ i=1,\dots, N,\\
    v_T^{i}&= -\varepsilon^{-1}\left[\varphi X_T^{i}+I_T\right],  \ \ \ i=1,\dots, N,
\end{align*}
where the last equation arises by taking $t\uparrow T$ in \eqref{eqn:foc.inst.cost}.

It is easy to check that the satisfaction of this system implies \eqref{eqn:foc.inst.cost} for all $i=1,\dots,N$. Hence, enforcing the ODEs or the concurrent satisfaction of \eqref{eqn:foc.inst.cost} for all $i=1,\dots, N$ is equivalent. To complete the proof, we rearrange the system of ODEs to recover the standard form reported in Lemma~\ref{lem:term.pen.ODE}. \qed

\subsubsection{Theorem \ref{thm:equil.term.pen}}\label{se:proof.of.equil.term.pen}

Let $\boldsymbol{F} = (I, Y^{1},\dots, Y^{N},v^{1},\dots, v^{N})^\top$. The system in Lemma \ref{lem:term.pen.ODE} can be written in matrix form,
\begin{equation}\label{eqn:term.pen.ode.matrix}
\dot{\boldsymbol{F}}_t=\mathsf{A} \boldsymbol{F}_t
\end{equation}
for 
\[\mathsf{A} = \begin{bmatrix}
    -\beta & 0 & 0 & \dots  & 0 & 0 & \lambda & \lambda  & \dots & \lambda & \lambda \\
    0 & \beta & 0 & \dots & 0 & 0 & -\lambda & 0 & \dots & 0 & 0 \\
     0 & 0 & \beta & \dots & 0 & 0 & 0 & -\lambda & \dots & 0 & 0\\
      \vdots & \vdots & \vdots & \ddots & \vdots & \vdots & \vdots & \vdots & \ddots &  \vdots & \vdots \\
     0 & 0 & 0 & \dots  & \beta & 0 & 0 &  0 & \dots & -\lambda & 0\\
     0 & 0 & 0 & \dots & 0 & \beta & 0 &  0 & \dots & 0 & -\lambda\\
     \frac{\beta}{\varepsilon} & -\frac{\beta}{\varepsilon} & 0 & \dots & 0 & 0 & 0 & -\frac{\lambda}{\varepsilon} & \dots  & -\frac{\lambda}{\varepsilon} & -\frac{\lambda}{\varepsilon} \\
     \frac{\beta}{\varepsilon} & 0 & -\frac{\beta}{\varepsilon} & \dots & 0 & 0 & -\frac{\lambda}{\varepsilon} & 0 & \dots  & -\frac{\lambda}{\varepsilon} & -\frac{\lambda}{\varepsilon} \\
    \vdots & \vdots & \vdots & \ddots & \vdots & \vdots & \vdots & \vdots & \ddots &  \vdots & \vdots \\
     \frac{\beta}{\varepsilon} & 0 & 0 & \dots & -\frac{\beta}{\varepsilon} & 0 & -\frac{\lambda}{\varepsilon} & - \frac{\lambda}{\varepsilon} & \dots  & 0 & -\frac{\lambda}{\varepsilon} \\
     \frac{\beta}{\varepsilon} & 0 & 0 & \dots & 0 & -\frac{\beta}{\varepsilon} & -\frac{\lambda}{\varepsilon} & - \frac{\lambda}{\varepsilon} & \dots  & -\frac{\lambda}{\varepsilon} & 0 \\
\end{bmatrix}.\]
It can be directly verified that the matrix $\mathsf{A}$ has eigenvalues
\[z_1 = \frac{-\lambda(N-1)+\sqrt{(N-1)^2\lambda^2+4\beta \varepsilon (N+1) \lambda +4\beta^2\varepsilon^2}}{2\varepsilon},\]
\[z_2 = \frac{-\lambda(N-1)-\sqrt{(N-1)^2\lambda^2+4\beta \varepsilon (N+1) \lambda +4\beta^2\varepsilon^2}}{2\varepsilon},\]
\[z_{2+i} = \beta + \varepsilon^{-1} \lambda, \ \ \ \  i=1,\dots,N-1,\] and $z_{N+1+i}=0$ for $i=1,\dots,N$, with associated eigenvectors
\[\boldsymbol{q}_1=\frac{N\lambda}{z_1+\beta} \boldsymbol{e}_1-\sum_{j=1}^N\frac{\lambda}{z_1-\beta}\boldsymbol{e}_{1+j}+\sum_{j=1}^N\boldsymbol{e}_{N+1+j},\]
\[\boldsymbol{q}_2=\frac{N\lambda}{z_2+\beta} \boldsymbol{e}_1-\sum_{j=1}^N\frac{\lambda}{z_2-\beta}\boldsymbol{e}_{1+j}+\sum_{j=1}^N\boldsymbol{e}_{N+1+j},\]
\[\boldsymbol{q}_{2+i} = \varepsilon \boldsymbol{e}_2-\varepsilon \boldsymbol{e}_{2+i}-\boldsymbol{e}_{N+2}+\boldsymbol{e}_{N+2+i}, \ \ \ i=1,\dots,N-1,\]
\[\boldsymbol{q}_{N+1+i} = \frac{\lambda}{\beta} \boldsymbol{e}_1+\frac{\lambda}{\beta}\boldsymbol{e}_{1+i}+\boldsymbol{e}_{N+1+i}, \ \ \ i=1,\dots,N.\] 
Here, we write $\boldsymbol{e}_i$ for the $i$th Euclidean basis vector. These eigenvalues and eigenvectors define the columns of the fundamental matrix
\[\mathsf{Q}_t=\begin{bmatrix}
    \frac{N\lambda}{z_1+\beta}e^{z_1t} & \frac{N\lambda}{z_2+\beta}e^{z_2t} & 0 & 0 & \dots & 0 & \frac{\lambda}{\beta} & \frac{\lambda}{\beta} & \frac{\lambda}{\beta}  & \dots & \frac{\lambda}{\beta}\\
    -\frac{\lambda}{z_1-\beta}e^{z_1t} & -\frac{\lambda}{z_2-\beta}e^{z_2t} & \varepsilon e^{z_3t} & \varepsilon e^{z_3t} & \dots & \varepsilon e^{z_3t} & \frac{\lambda}{\beta} & 0 & 0 & \dots & 0 \\
    -\frac{\lambda}{z_1-\beta}e^{z_1t} & -\frac{\lambda}{z_2-\beta}e^{z_2t} & -\varepsilon e^{z_3t} & 0 & \dots & 0 & 0 & \frac{\lambda}{\beta} & 0 & \dots & 0\\
    -\frac{\lambda}{z_1-\beta}e^{z_1t} & -\frac{\lambda}{z_2-\beta}e^{z_2t} & 0  & -\varepsilon e^{z_3t} & \dots & 0 & 0 & 0 & \frac{\lambda}{\beta}  & \dots & 0\\
    \vdots & \vdots & \vdots  & \vdots & \ddots & \vdots & \vdots & \vdots  & \vdots  & \ddots & \vdots\\
    -\frac{\lambda}{z_1-\beta}e^{z_1t} & -\frac{\lambda}{z_2-\beta}e^{z_2t} & 0  & 0 & \dots & -\varepsilon e^{z_3t} & 0 & 0 & 0 & \dots  & \frac{\lambda}{\beta} \\
    e^{z_1t} & e^{z_2t} & -e^{z_3t} & -e^{z_3t} & \dots & -e^{z_3t} & 1 & 0 & 0 & \dots & 0\\
    e^{z_1t} & e^{z_2t} & e^{z_3t} & 0 & \dots & 0 & 0 & 1 & 0 & \dots & 0\\
    e^{z_1t} & e^{z_2t} & 0 & e^{z_3t} & \dots & 0 & 0 & 0 & 1 & \dots & 0\\
    \vdots & \vdots & \vdots & \vdots & \ddots & \vdots & \vdots & \vdots & \vdots & \ddots & \vdots\\
    e^{z_1t} & e^{z_2t} & 0 & 0 & \dots & e^{z_3t} & 0 & 0 & 0 & \ddots & 1\\
\end{bmatrix}.\]
Note that we have made use of the equality $z_3=z_{2+i}$ for $i=1,\dots,N-1$. Let  $\boldsymbol{c} = (c_1,\dots,c_{2N+1})^\top$ be a vector of constants. The fundamental solution to \eqref{eqn:term.pen.ode.matrix} takes the form
\begin{equation}\label{eqn:fund.sol.inst.cost}
    \boldsymbol{F}_t = \mathsf{Q}_t\boldsymbol{c} =c_1\boldsymbol{q}_1e^{z_1 t}+c_2\boldsymbol{q}_2e^{z_2 t}+\sum_{i=1}^{N-1} c_{2+i}\boldsymbol{q}_{2+i}e^{z_3 t}+\sum_{i=1}^N c_{N+1+i}\boldsymbol{q}_{N+1+i}.
\end{equation}
To solve for $\boldsymbol{c}$ we enforce the boundary conditions from Lemma \ref{lem:term.pen.ODE}. 
To begin, we need an expression for the inventory processes. By integrating the expressions for $v^{i}$ in \eqref{eqn:fund.sol.inst.cost} we have that $X^{1}$ is given by
\begin{equation}\label{eqn:x1}
    X^{1}_t = x^{1}+ \frac{c_1}{z_1}(e^{z_1 t}-1) + \frac{c_2}{z_2}(e^{z_2 t}-1) - (e^{z_{3} t}-1)\sum_{j=2}^{N}\frac{c_{1+j}}{z_3}  + c_{N+2}t,
\end{equation}
and
\begin{equation}\label{eqn:xi}
X^{i}_t = x^{i}+ \frac{c_1}{z_1}(e^{z_1 t}-1) + \frac{c_2}{z_2}(e^{z_2 t} -1)+\frac{c_{1+i}}{z_3}(e^{z_{3} t}-1)+ c_{N+1+i}t, \ \ \ i=2,\dots, N.
\end{equation}
Then, writing the boundary conditions in terms of $\boldsymbol{c}$ we get the system of $2N+1$ equations,

\begin{equation*}
    \begin{cases}
    c_1\frac{N\lambda}{z_1+\beta}+c_2\frac{N\lambda}{z_2+\beta}+\frac{\lambda}{\beta}\sum_{j=1}^Nc_{N+1+j}=0,\\
    \\
    -c_1\frac{\lambda}{z_1-\beta}e^{z_1 T}-c_2\frac{\lambda}{z_2-\beta}e^{z_2 T}+\varepsilon e^{z_3 T}\sum_{j=2}^Nc_{1+j}+\frac{\lambda}{\beta}c_{N+2}=0,\\
    \\
    -c_1\frac{\lambda}{z_1-\beta}e^{z_1 T}-c_2\frac{\lambda}{z_2-\beta}e^{z_2 T}-\varepsilon e^{z_3 T}c_{1+i}+\frac{\lambda}{\beta}c_{N+1+i}=0, & i=2,\dots,N,\\
    \\
    c_1e^{z_1 T} + c_2 e^{z_2 T} - e^{z_{3} T}\sum_{j=2}^{N}c_{1+j} + c_{N+2} \\
    \quad =-\varepsilon^{-1}\varphi\left(x^{1}+ \frac{c_1}{z_1}(e^{z_1 T}-1) + \frac{c_2}{z_2}(e^{z_2 T}-1)\right)\\
    \quad\quad +\varepsilon^{-1}\varphi\left( (e^{z_{3} T}-1)\sum_{j=2}^{N}\frac{c_{1+j}}{z_3}  - c_{N+2}T\right)\\
    \quad \quad - \varepsilon^{-1}\left(c_1 \frac{N\lambda}{z_1+\beta} e^{z_1 T} + c_2 \frac{N\lambda}{z_2+\beta} e^{z_2 T}+\frac{\lambda}{\beta}\sum_{j=1}^Nc_{N+1+j}\right),\\
    \\
     c_1e^{z_1 T} + c_2 e^{z_2 T} +c_{1+i}e^{z_{3} T} + c_{N+1+i}\\
     \quad =-\varepsilon^{-1}\varphi \left(x^{i}+ \frac{c_1}{z_1}(e^{z_1 T}-1) + \frac{c_2}{z_2}(e^{z_2 T} -1)\right)\\
     \quad\quad -\varepsilon^{-1}\varphi\left(\frac{c_{1+i}}{z_3}(e^{z_{3} T}-1)+ c_{N+1+i}T\right)\\
     \quad \quad - \varepsilon^{-1}\left(c_1 \frac{N\lambda}{z_1+\beta} e^{z_1 T} + c_2 \frac{N\lambda}{z_2+\beta} e^{z_2 T}+\frac{\lambda}{\beta}\sum_{j=1}^Nc_{N+1+j}\right), & i = 2,\dots,N.
    \end{cases}
\end{equation*}
We will solve this system in parts. First, we reduce to a tractable three dimensional system. Summing over the last $N$ equations and rearranging yields
\begin{align}
    -\varepsilon^{-1}\varphi\sum_{j=1}^Nx^{j}&=Nc_1\left[e^{z_1 T} +\frac{\varepsilon^{-1}\varphi}{z_1}[e^{z_1 T}-1] +\varepsilon^{-1} \frac{N\lambda}{z_1+\beta} e^{z_1 T}\right]  \nonumber \\
    &\quad+ Nc_2 \left[e^{z_2 T}+\frac{\varepsilon^{-1}\varphi}{z_2}[e^{z_2 T} -1] +\varepsilon^{-1} \frac{N\lambda}{z_2+\beta} e^{z_2 T}\right] \label{eqn:sum.1} \\ 
    &\quad +\left[1+\varepsilon^{-1}\varphi T+N\varepsilon^{-1}\frac{\lambda}{\beta}\right] \sum_{j=1}^Nc_{N+1+j}.\nonumber
\end{align}
At the same time, summing over the $2$nd to $(N+1)$th equations gives
\begin{equation}\label{eqn:sum.2}
    0=-c_1\frac{N\lambda}{z_1-\beta}e^{z_1 T}-c_2\frac{N\lambda}{z_2-\beta}e^{z_2 T}+\frac{\lambda}{\beta}\sum_{i=1}^Nc_{N+1+i}.
\end{equation}
Coupling \eqref{eqn:sum.1} and \eqref{eqn:sum.2} with the first equation we can solve for $c_1$, $c_2$ and $\sum_{i=1}^Nc_{N+1+i}$. Written in terms of the constants in Table \ref{tab:constants}, we get
\begin{equation}\label{eqn:c1.c2.sum}
    c_1 = -\frac{\varphi}{\varepsilon \psi} \overline{x}, \ \ \ c_2 = \frac{\gamma_1\varphi}{\varepsilon \gamma_2\psi} \overline{x}, \ \ \ \sum_{i=1}^Nc_{N+1+i}= -\frac{N\beta \rho_-\varphi}{\varepsilon \psi} \overline{x}.
\end{equation}
We now pare off the $2$nd and the $(N+2)$th equations. After substituting in \eqref{eqn:c1.c2.sum} and rearranging we get (again in terms of the constants in Table \ref{tab:constants})
\begin{equation}\label{eqn:simplified.2}
\frac{\lambda}{\beta}c_{N+2}+\varepsilon e^{z_3 T}\sum_{j=2}^Nc_{1+j}=-\frac{\lambda \rho_{-}\varphi}{\varepsilon\psi}\overline{x},
\end{equation}
\begin{align}\label{eqn:simplified.N+2}
    -\left[e^{z_3 T} +\frac{\varphi}{\varepsilon }\frac{e^{z_3 T}-1}{z_3}\right]\sum_{j=2}^Nc_{1+j} + &\left[1+\frac{\varphi}{\varepsilon} T\right] c_{N+2}
    =-\frac{\varphi}{\varepsilon}( x^{1}-\overline{x})-\beta\rho_{-}\left[1+\frac{\varphi}{\varepsilon} T\right]\frac{\varphi}{\varepsilon\psi}\overline{x}. %
\end{align}
Solving \eqref{eqn:simplified.2} and \eqref{eqn:simplified.N+2} yields
\begin{equation}\label{eqn:cN+2}
c_{N+2} = -\frac{\beta \varphi}{\varepsilon\xi}(x^{1}-\overline{x})-\frac{\beta \rho_{-}\varphi}{\varepsilon \psi}\overline{x},
\end{equation}
\begin{equation}\label{eqn:sum.c1+i}
    \sum_{j=2}^Nc_{1+j}=\frac{\lambda\varphi}{\varepsilon^2 \xi  e^{z_3T}}(x^{1}-\overline{x}).
\end{equation}
At last, we turn to the $3$rd to $(N+1)$th equations and the last $N-1$ equations. Substituting in the existing solutions and collecting terms we arrive at the system
\begin{equation*}
    -\varepsilon e^{z_3 T}c_{1+i}+\frac{\lambda}{\beta}c_{N+1+i}=-\frac{\lambda \varphi\rho_{-}}{\varepsilon\psi}\overline{x}, \ \ \ i=2,\dots,N,
\end{equation*}
\begin{align*}
    \left[e^{z_3 T} +\frac{\varphi}{\varepsilon }\frac{e^{z_3 T}-1}{z_3}\right]& c_{1+i} + \left[1+\frac{\varphi}{\varepsilon}T\right]c_{N+1+i}\\
    &=-\frac{\varphi}{\varepsilon}( x^{i}-\overline{x})-\beta\rho_{-}\left[1+\frac{\varphi}{\varepsilon} T\right]\frac{\varphi}{\varepsilon\psi}\overline{x} ,\ \ \ i = 2,\dots,N.\nonumber %
\end{align*}
These equations can be solved in pairs. By doing so, one finds
\begin{equation}\label{eqn:cN+1+i}
c_{N+1+i} = -\frac{\beta\varphi}{\varepsilon\xi}(x^{i}-\overline{x})-\frac{\beta\rho_{-}\varphi}{\varepsilon \psi}\overline{x}, \ \ \ i=2,\dots,N,
\end{equation}
\begin{equation}\label{eqn:c1+i}
c_{1+i}=-\frac{\lambda\varphi}{\varepsilon^2 \xi e^{z_3T}}(x^{i}-\overline{x}), \ \ \ i=2,\dots,N.
\end{equation}
To complete the proof we substitute \eqref{eqn:c1.c2.sum}, \eqref{eqn:cN+2}, \eqref{eqn:sum.c1+i},  \eqref{eqn:cN+1+i}, and \eqref{eqn:c1+i} into \eqref{eqn:x1} and \eqref{eqn:xi}, and collect terms. \qed

\subsubsection{Corollary \ref{cor:eq.cost.term.pen}}

Using the form of the equilibrium in Theorem \ref{thm:equil.term.pen}, we compute the equilibrium impact
\begin{equation*}I_t=-N\lambda\left[\rho_{-}+\frac{1}{z_1+\beta}e^{z_1t} -\frac{\gamma_1}{ \gamma_2}\frac{1}{z_2+\beta}e^{z_2t}\right]\frac{\varphi}{\varepsilon\psi}\overline{x}.
\end{equation*}
By differentiating we similarly obtain
\[\dot{X}^{*,i}_t= -\left[\beta\rho_{-}+e^{z_1 t}- \frac{\gamma_1}{\gamma_2} e^{z_2 t}\right]\frac{\varphi}{\varepsilon \psi}\overline{x}-\left[\beta+\frac{\lambda e^{z_{3} t}}{\varepsilon e^{z_3T}}\right]\frac{\varphi}{\varepsilon \xi}(x^{i}-\overline{x}), \ \ \ i=1,\dots,N.\]
Here we have used the constants $\psi$ and $\xi$ from Table \ref{tab:constants}.
Defining
\[h^1_t := \left[\beta\rho_{-}+e^{z_1 t}- \frac{\gamma_1}{\gamma_2} e^{z_2 t}\right]\left[\rho_{-}+\frac{1}{z_1+\beta}e^{z_1t} -\frac{\gamma_1}{ \gamma_2}\frac{1}{z_2+\beta}e^{z_2t}\right],\]
\[h^2_t := \left[\beta+\frac{\lambda e^{z_{3} t}}{\varepsilon e^{z_3T}}\right]\left[\rho_{-}+\frac{1}{z_1+\beta}e^{z_1t} -\frac{\gamma_1}{ \gamma_2}\frac{1}{z_2+\beta}e^{z_2t}\right],\]
one can verify that
\[I_t\dot{X}^{*,i}_t = h^1_t\frac{N\lambda\varphi^2}{\varepsilon^2\psi^2}\overline{x}^2+h^2_t\frac{N\lambda\varphi^2}{\varepsilon^2\xi\psi}(x^{i}-\overline{x})\overline{x}.\]
At the same time, defining
\[h^3_t:=\left[\beta\rho_{-}+e^{z_1 t}- \frac{\gamma_1}{\gamma_2} e^{z_2 t}\right]^2, \ \ \ h^4_t:=\left[\beta+\frac{\lambda e^{z_{3} t}}{\varepsilon e^{z_3T}}\right]^2,\]
\[h^5_t:=\left[\beta\rho_{-}+e^{z_1 t}- \frac{\gamma_1}{\gamma_2} e^{z_2 t}\right]\left[\beta+\frac{\lambda e^{z_{3} t}}{\varepsilon e^{z_3T}}\right],\]
we have
\[(\dot{X}^{*,i}_t)^2=h^3_t\frac{\varphi^2}{\varepsilon^2 \psi^2}\overline{x}^2+h^4_t\frac{\varphi^2}{\varepsilon^2 \xi^2}(x^{i}-\overline{x})^2+h^5_t\frac{2\varphi^2}{\varepsilon^2 \xi\psi}\overline{x}(x^{i}-\overline{x}).\]
By expanding the product form of the functions $h^i_t$, $i=1,\dots,5$ and integrating over $[0,T]$ we obtain the identities
\[\mathfrak{h}_i=\int_0^T h_t^i dt, \ \ \ i=1,\dots, 5,\]
for the constants $\mathfrak{h}_i$ defined in Table \ref{tab:constants}. The form of the integrals
\[ \int_0^T I_t dX_t^{*,i} = \int_0^TI_t\dot{X}^{*,i}_tdt \ \ \ \text{and} \ \ \ \frac{\varepsilon}{2}\int_0^T(\dot{X}_t^{*,i})^2dt\]
in the statement of the corollary follows by substitution. Finally, an algebraic manipulation allows us to write $X_T^{*,i}$ in terms of the constant $\mathfrak{p}$ of Table \ref{tab:constants},
\begin{equation*}
    X_T^{*,i}=\frac{ z_3}{ \xi}(x^i-\overline{x})+\frac{\mathfrak{p}}{\psi}\overline{x}.
\end{equation*}
The reported form of $\frac{\varphi}{2}(X_T^{*,i})^2$ is then immediate. \qed

\subsection{Equilibrium with Liquidation Constraint}
\subsubsection{Lemma \ref{lem:conv.equilibria}}\label{se:proof.of.conv.equilibria}
To emphasize the dependence of the equilibria in Theorem \ref{thm:equil.term.pen} on $\varphi$ we will write $\boldsymbol{X}^{*}(\varphi)$ and $\dot{\boldsymbol{X}}^*(\varphi)=\boldsymbol{v}^*(\varphi)$. Under our assumptions there is $\tilde{\boldsymbol{X}}\in H^1[0,T]^{\times N}$ such that $\boldsymbol{X}^{*}(\varphi)\xrightarrow[\varphi\to\infty]{H^{1^{\times N}}} \tilde{\boldsymbol{X}}$. As a consequence, there is also $\tilde{\boldsymbol{v}}\in L^2[0,T]^{\times N}$ such that $\dot{\tilde{\boldsymbol{X}}}=\tilde{\boldsymbol{v}}$ and $\boldsymbol{v}^{*}(\varphi)\xrightarrow[\varphi\to\infty]{L^{2^{\times N}}} \tilde{\boldsymbol{v}}$. We begin with a technical result about the $\Gamma$-convergence of the objective functions when indexed by $\varphi$ and $\boldsymbol{v}^{*,-i}(\varphi)$. See Appendix \ref{app:opt.theory} for the definition of $\Gamma$-convergence.

\begin{lemma}\label{lem:gamma.conv.J}
We have $\mathcal{J}_{A}(\cdot;\boldsymbol{v}^{*,-i}(\varphi))\xrightarrow[\varphi\to\infty]{\Gamma} \mathcal{J}_{A'}(\cdot;\tilde{\boldsymbol{v}}^{-i})$ for $i=1,\dots,N$.    
\end{lemma}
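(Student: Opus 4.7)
Fix $i\in\{1,\dots,N\}$. Since $\Gamma$-convergence along a continuous parameter reduces to $\Gamma$-convergence along arbitrary sequences, I pick any $\varphi_n\uparrow\infty$, abbreviate $\boldsymbol{v}^{*,-i}_n:=\boldsymbol{v}^{*,-i}(\varphi_n)$, and set $F_n(w):=\mathcal{J}_A(w;\boldsymbol{v}^{*,-i}_n)$ with the parameter $\varphi=\varphi_n$, and $F(w):=\mathcal{J}_{A'}(w;\tilde{\boldsymbol{v}}^{-i})$, both as functionals on $L^2[0,T]$. By hypothesis $\boldsymbol{v}^{*,-i}_n\to\tilde{\boldsymbol{v}}^{-i}$ in $L^2[0,T]^{\times(N-1)}$. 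The two items of \cref{def:gamma.conv} are verified separately.

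\paragraph{A preliminary continuity fact.}
For $w\in L^2[0,T]$, define the impact map $w\mapsto K[w]$ by $K[w](t):=\lambda\int_0^t e^{-\beta(t-s)}w(s)\,ds$. Since the kernel is bounded on $[0,T]^2$, $K$ is a bounded linear operator from $L^2[0,T]$ into $C[0,T]$ (hence into $L^\infty$ and $L^2$). Writing
\[
I_{n,t}=K[w_n](t)+\sum_{j\neq i}K[v^{*,j}_n](t),\qquad I^*_t=K[w](t)+\sum_{j\neq i}K[\tilde v^{j}](t),
\]
whenever $w_n\to w$ in $L^2$, we obtain $I_n\to I^*$ uniformly on $[0,T]$. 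Since uniform convergence implies strong $L^2$-convergence and $w_n\to w$ strongly in $L^2$, the pairing is continuous:
\[
\int_0^T I_{n,t}\,w_n(t)\,dt\;\longrightarrow\;\int_0^T I^*_t\,w(t)\,dt.
\]
Similarly, $X^{i}_{T,n}:=x^{i}+\int_0^T w_n(t)\,dt\to X^{i}_T:=x^{i}+\int_0^T w(t)\,dt$ in $\mathbb{R}$ by Cauchy--Schwarz.

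\paragraph{Liminf inequality.}
Let $w_n\to w$ in $L^2[0,T]$. Split $F_n(w_n)$ into the ``smooth part''
\[
S_n:=\int_0^T I_{n,t}w_n(t)\,dt+\frac{\varepsilon}{2}\int_0^T w_n(t)^2\,dt
\]
and the penalty $P_n:=\frac{\varphi_n}{2}(X^{i}_{T,n})^2$. By the preliminary fact and strong $L^2$-continuity of the squared norm, $S_n\to \int_0^T I^*_t w(t)\,dt+\frac{\varepsilon}{2}\int_0^T w(t)^2\,dt$. For $P_n$: if $X^{i}_T\neq 0$, then $|X^{i}_{T,n}|\geq |X^{i}_T|/2$ eventually, so $P_n\to\infty=\chi_{\{X^{i}_T\neq 0\}}$; if $X^{i}_T=0$, then $P_n\geq 0=\chi_{\{X^{i}_T\neq 0\}}$. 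In both cases $\liminf_n P_n\geq \chi_{\{X^{i}_T\neq 0\}}$, whence $\liminf_n F_n(w_n)\geq F(w)$.

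\paragraph{Limsup (recovery) inequality.}
For arbitrary $w\in L^2[0,T]$ the constant sequence $w_n\equiv w$ works. If $X^{i}_T\neq 0$, then $F(w)=\infty$ and there is nothing to prove. If $X^{i}_T=0$, then $X^{i}_{T,n}=X^{i}_T=0$ so the penalty vanishes for every $n$, while the smooth part converges by the preliminary fact; hence $\limsup_n F_n(w)=\int_0^T I^*_t w(t)\,dt+\frac{\varepsilon}{2}\int_0^T w(t)^2\,dt=F(w)$.

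\paragraph{Anticipated difficulty.}
The only non-routine point is the strong-strong continuity of the bilinear impact pairing along $(w_n,\boldsymbol{v}^{*,-i}_n)$, which I intend to dispatch via the boundedness of the exponential kernel and the mapping property $K:L^2[0,T]\to C[0,T]$ described above; no weak-compactness argument is required because the hypothesis of \cref{lem:conv.equilibria} already delivers strong $L^2$-convergence of the other players' strategies. Once this continuity is in hand, the penalty term is handled by the elementary pointwise analysis of $\tfrac{\varphi_n}{2}y^2$, and the two $\Gamma$-convergence inequalities drop out as above.
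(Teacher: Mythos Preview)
Your proof is correct and follows essentially the same approach as the paper: both split the objective into the ``smooth'' impact-plus-instantaneous-cost part (shown to converge continuously along strongly $L^2$-convergent sequences via boundedness of the exponential kernel) and the terminal penalty (shown to $\Gamma$-converge to the liquidation constraint). The only difference is cosmetic: the paper packages the final step by citing \cite[Proposition~6.20]{DalMaso.93} on the stability of $\Gamma$-limits under addition of continuously convergent perturbations, whereas you verify the liminf and recovery inequalities by hand using the constant recovery sequence---the content is the same.
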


\begin{proof}
We begin by considering $I$ as a functional that takes $L^2[0,T]$ to itself. That is, for any $v\in L^2[0,T]$ we write $I[v;\boldsymbol{v}^{*,-i}(\varphi)]$ to denote the function
\[t\mapsto I_t[v;\boldsymbol{v}^{*,-i}(\varphi)]= \lambda\int_0^t e^{-\beta(t-s)}\left(v_s+\sum_{j\not=i} v^{*,j}_s(\varphi)\right)ds.\]
Standard estimates verify that $I[v;\boldsymbol{v}^{*,-i}(\varphi)]\in L^2[0,T]$. We also consider $X_T^i$ as a functional from $L^2[0,T]$ to $\mathbb{R}$,
\[v\mapsto X_T^i[v] = x^i+\int_0^Tv_tdt.\] With this, we can express the objectives $\cJ_A$ and $\cJ_{A'}$ of \eqref{eqn:cJ.A} and \eqref{eqn:cJ.A'} as
\[\mathcal{J}_{A}(v;\boldsymbol{v}^{*,-i}(\varphi))=\left\langle I[v;\boldsymbol{v}^{*,-i}(\varphi)], v\right\rangle_{L^2}+\frac{\varepsilon}{2}\|v\|_{L^2}^2 + \frac{\varphi}{2}\left(X_T^i[v]\right)^2,\]
\[\mathcal{J}_{A'}(v;\tilde{\boldsymbol{v}}^{-i})=\left\langle I[v;\tilde{\boldsymbol{v}}^{-i}], v\right\rangle_{L^2}+\frac{\varepsilon}{2}\|v\|_{L^2}^2 + \chi_{\left\{X_T^i[v]\not=0\right\}}.\]

Next, fix a sequence of positive numbers $(\varphi^n)_{n\geq0}$ satisfying $\varphi^n\to\infty$. Let $v\in L^2[0,T]$ be arbitrary and fix any convergent sequence $(v^n)_{n\geq0}$ in $L^2[0,T]$ with limit $v$. It is easy to verify that
$I[v^n;\boldsymbol{v}^{*,-i}(\varphi^n)]\xrightarrow[n\to\infty]{L^2}I[v;\tilde{\boldsymbol{v}}^{-i}]$. Using this and the continuity of the inner product,
\begin{equation}\label{eqn:cont.conv.inner.prod}
    \lim_{n\to\infty}\left(\left\langle I[v^n;\boldsymbol{v}^{*,-i}(\varphi^n)], v^n\right\rangle_{L^2}+\frac{\varepsilon}{2}\|v^n\|_{L^2}^2\right)=\left\langle I[v;\tilde{\boldsymbol{v}}^{-i}], v\right\rangle_{L^2}+\frac{\varepsilon}{2}\|v\|_{L^2}^2 .
\end{equation}
It is similarly straightforward to check that
\begin{equation}\label{eqn:gamma.conv.pen}
    \frac{\varphi^n}{2}\left(X_T^i[v]\right)^2\xrightarrow[n\to\infty]{\Gamma} \chi_{\{X_T^i[v]\not=0\}}
\end{equation}
as a functional on $L^2[0,T]$. Combining the conclusions of \eqref{eqn:cont.conv.inner.prod} and \eqref{eqn:gamma.conv.pen} with \cite[Proposition 6.20]{DalMaso.93} gives the $\Gamma$-convergence stated in the lemma. 
\end{proof}

As $v^{*,i}(\varphi)$ minimizes $\mathcal{J}_{A}(\cdot;\boldsymbol{v}^{*,-i}(\varphi))$ and $v^{*,i}(\varphi)\xrightarrow[\varphi\to\infty]{L^2} \tilde{v}^i$, Lemma \ref{lem:gamma.conv.J} and Theorem \ref{thm:fund.thm.gamma.conv} yield that $\tilde{v}^i$ minimizes $\mathcal{J}_{A'}(\cdot;\tilde{\boldsymbol{v}}^{-i})$ and the costs converge. As this holds for all $i$, $\tilde{\boldsymbol{v}}$ is a Nash equilibrium for $\mathcal{J}_{A'}$. By reparametrizing in terms of the original process~$\tilde{\boldsymbol{X}}$, it follows that~$\tilde{\boldsymbol{X}}$ is a Nash equilibrium for $J_{A'}$. \qed

\subsubsection{Theorem \ref{thm:equil.liq.constr}}\label{app:pf.equil.liq.constr}
Passing to the limit in $f_t$ and $g_t$ from Theorem \ref{thm:equil.term.pen} gives the stated form of $\mathfrak{f}_t$ and $\mathfrak{g}_t$. To see that the equilibria converge in $H^1[0,T]^{\times N}$ (so that we may apply Lemma \ref{lem:conv.equilibria}), we first verify that $f$ and $g$ converge uniformly to $\mathfrak{f}$ and $\mathfrak{g}$ on $[0,T]$. Indeed, the functions take the form
\[ f_t = 1-\frac{\mathsf{F}_t\varphi}{\varepsilon z_3+\varphi\Xi}, \ \ \ g_t = 1-\frac{\mathsf{G}_t\varphi}{\varepsilon\mathfrak{p}+\varphi\Psi}, \ \ \ \mathfrak{f}_t = 1-\frac{\mathsf{F}_t}{\Xi}, \ \ \ \mathfrak{g}_t = 1-\frac{\mathsf{G}_t}{\Psi},\]
for given continuous $\mathsf{F}$ and $\mathsf{G}$, and constants $z_3,\mathfrak{p},\Psi$ and $\Xi$ that are all independent of $\varphi$.  It follows that
\[\sup_{t\in[0,T]}|f_t-\mathfrak{f}_t|\leq \sup_{t\in[0,T]}|\mathsf{F}_t|\left|\frac{\varepsilon z_3}{\varepsilon z_3\Xi+\varphi\Xi^2}\right|, \ \ \ \sup_{t\in[0,T]}|g_t-\mathfrak{g}_t|\leq \sup_{t\in[0,T]}|\mathsf{G}_t|\left|\frac{\varepsilon\mathfrak{p}}{\varepsilon\mathfrak{p}\Psi+\varphi\Psi^2}\right|.\]
This gives the requisite uniform convergence as $\varphi\uparrow\infty$. In fact, by a similar argument it is not hard to see that the derivatives  \textit{to all orders} of $f$ and $g$ also converge uniformly. Moreover, these estimates ensure that, when parametrized by $\varphi>0$, the equilibria (and their derivatives) in Theorem \ref{thm:equil.term.pen} are uniformly bounded in the supremum norm\footnote{i.e., there exists $C>0$ such that for all $\varphi$, the equilibrium $X^{*,i}(\varphi)$ corresponding to $\varphi$ satisfies $\sup_{t\in[0,T]}|X^{*,i}_t(\varphi)|\leq C$, and an analogous statement holds for the derivatives $\dot{X}^{*,i}(\varphi)$.}. Taken together, the dominated convergence theorem implies the claimed convergence of the equilibrium strategy profile.

\subsubsection{Corollary \ref{cor:eq.cost.liq.constr}}
This result could be shown directly by substituting the equilibrium strategies in Theorem~\ref{thm:equil.liq.constr}. However, to simplify computations we leverage Lemma \ref{lem:conv.equilibria} and pass to the limit in the expressions of Corollary \ref{cor:eq.cost.term.pen}. The constants $\mathfrak{p}$ and $\mathfrak{h}_j$, $j=1,\dots,5$ appearing in Corollary~\ref{cor:eq.cost.term.pen} are independent of $\varphi$ (see Table \ref{tab:constants}). Consequently, we may focus on the behavior of~$\psi$ and~$\xi$.
From Table \ref{tab:constants} we see that $\varphi^{-1}\psi\to\varepsilon^{-1}\Psi$ and $\varphi^{-1}\xi\to \varepsilon^{-1}\Xi$ as $\varphi\uparrow\infty$. 
With this, the expression for the cost follows immediately.
\qed

\section{Proofs for Section \ref{sec:N.block.cost}}\label{app:N.block.cost}

\subsection{Proposition \ref{prop:no.interior.jumps}}

We first show an auxiliary result related to the optimality of jumps; it follows the template of \cite[Proposition 4.11]{SchiedStrehleZhang.17}.

\begin{lemma}\label{lem:gen.foc.jumps}
     Fix a strategy profile $\boldsymbol{X}$. If $X^{i}$ is optimal for $J_B(\cdot,\boldsymbol{X}^{-i})$ then for any $[0,T]$-valued predictable time $\tau$ there exists an $\mathcal{F}_{\tau-}$-measurable random variable $\boldsymbol{\Upsilon}$ satisfying
     \begin{equation*}
         \mathbb{E}\left[\lambda\int_0^Te^{-\beta |\sigma-t|}dX_t^{i}+\lambda\int_0^{\sigma-} e^{-\beta (\sigma-t)}\sum_{j\not=i}dX_t^{j} + \vartheta_\sigma\Delta X_\sigma^{i}+\frac{\lambda}{2}\sum_{j\not=i}\Delta X_\sigma^{j}\bigg|\mathcal{F}_{\tau-}\right]=\boldsymbol{\Upsilon} \ \ \ a.s.
     \end{equation*}
     for all predictable times $\sigma$ satisfying $\tau\leq \sigma \leq T$. In particular, we may take
     \[\boldsymbol{\Upsilon} = \mathbb{E}\left[\lambda\int_0^Te^{-\beta |\tau-t|}dX_t^{i}+\lambda\int_0^{\tau-} e^{-\beta (\tau-t)}\sum_{j\not=i}dX_t^{j} + \vartheta_\tau\Delta X_\tau^{i}+\frac{\lambda}{2}\sum_{j\not=i}\Delta X_\tau^{j}\bigg|\mathcal{F}_{\tau-}\right].\]

\end{lemma}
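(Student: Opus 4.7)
My plan is to extract the stated first-order condition from a one-parameter family of admissibility-preserving round-trip perturbations between the two predictable times.

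Fix a bounded $\cF_{\tau-}$-measurable random variable $\eta$ and $\varepsilon \in \R$, and perturb $X^i$ by
\[\tilde{X}^i_t := X^i_t + \varepsilon \eta \bigl(\1_{[\sigma, T]}(t) - \1_{[\tau, T]}(t)\bigr).\]
Because $\tau \leq \sigma$ are predictable and $\eta$ is $\cF_{\tau-}$- (hence $\cF_{\sigma-}$-)measurable, $\tilde{X}^i$ is predictable, has $\mathbb{P}$-essentially bounded total variation, satisfies $\tilde{X}^i_{0-} = x^i$, and (crucially) preserves $\tilde{X}^i_T = X^i_T$ by the round-trip construction, so the constraint $\chi_{\{X^i_T \neq 0\}}$ in $C_B$ is not triggered; hence $\tilde{X}^i$ is admissible. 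Using the representation \eqref{eqn:obj.schied.rep} and writing $d\delta := d\tilde{X}^i - dX^i = \varepsilon\eta(\1_\sigma - \1_\tau)$ as an atomic signed measure, the differences of the four terms (symmetric double integral, mixed integral against $\boldsymbol{X}^{-i}$, jump cross-term, and block cost) each split into a piece linear in $\delta$ and a piece quadratic in $\delta$. Combining the two double-integral contributions via symmetry of $e^{-\beta|t-s|}$ and evaluating each linear piece against the atomic measure produces exactly the random variable $M_\sigma - M_\tau$, where $M_s$ denotes the bracketed expression inside the conditional expectation in the statement of the lemma. Thus
\[J_B(\tilde{X}^i;\boldsymbol{X}^{-i}) - J_B(X^i;\boldsymbol{X}^{-i}) = \varepsilon\, \E\!\left[\eta(M_\sigma - M_\tau)\right] + \tfrac{\varepsilon^2}{2}\E\!\left[\eta^2 Q_{\sigma,\tau}\right],\]
with bounded, non-negative remainder $Q_{\sigma,\tau} := 2\lambda(1 - e^{-\beta|\sigma-\tau|}) + \vartheta_\sigma + \vartheta_\tau$.

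Optimality of $X^i$ forces the left-hand side to be non-negative for every real $\varepsilon$; dividing by $\varepsilon$ and letting $\varepsilon \downarrow 0$ and $\varepsilon \uparrow 0$ yields $\E[\eta(M_\sigma - M_\tau)] = 0$. Since $\eta$ ranges over all bounded $\cF_{\tau-}$-measurable random variables, this is equivalent to $\E[M_\sigma \mid \cF_{\tau-}] = \E[M_\tau \mid \cF_{\tau-}]$ almost surely, and the common value is the $\cF_{\tau-}$-measurable random variable $\boldsymbol{\Upsilon}$ asserted by the lemma, with the explicit representation obtained by specializing to $\sigma = \tau$.

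The main obstacle is the careful bookkeeping in the expansion: one must check the edge cases $\sigma = T$ and $\tau = 0$, where the perturbation modifies the terminal or initial block trade and correspondingly the atomic contribution to the block cost, and verify that the open left endpoint in $\int_0^{s-}$ is preserved under the atomic perturbation (so that the cross-kernel is evaluated strictly before $\sigma$ and $\tau$). Integrability is automatic from the essentially bounded total variation of admissible strategies and the boundedness of the exponential kernel.
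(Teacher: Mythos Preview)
Your proposal is correct and follows essentially the same approach as the paper: the paper also perturbs $X^i$ by a round-trip trade $\alpha Z$ with $Z_t=\mathds{1}_A(\mathds{1}_{t\geq\tau}-\mathds{1}_{t\geq\sigma})$ for $A\in\cF_{\tau-}$, differentiates $J_B(X^i+\alpha Z;\boldsymbol{X}^{-i})$ at $\alpha=0$ via the representation \eqref{eqn:obj.schied.rep}, and concludes from the arbitrariness of $A$. Your use of a general bounded $\cF_{\tau-}$-measurable $\eta$ in place of indicators and your explicit computation of the quadratic remainder are cosmetic variations on the same argument.
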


\begin{proof}
    Let $\tau$ and $\sigma\geq \tau$ be arbitrary $[0,T]$-valued predictable stopping times. Fixing some $A\in\mathcal{F}_{\tau-}$ we can consider the round trip trade $Z$ defined by
    \[Z_t=\mathds{1}_{A}\left(\mathds{1}_{t\geq \tau}-\mathds{1}_{t\geq\sigma}\right). \]
    If we perturb $X^{i}$ by $\alpha Z$ ($\alpha\in\mathbb{R}$) we get from Proposition \ref{prop:obj.func.rep} the cost
    \begin{align*}
    &J_B(X^{i}+\alpha Z;\boldsymbol{X}^{-i})\\
    &\quad =\frac{1}{2}\mathbb{E}\left[\sum_{t\in[0,T]} \vartheta_t(\Delta X_t^{i}+\alpha \Delta Z_t)^2\right] +\lambda\mathbb{E}\bigg[\frac{1}{2}\int_0^T\int_0^T e^{-\beta |t-s|} d(X_s^{i} +\alpha Z_s)d(X_t^{i}+\alpha Z_t)\\
    &\quad \quad \quad  + \int_0^T\int_0^{t-} e^{-\beta (t-s)} \sum_{j\not=i}dX_s^{j} d(X_t^{i}+\alpha Z_t) + \frac{1}{2}\sum_{j\not=i}\sum_{t\in[0,T]} (\Delta X_t ^{i} +\alpha \Delta Z_t)\Delta X_t^{j}\bigg ].
    \end{align*}
    Differentiating this expression with respect to $\alpha$ and setting $\alpha=0$ yields
    \begin{align*}
        \frac{d}{d\alpha}J_B(X^{i}+\alpha Z;\boldsymbol{X}^{-i})|_{\alpha=0}&=\mathbb{E}\left[\sum_{t\in[0,T]} \vartheta_t\Delta X_t^{i}\Delta Z_t\right]+ \lambda\mathbb{E}\bigg[\int_0^T\int_0^T e^{-\beta |t-s|} dX_s^{i}dZ_t\\
        &\quad + \int_0^T\int_0^{t-} e^{-\beta (t-s)} \sum_{j\not=i}dX_s^{j} d Z_t + \frac{1}{2}\sum_{j\not=i}\sum_{t\in[0,T]} \Delta Z_t\Delta X_t^{j}\bigg ].
    \end{align*}
    Therefore, a necessary first-order condition for optimality is
    \begin{align*}0&=\mathbb{E}\left[\sum_{t\in[0,T]} \vartheta_t\Delta X_t^{i}\Delta Z_t\right]+ \lambda\mathbb{E}\bigg[\int_0^T\int_0^T e^{-\beta |t-s|} dX_s^{i}dZ_t\\
        &\quad + \int_0^T\int_0^{t-} e^{-\beta (t-s)} \sum_{j\not=i}dX_s^{j} d Z_t + \frac{1}{2}\sum_{j\not=i}\sum_{t\in[0,T]} \Delta Z_t\Delta X_t^{j}\bigg ].
    \end{align*}
    By substituting in the form of $Z$ we obtain
    \begin{align*}0&=\mathbb{E}\left[\mathds{1}_A \left(\vartheta_\tau\Delta X_\tau^{i}-\vartheta_\sigma\Delta X_\sigma^{i}\right)\right]+ \lambda\mathbb{E}\Bigg[\mathds{1}_A\bigg(\int_0^T(e^{-\beta |\tau-t|} -e^{-\beta |\sigma-t|})dX_t^{i}\\
        &\quad + \int_0^{\tau-} e^{-\beta (\tau-t)}\sum_{j\not=i}dX_t^{j} -\int_0^{\sigma-} e^{-\beta(\sigma-t)}\sum_{j\not=i}dX_t^{j} + \frac{1}{2}\sum_{j\not=i}(\Delta X_\tau^{j}-\Delta X_\sigma^{j})\bigg)\Bigg ],
    \end{align*}
    which becomes, after rearranging, 
\begin{align*}
    \mathbb{E}&\left[\mathds{1}_A \left(\vartheta_\tau\Delta X_\tau^{i}+\lambda\int_0^Te^{-\beta |\tau-t|}dX_t^{i}+\lambda\int_0^{\tau-} e^{-\beta (\tau-t)}\sum_{j\not=i}dX_t^{j}+\frac{\lambda}{2}\sum_{j\not=i}\Delta X_\tau^{j}\right)\right]\\
    &=\mathbb{E}\left[\mathds{1}_A \left(\vartheta_\sigma\Delta X_\sigma^{i}+\lambda\int_0^Te^{-\beta |\sigma-t|}dX_t^{i}+\lambda\int_0^{\sigma-} e^{-\beta (\sigma-t)}\sum_{j\not=i}dX_t^{j}+\frac{\lambda}{2}\sum_{j\not=i}\Delta X_\sigma^{j}\right)\right].
\end{align*}
Iterative conditioning on $\mathcal{F}_{\tau-}$ under the expectation completes the proof after noting that $A\in\mathcal{F}_{\tau-}$ was arbitrary.
\end{proof}

For the proof of Lemma~\ref{lem:main.step.for.no.interior.jumps}  below it will be helpful to approximate predictable times with times where the inventories do not jump. That is feasible according to the next lemma, which is a straightforward generalization of \cite[Lemma B.1]{SchiedStrehleZhang.17}. We omit the proof.

\begin{lemma}\label{lem:no.jump.delay.times}
    Let $\boldsymbol{X}$ be an admissible strategy profile and let $\tau,\sigma$ be predictable stopping times satisfying $\tau\leq \sigma\leq T$. There exists a sequence $(\tau_n)_{n\geq0}$ of predictable times satisfying
    \begin{enumerate}
        \item[(i)] $\tau\leq \tau_n\leq \sigma$ with $\tau_n\downarrow \tau$ a.s.,
        \item[(ii)] $\tau<\tau_n$ on $\{\tau<\sigma\}$,
        \item[(iii)] $\Delta X_{\tau_n}^i=0$ on $\{\tau_n<\sigma\}$ for $i=1,\dots,N$.
    \end{enumerate}
    In particular, $\lim_{n\to\infty} \Delta X_{\tau_n}^i = 0$ on $\{\tau<\sigma\}$ for all $i$.
\end{lemma}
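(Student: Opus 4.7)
The natural candidate is $\hat\tau_n := (\tau + 1/n) \wedge \sigma$, which is a predictable time (as the infimum of two predictable times) already satisfying (i), (ii), and the announced convergence $\tau_n \downarrow \tau$. What remains is to perturb $\hat\tau_n$ slightly so as to avoid the jump times of $\boldsymbol{X}$ on $\{\hat\tau_n < \sigma\}$, securing (iii).

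The key structural input is that each $X^i$ is predictable, cadlag, and of bounded variation, so its jump graph $\{\Delta X^i \neq 0\}$ is a thin predictable set that decomposes (up to an evanescent set) as a countable disjoint union of graphs of predictable times $(S_k^i)_{k \geq 1}$; see e.g.\ \cite[I.2.23]{JacodShiryaev.03}. Merging across $i = 1, \dots, N$ and relabeling yields a single sequence $(S_k)_{k \geq 1}$ of predictable times whose combined graphs exhaust every jump of every component of $\boldsymbol{X}$ up to time $T$.

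We construct $\tau_n$ inductively. Set $\tau_n^{(0)} := \hat\tau_n$. Given $\tau_n^{(k-1)}$ predictable with $\tau \leq \tau_n^{(k-1)} \leq (\tau + (1-2^{-k+1})/n) \wedge \sigma$, consider
\[
A_k := \{\tau_n^{(k-1)} = S_k\} \cap \{\tau_n^{(k-1)} < \sigma\},
\]
which lies in $\mathcal{F}_{\tau_n^{(k-1)}-}$ since $\tau_n^{(k-1)}$, $S_k$, and $\sigma$ are predictable. Define
\[
\tau_n^{(k)} := \tau_n^{(k-1)} \mathbf{1}_{A_k^c} + \bigl((\tau_n^{(k-1)} + 2^{-k-1}/n) \wedge \sigma\bigr)\mathbf{1}_{A_k}.
\]
Standard gluing results for predictable times (along a partition measurable for the appropriate pre-stopping-time $\sigma$-algebra) ensure $\tau_n^{(k)}$ is predictable. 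Since the sequence is non-decreasing and bounded by $(\tau + 1/n) \wedge \sigma$, the limit $\tau_n := \lim_{k \to \infty} \tau_n^{(k)}$ exists and is again predictable (as the increasing limit of predictable times). On $\{\tau_n < \sigma\}$ the time $\tau_n$ avoids every $S_k$, since landing on $S_k$ at stage $k$ would have triggered a strict shift; hence (iii) holds.

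\textbf{Main obstacle.} The principal technical subtlety is maintaining predictability through the countable perturbation and ensuring that the modification at stage $k$ does not inadvertently land on some earlier $S_j$. The former is handled by the pre-stopping-time measurability of the events $A_k$; the latter by choosing the shift strictly smaller than any prior modification and exploiting that the $S_k$ are distinct on their domains. These manipulations are routine but delicate; for $N=2$ they are the content of \cite[Lemma B.1]{SchiedStrehleZhang.17}, and the generalization to $N \geq 2$ amounts to enlarging the enumeration of jump times. Finally, the ``in particular'' statement follows at once from (ii) and (iii): on $\{\tau < \sigma\}$, since $\tau_n \downarrow \tau < \sigma$, for all $n$ large enough $\tau_n < \sigma$, hence $\Delta X_{\tau_n}^i = 0$, yielding $\lim_n \Delta X_{\tau_n}^i = 0$.
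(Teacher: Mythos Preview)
The paper does not prove this lemma; it simply cites \cite[Lemma~B.1]{SchiedStrehleZhang.17} and remarks that passing from two to $N$ players is straightforward (one merely enumerates the jump times of all $N$ components rather than two). You correctly identify both the reference and the nature of the extension, so at that level your proposal is aligned with the paper.

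Your concrete construction, however, has a genuine gap. At stage $k$ you test only whether $\tau_n^{(k-1)}=S_k$; the shift you then apply may land on some $S_j$ with $j<k$, and no later stage revisits $S_j$. Even if each stage were patched to avoid $\{S_1,\dots,S_k\}$, the \emph{increasing limit} $\tau_n=\lim_k\tau_n^{(k)}$ could still coincide with some $S_j$: nothing prevents the accumulated shifts from converging to a jump time. Your stated remedy---shifts ``strictly smaller than any prior modification'' together with distinctness of the $S_k$---addresses neither failure, since the $S_j(\omega)$ may sit at arbitrary locations regardless of how small the shifts are. (There is also a bookkeeping slip: $\tau_n^{(0)}=(\tau+1/n)\wedge\sigma$ already saturates your inductive bound, which therefore fails at $k=1$.) A cleaner route avoids the induction altogether. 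Since each $X^i$ has countably many jumps pathwise, for a.e.\ $\omega$ the set $\{q>0:\Delta X^i_{\tau(\omega)+q}(\omega)\neq0\text{ for some }i\}$ is countable, hence Lebesgue-null; by Fubini, for Lebesgue-a.e.\ $q>0$ one has $\mathbb P\bigl(\Delta X^i_{\tau+q}\neq0\text{ for some }i\bigr)=0$. Choosing such $q_n\in(0,1/n)$ and setting $\tau_n:=(\tau+q_n)\wedge\sigma$ gives a predictable time (minimum of two predictable times) satisfying (i)--(ii) immediately, and on $\{\tau_n<\sigma\}$ one has $\tau_n=\tau+q_n$, whence (iii).
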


With this we are ready to prove the following necessary conditions for the jumps of a best response to a strategy profile $\boldsymbol{X}^{-i}$.

\begin{lemma}\label{lem:main.step.for.no.interior.jumps} 
    Fix an admissible strategy profile $\boldsymbol{X}$ and suppose that $X^{i}$ is optimal for $J_B(\cdot,\boldsymbol{X}^{-i})$. If
    $\tau$ is any $[0,T]$-valued predictable stopping time, then
     \begin{equation}\label{eqn:eq.jumps.not.0}
         \left(\vartheta_\tau \Delta X_{\tau}^{i}+\frac{\lambda}{2}\sum_{j\not=i}\Delta X_{\tau}^{j}\right)\mathds{1}_{\{\tau>0\}}=0, \ \ \ a.s.
     \end{equation}
     and
     \begin{equation}\label{eqn:eq.jumps.not.T}
         \left( \vartheta_\tau \Delta X_{\tau}^{i}-\frac{\lambda}{2}\sum_{j\not=i}\Delta X_{\tau}^{j}\right)\mathds{1}_{\{\tau<T\}}=0, \ \ \ a.s.
     \end{equation}
\end{lemma}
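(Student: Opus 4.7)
The plan is to exploit Lemma \ref{lem:gen.foc.jumps} by comparing its first-order condition at $\tau$ with the analogous identity at suitably chosen stopping times approximating $\tau$---from above to obtain~\eqref{eqn:eq.jumps.not.T}, and from below to obtain~\eqref{eqn:eq.jumps.not.0}. Writing
\[\Phi(\sigma):=\lambda\int_0^Te^{-\beta|\sigma-t|}dX_t^i+\lambda\int_0^{\sigma-}e^{-\beta(\sigma-t)}\sum_{j\not=i}dX_t^j+\vartheta_\sigma\Delta X_\sigma^i+\frac{\lambda}{2}\sum_{j\not=i}\Delta X_\sigma^j,\]
Lemma \ref{lem:gen.foc.jumps} reads $\mathbb{E}[\Phi(\sigma)-\Phi(\tau)\mid\mathcal{F}_{\tau-}]=0$ for every predictable $\sigma\in[\tau,T]$. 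A recurring ingredient is that, because each $X^j$ is predictable and $\tau$ is a predictable stopping time, $X^j_\tau$ and $X^j_{\tau-}$ are $\mathcal{F}_{\tau-}$-measurable and hence so is $\Delta X^j_\tau$; the events $\{\tau<T\}$, $\{\tau>0\}$ and the deterministic $\vartheta_\tau$ are $\mathcal{F}_{\tau-}$-measurable as well.

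To prove~\eqref{eqn:eq.jumps.not.T} I would apply Lemma \ref{lem:no.jump.delay.times} with $\sigma:=T$ to obtain predictable $\tau_n\downarrow\tau$ satisfying $\tau<\tau_n$ on $\{\tau<T\}$ and $\Delta X^j_{\tau_n}=0$ on $\{\tau_n<T\}$ for every $j$. The decisive pointwise observation on $\{\tau<T\}$ is that, since $\tau_n>\tau$, the range $[0,\tau_n)$ strictly contains $\tau$, so the atoms of $X^j$ at $\tau$ are captured in the limit:
\[\int_0^{\tau_n-}e^{-\beta(\tau_n-t)}\sum_{j\not=i}dX^j_t\;\longrightarrow\;\int_0^{\tau-}e^{-\beta(\tau-t)}\sum_{j\not=i}dX^j_t+\sum_{j\not=i}\Delta X^j_\tau.\]
Combined with $\vartheta_{\tau_n}\Delta X^i_{\tau_n}\to0$ and $\sum_{j\not=i}\Delta X^j_{\tau_n}\to0$ (both eventually zero because $\tau_n<T$ for large $n$) and a routine dominated convergence for the first integral in $\Phi$, this yields $\Phi(\tau_n)-\Phi(\tau)\to\bigl(\frac{\lambda}{2}\sum_{j\not=i}\Delta X^j_\tau-\vartheta_\tau\Delta X^i_\tau\bigr)\mathds{1}_{\{\tau<T\}}$. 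Passing the limit through $\mathbb{E}[\cdot\mid\mathcal{F}_{\tau-}]$---justified by the essentially bounded total variation of each $X^i$---and invoking the $\mathcal{F}_{\tau-}$-measurability of the limit then gives~\eqref{eqn:eq.jumps.not.T}.

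For~\eqref{eqn:eq.jumps.not.0} I proceed symmetrically from below. Predictability of $\tau$ supplies an announcing sequence $\sigma_n\uparrow\tau$ of stopping times with $\sigma_n<\tau$ on $\{\tau>0\}$; to enforce additionally that $\Delta X^j_{\sigma_n}=0$ for all $j$, I apply Lemma \ref{lem:no.jump.delay.times} to each $\sigma_n$ with $\sigma:=\tau$ and extract a diagonal subsequence, chosen so that the resulting predictable times still announce $\tau$ almost surely. Invoking Lemma \ref{lem:gen.foc.jumps} with base time $\sigma_n$ then yields $\mathbb{E}[\Phi(\tau)-\Phi(\sigma_n)\mid\mathcal{F}_{\sigma_n-}]=0$. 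The analogous limit---now $[0,\sigma_n)$ grows up to $[0,\tau)$ and therefore misses the atoms at $\tau$---produces $\Phi(\sigma_n)-\Phi(\tau)\to-\bigl(\vartheta_\tau\Delta X^i_\tau+\frac{\lambda}{2}\sum_{j\not=i}\Delta X^j_\tau\bigr)\mathds{1}_{\{\tau>0\}}$. Since $\mathcal{F}_{\sigma_n-}\uparrow\mathcal{F}_{\tau-}$ along an announcing sequence, a Hunt-type theorem for conditional expectations transfers the identity to the limit, and the same measurability argument concludes~\eqref{eqn:eq.jumps.not.0}.

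The main obstacle will be this second half. One must construct a jump-free announcing sequence by diagonally combining Lemma \ref{lem:no.jump.delay.times} with an ordinary announcing sequence while preserving both predictability and the monotone convergence $\sigma_n\uparrow\tau$ (which is what underwrites $\mathcal{F}_{\sigma_n-}\uparrow\mathcal{F}_{\tau-}$), and then justify the limit interchange against a filtration that itself varies with $n$. The first half, by contrast, only requires monotone approximation from above, which Lemma \ref{lem:no.jump.delay.times} supplies directly.
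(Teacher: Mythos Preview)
Your argument for~\eqref{eqn:eq.jumps.not.T} is essentially identical to the paper's: approximate $\tau$ from above via Lemma~\ref{lem:no.jump.delay.times} with $\sigma\equiv T$, pass to the limit, and use predictability.

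For~\eqref{eqn:eq.jumps.not.0} your strategy is correct in spirit but takes a harder route than the paper. You propose to diagonalize Lemma~\ref{lem:no.jump.delay.times} against an announcing sequence to manufacture a \emph{single} jump-free predictable announcing sequence $(\sigma_n)$, then apply Lemma~\ref{lem:gen.foc.jumps} once at base time $\sigma_n$ and pass to the limit. As you correctly flag, this diagonalization is the crux: you must simultaneously preserve predictability (so $m(n)$ must be deterministic), strict monotonicity (so that $\mathcal{F}_{\sigma_n-}\uparrow\mathcal{F}_{\tau-}$), and the jump-free property $\sigma_n<\tau$ (which Lemma~\ref{lem:no.jump.delay.times} only guarantees on $\{\tau_{n,m}<\tau\}$). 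This can be arranged, but only ``eventually'' via a Borel--Cantelli argument, and you do not carry it out.

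The paper avoids the diagonalization entirely by keeping the double index. It fixes a predictable announcing sequence $\tau_n\uparrow\tau$, applies Lemma~\ref{lem:no.jump.delay.times} to obtain $\tau_{n,m}\downarrow\tau_n$ for each fixed $n$, and then invokes Lemma~\ref{lem:gen.foc.jumps} \emph{twice} at base time $\tau_n$---once for $\sigma=\tau_{n,m}$ and once for $\sigma=\tau$---so that both conditional expectations live on the same $\sigma$-field $\mathcal{F}_{\tau_n-}$. Sending $m\to\infty$ first (dominated convergence, fixed conditioning) and then $n\to\infty$ (conditional dominated convergence as $\mathcal{F}_{\tau_n-}\uparrow\mathcal{F}_{\tau-}$) yields the result with no diagonal step. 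The trade-off is that the paper's computation is slightly longer, but it never has to engineer a single sequence with all three properties at once.
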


\begin{proof}
    By predictability there exists an announcing sequence of stopping times $\tau_n\uparrow \tau$ with $\tau_n<\tau$ on $\{\tau>0\}$ and $\tau_n= 0$ on $\{\tau=0\}$. Moreover, we can take these times to be predictable (see \cite[Corollary 2.1]{Siorpaes.14}). At the same time, for each $\tau_n$ we can apply Lemma~\ref{lem:no.jump.delay.times} to find a sequence $(\tau_{n,m})_{m\geq0}$ satisfying (i) $\tau_n\leq \tau_{n,m}\leq \tau$ with $\tau_{n,m}\downarrow \tau_n$ as $m\uparrow\infty$, (ii) $\tau_n<\tau_{n,m}$ on $\{\tau_n<\tau\}$, and (iii) $\Delta X_{\tau_{n,m}}^{i} =0$ for all $i$ on $\{\tau_{n,m}<\tau\}$.
    
    Applying Lemma \ref{lem:gen.foc.jumps} twice and using the absence of jumps at $\tau_{n,m}$ on $\{\tau_{n,m}<\tau\}$,
    \begin{align*}
    \mathbb{E}&\Bigg[\lambda\int_0^Te^{-\beta |\tau_{n,m}-t|}dX_t^{i}+\lambda\int_0^{\tau_{n,m}-} e^{-\beta (\tau_{n,m}-t)}\sum_{j\not=i}dX_t^{j}\\
    &\quad \quad \quad  +\left( \vartheta_{\tau_{n,m}} \Delta X_{\tau_{n,m}}^{i}+\frac{\lambda}{2}\sum_{j\not=i}\Delta X_{\tau_{n,m}}^{j}\right)\mathds{1}_{\{\tau_{n,m}=\tau\}}\bigg|\mathcal{F}_{\tau_n-}\Bigg]\\
    &=\mathbb{E}\left[\lambda\int_0^Te^{-\beta |\tau_n-t|}dX_t^{i}+\lambda\int_0^{\tau_n-} e^{-\beta (\tau_n-t)}\sum_{j\not=i}dX_t^{j} + \vartheta_{\tau_n}\Delta X_{\tau_n}^{i}+\frac{\lambda}{2}\sum_{j\not=i}\Delta X_{\tau_n}^{j}\bigg|\mathcal{F}_{\tau_n-}\right]\\
    &=\mathbb{E}\left[\lambda\int_0^Te^{-\beta |\tau-t|}dX_t^{i}+\lambda\int_0^{\tau-} e^{-\beta (\tau-t)}\sum_{j\not=i}dX_t^{j} + \vartheta_{\tau}\Delta X_{\tau}^{i}+\frac{\lambda}{2}\sum_{j\not=i}\Delta X_{\tau}^{j}\bigg|\mathcal{F}_{\tau_n-}\right].
    \end{align*}
    We focus on the first and last expectation in this chain of equalities. Taking $m\uparrow \infty$ we get, by the dominated convergence theorem,
    \begin{align*}
    \mathbb{E}&\Bigg[\lambda\int_0^Te^{-\beta |\tau_n-t|}dX_t^{i}+\mathds{1}_{\{\tau_n<\tau\}}\lambda\int_0^{\tau_n} e^{-\beta (\tau_n-t)}\sum_{j\not=i}dX_t^{j}+\mathds{1}_{\{\tau_n=\tau\}}\lambda\int_0^{\tau-} e^{-\beta (\tau-t)}\sum_{j\not=i}dX_t^{j}\\
    &\quad \quad \quad  +\left( \vartheta_\tau \Delta X_{\tau}^{i}+\frac{\lambda}{2}\sum_{j\not=i}\Delta X_{\tau}^{j}\right)\mathds{1}_{\{\tau_n=\tau\}}\bigg|\mathcal{F}_{\tau_n-}\Bigg]\\
    &=\mathbb{E}\left[\lambda\int_0^Te^{-\beta |\tau-t|}dX_t^{i}+\lambda\int_0^{\tau-} e^{-\beta (\tau-t)}\sum_{j\not=i}dX_t^{j} + \vartheta_\tau \Delta X_{\tau}^{i}+\frac{\lambda}{2}\sum_{j\not=i}\Delta X_{\tau}^{j}\bigg|\mathcal{F}_{\tau_n-}\right].
    \end{align*}
    Here, we have used that $\lim_{m\to\infty}\Delta X^{i}_{\tau_{n,m}}=0$ on $\{\tau_n<\tau\}$ for all $i$  and $\Delta X^{i}_{\tau_{n,m}}=\Delta X^{i}_{\tau}$ on $\{\tau_n=\tau\}$ for all $i$. By definition of the announcing sequence we have $\{\tau_n=\tau\}=\{\tau=0\}$, so
    \begin{align*}
    \mathbb{E}&\Bigg[\lambda\int_0^Te^{-\beta |\tau_n-t|}dX_t^{i}+\mathds{1}_{\{\tau>0\}}\lambda\int_0^{\tau_n} e^{-\beta (\tau_n-t)}\sum_{j\not=i}dX_t^{j}+\left( \vartheta_0\Delta X_{0}^{i}+\frac{\lambda}{2}\sum_{j\not=i}\Delta X_{0}^{j}\right)\mathds{1}_{\{\tau=0\}}\bigg|\mathcal{F}_{\tau_n-}\Bigg]\\
    &=\mathbb{E}\left[\lambda\int_0^Te^{-\beta |\tau-t|}dX_t^{i}+\lambda\int_0^{\tau-} e^{-\beta (\tau-t)}\sum_{j\not=i}dX_t^{j} + \vartheta_\tau \Delta X_{\tau}^{i}+\frac{\lambda}{2}\sum_{j\not=i}\Delta X_{\tau}^{j}\bigg|\mathcal{F}_{\tau_n-}\right].
    \end{align*}
    Then, taking $n\uparrow \infty$ and applying the dominated convergence theorem for conditional expectations \cite[Theorem 4.6.10]{Durrett.19},
    \begin{align*}
    \mathbb{E}&\Bigg[\lambda\int_0^Te^{-\beta |\tau-t|}dX_t^{i}+\mathds{1}_{\{\tau>0\}}\lambda\int_0^{\tau-} e^{-\beta (\tau-t)}\sum_{j\not=i}dX_t^{j}+\left( \vartheta_0\Delta X_{0}^{i}+\frac{\lambda}{2}\sum_{j\not=i}\Delta X_{0}^{j}\right)\mathds{1}_{\{\tau=0\}}\bigg|\mathcal{F}_{\tau-}\Bigg]\\
    &=\mathbb{E}\left[\lambda\int_0^Te^{-\beta |\tau-t|}dX_t^{i}+\lambda\int_0^{\tau-} e^{-\beta (\tau-t)}\sum_{j\not=i}dX_t^{j} + \vartheta_\tau \Delta X_{\tau}^{i}+\frac{\lambda}{2}\sum_{j\not=i}\Delta X_{\tau}^{j}\bigg|\mathcal{F}_{\tau-}\right].
    \end{align*}
    Rearranging shows that
    \[\mathbb{E}\left[\left(\vartheta_\tau\Delta X_{\tau}^{i}+\frac{\lambda}{2}\sum_{j\not=i}\Delta X_{\tau}^{j}\right)\mathds{1}_{\{\tau>0\}}\bigg|\mathcal{F}_{\tau-}\right]=0.\]
    In view of the predictability of $\boldsymbol{X}$  we recover \eqref{eqn:eq.jumps.not.0}.

    On the other hand, we can once again choose a sequence of stopping times $\tau_n\downarrow \tau$ satisfying the conditions of Lemma \ref{lem:no.jump.delay.times} with the choice of $\sigma\equiv T$. Applying Lemma \ref{lem:gen.foc.jumps} we have
    \begin{align*}
    \mathbb{E}&\left[\lambda\int_0^Te^{-\beta |\tau_n-t|}dX_t^{i}+\lambda\int_0^{\tau_n-} e^{-\beta (\tau_n-t)}\sum_{j\not=i}dX_t^{j} +\left( \vartheta_T\Delta X_{T}^{i}+\frac{\lambda}{2}\sum_{j\not=i}\Delta X_{T}^{j}\right)\mathds{1}_{\{\tau=T\}}\bigg|\mathcal{F}_{\tau-}\right]\\
    &=\mathbb{E}\left[\lambda\int_0^Te^{-\beta |\tau-t|}dX_t^{i}+\lambda\int_0^{\tau-} e^{-\beta (\tau-t)}\sum_{j\not=i}dX_t^{j} + \vartheta_\tau \Delta X_{\tau}^{i}+\frac{\lambda}{2}\sum_{j\not=i}\Delta X_{\tau}^{j}\bigg|\mathcal{F}_{\tau-}\right].
    \end{align*}
    Passing to the limit, by the dominated convergence theorem,
    \begin{align*}
    \mathbb{E}&\Bigg[\lambda\int_0^Te^{-\beta |\tau-t|}dX_t^{i}+\mathds{1}_{\tau<T}\lambda\int_0^{\tau} e^{-\beta (\tau-t)}\sum_{j\not=i}dX_t^{j} \\
    &\quad \quad +\left( \lambda\int_0^{T-} e^{-\beta (T-t)}\sum_{j\not=i}dX_t^{j} +\vartheta_T\Delta X_{T}^{i}+\frac{\lambda}{2}\sum_{j\not=i}\Delta X_{T}^{j}\right)\mathds{1}_{\{\tau=T\}}\bigg|\mathcal{F}_{\tau-}\Bigg]\\
    &\quad \quad \quad =\mathbb{E}\left[\lambda\int_0^Te^{-\beta |\tau-t|}dX_t^{i}+\lambda\int_0^{\tau-} e^{-\beta (\tau-t)}\sum_{j\not=i}dX_t^{j} + \vartheta_\tau \Delta X_{\tau}^{i}+\frac{\lambda}{2}\sum_{j\not=i}\Delta X_{\tau}^{j}\bigg|\mathcal{F}_{\tau-}\right].
    \end{align*}
    Subtracting the left-hand side from the right we obtain
    \[\mathbb{E}\left[\left( \vartheta_\tau \Delta X_{\tau}^{i}-\frac{\lambda}{2}\sum_{j\not=i}\Delta X_{\tau}^{j}\right)\mathds{1}_{\{\tau<T\}}\bigg|\mathcal{F}_{\tau-}\right]=0.\]
    Using predictability once more yields the condition \eqref{eqn:eq.jumps.not.T}.
\end{proof}

Proposition \ref{prop:no.interior.jumps} is a direct consequence of the following, more precise result.

\begin{proposition}\label{prop:admissible.jumps.extended}
    Fix an admissible strategy profile $\boldsymbol{X}$ and suppose $X^i$ is optimal for $J_B(\cdot,\boldsymbol{X}^{-i})$.
    \begin{enumerate}
    \item[(i)] The initial and terminal jumps of $X^i$ satisfy
    \[\vartheta_0\Delta X_{0}^{i}=\frac{\lambda}{2}\sum_{j\not=i}\Delta X_{0}^{j}, \ \ \ \vartheta_T\Delta X_{T}^{i}=-\frac{\lambda}{2}\sum_{j\not=i}\Delta X_{T}^{j}.\]
    \item[(ii)] If $\vartheta_t>0$ at $t\in(0,T)$ then $X^i$ has no interior jump at $t$,
    \[\Delta X_{t}^{i} = 0, \ \ \ a.s.\]
    \item[(iii)] If there exists $t\in[0,T]$ such that $\vartheta_t=0$ and $\mathbb{P}\left(\sum_{j\not=i}\Delta X_t^j \not=0\right)>0$, then no optimal strategy $X^i$ exists.
    \end{enumerate}
    If $\boldsymbol{X}$ is a Nash Equilibrium, then there are no interior jumps, irrespective of~$\vartheta$. Furthermore, if $\vartheta_0=0$ or $\vartheta_T=0$, then there are no jumps at $0$ or $T$ in equilibrium, respectively.
\end{proposition}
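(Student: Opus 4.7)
My plan is to read off each claim almost directly from Lemma~\ref{lem:main.step.for.no.interior.jumps} by plugging in suitable predictable stopping times and then combining the two resulting identities. Part~(i) is immediate: setting $\tau\equiv 0$ makes $\mathds{1}_{\{\tau<T\}}\equiv 1$, so \eqref{eqn:eq.jumps.not.T} delivers $\vartheta_{0}\Delta X_{0}^{i}=\tfrac{\lambda}{2}\sum_{j\neq i}\Delta X_{0}^{j}$, and symmetrically $\tau\equiv T$ in \eqref{eqn:eq.jumps.not.0} yields the terminal identity. For part~(ii) I would choose $\tau\equiv t$ for the given interior $t$; now both indicators equal $1$, so adding the two identities cancels the $\sum_{j\neq i}\Delta X_{t}^{j}$ term and produces $2\vartheta_{t}\Delta X_{t}^{i}=0$ a.s., whence $\vartheta_{t}>0$ gives the conclusion.

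For part~(iii) the argument is by contradiction. Suppose an optimal $X^{i}$ existed at the hypothetical $t$ where $\vartheta_{t}=0$. If $t\in(0,T)$, subtracting \eqref{eqn:eq.jumps.not.T} from \eqref{eqn:eq.jumps.not.0} at $\tau\equiv t$ (again both indicators are $1$) isolates $\lambda\sum_{j\neq i}\Delta X_{t}^{j}=0$ a.s.; if $t=0$, \eqref{eqn:eq.jumps.not.T} at $\tau\equiv 0$ specialises to $\tfrac{\lambda}{2}\sum_{j\neq i}\Delta X_{0}^{j}=0$ a.s.; and if $t=T$, \eqref{eqn:eq.jumps.not.0} at $\tau\equiv T$ gives the same conclusion at $T$. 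In every case this contradicts $\mathbb{P}(\sum_{j\neq i}\Delta X_{t}^{j}\neq 0)>0$.

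The Nash-equilibrium coda combines these three parts across all players. Fix $t\in(0,T)$. If $\vartheta_{t}>0$, (ii) applied to every $i$ already gives $\Delta X_{t}^{i}=0$ a.s. If $\vartheta_{t}=0$, then since each $X^{i}$ is optimal against $\boldsymbol{X}^{-i}$, the contrapositive of (iii) forces $\sum_{j\neq i}\Delta X_{t}^{j}=0$ a.s.\ for every $i$. Intersecting these finitely many probability-one events, the linear system $S-\Delta X_{t}^{i}=0$ for every $i$ (with $S:=\sum_{j}\Delta X_{t}^{j}$) holds pathwise, so all $\Delta X_{t}^{i}$ coincide with $S$; summing over $i$ yields $NS=S$, hence $S=0$ and $\Delta X_{t}^{i}=0$ a.s.\ for all $i$. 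The same linear-algebra step, combined with (i), handles the boundary cases $\vartheta_{0}=0$ and $\vartheta_{T}=0$. I expect no real obstacle here; all genuine analytic work has been absorbed into Lemma~\ref{lem:main.step.for.no.interior.jumps}, and the present proposition is essentially a bookkeeping exercise in stopping-time choices plus a one-step pathwise linear-algebra argument.
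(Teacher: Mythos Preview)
Your proposal is correct and follows essentially the same approach as the paper: parts (i)--(iii) are obtained exactly as you describe, by specialising $\tau$ in \eqref{eqn:eq.jumps.not.0} and \eqref{eqn:eq.jumps.not.T} and then adding or subtracting, and the equilibrium coda is the same linear-algebra observation (the paper phrases it as summing the identities over $i$ to get $\tfrac{\lambda(N-1)}{2}\sum_i\Delta X_t^{*,i}=0$, which is equivalent to your $NS=S$ step).
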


\begin{proof}
(i) Taking $\tau\equiv T$ in \eqref{eqn:eq.jumps.not.0} and $\tau\equiv0$ in \eqref{eqn:eq.jumps.not.T} gives the two equations in the proposition. (ii) Let $\vartheta_t>0$ for some $t\in(0,T)$. Then we can take $\tau\equiv t$ and sum across \eqref{eqn:eq.jumps.not.0} and \eqref{eqn:eq.jumps.not.T} to get
\[2\vartheta_t\Delta X_{t}^{i}=0, \ \ \ \text{a.s.},\]
which implies the result.
(iii) Fix $t$ such that $\vartheta_t=0$ and $\mathbb{P}\left(\sum_{j\not=i}\Delta X_t^j \not=0\right)>0$. By Lemma~\ref{lem:main.step.for.no.interior.jumps} if $X^{i}$ is admissible and optimal for $\boldsymbol{X}^{-i}$ then one (or both) of \eqref{eqn:eq.jumps.not.0} and \eqref{eqn:eq.jumps.not.T} hold at $t$. In this case these read
    \begin{equation}\label{eqn:eqns.when.theta.is.0} \frac{\lambda}{2}\sum_{j\not=i}\Delta X_{t}^{j}\mathds{1}_{\{t>0\}}=0 \ \ \ \mbox{a.s.} \ \ \ \text{and} \ \ \ -\frac{\lambda}{2}\sum_{j\not=i}\Delta X_{t}^{j}\mathds{1}_{\{t<T\}}=0 \ \ \ \mbox{a.s.}
    \end{equation}
    which, as $\lambda>0$, implies the contradiction that $\sum_{j\not=i}\Delta X_{t}^{j}=0$ a.s. 

To characterize the jumps in equilibrium we leverage the concurrent satisfaction of \eqref{eqn:eq.jumps.not.0} and \eqref{eqn:eq.jumps.not.T} for all $i$. If $\vartheta_t>0$ for $t\in(0,T)$, the absence of interior jumps at $t$ follows from~(ii). If $\vartheta_t=0$ for $t\in[0,T]$ (note the inclusion of $0$ and $T$) then in equilibrium we can sum over $i=1,\dots,N$ in either \eqref{eqn:eq.jumps.not.0} or \eqref{eqn:eq.jumps.not.T} to get
\[\frac{\lambda(N-1)}{2}\sum_{i=1}^N\Delta X_t^{*,i}=0.\]
Again, as $N\geq2$ and $\lambda>0$ this implies $\sum_{i=1}^N\Delta X_t^{*,i}=0$ which, when combined with \eqref{eqn:eq.jumps.not.0} or \eqref{eqn:eq.jumps.not.T} holding at $t$ for all $i$ (see \eqref{eqn:eqns.when.theta.is.0}), implies that $\Delta X_t^{*,i}=0$ for all $i$. This completes the proof. 
\end{proof}

\subsection{Lemma \ref{lem:nash.eq.a.c.}}
      Let $\boldsymbol{X}^*=(X^{*,1},\dots,X^{*,N})$ be a Nash equilibrium in the class of deterministic strategies that are absolutely continuous on $(0,T)$. 
      By Lemma \ref{lem:nash.eq.det} it suffices to show that $\boldsymbol{X}^*$ is also a  Nash equilibrium in the class of deterministic strategies. Suppose for contradiction that $\boldsymbol{X}^*$ is not a Nash equilibrium in the class of deterministic strategies. Then, there exists an $i\in\{1,\dots,N\}$ and a deterministic admissible strategy $Z=(Z_t)_{t\geq0}$ for trader $i$ satisfying
     \[J_B(Z;\boldsymbol{X}^{*,-i})<J_B(X^{*,i};\boldsymbol{X}^{*,-i}). \]
     By definition of $C_B(
    \cdot)$ (see \eqref{eqn:CB}) it is safe to assume that $Z_T=0$. 
    
    We will approximate $Z$ by a sequence of admissible controls that are absolutely continuous on $(0,T)$.  For concreteness, we will use the Bernstein polynomials $\mathbb{B}_n(f)$ defined for functions $f$ on $[0,1]$ via the Bernstein operator $\mathbb{B}_n$, $n= 1, 2, \ldots$,
    \[
    \mathbb{B}_n(f)(t)=\sum_{k=0}^nf\left(\frac{k}{n}\right)\binom{n}{k}t^k(1-t)^{n-k}, \quad t\in[0,1].
    \]
    We extend the definition to functions on $[0,T]$ via the isomorphism $\iota(t) = t/T$ and abuse notation by still writing $\mathbb{B}_n(f)$ for this approximation. Define an auxiliary process $\tilde{Z}$ by $\tilde{Z}_t=Z_t$ for $t\in[0,T)$ and $\tilde{Z}_{T}=Z_{T-}$. We define our approximating sequence $(Z^n)_{n\geq1}$ by $Z^n_t:=\mathbb{B}_n(\tilde{Z})(t)$ on $[0,T)$ with $Z_{0-}^n=Z_{0-}=x^i$ and $Z_{T}^n=Z_{T}=0$.
    
    We collect here several critical properties of the approximating sequence.
    \begin{enumerate}
        \item[(a)] (Matching Endpoints) By definition of $\mathbb{B}_n$ and $Z^n$, there is no approximation error at the endpoints $\{0-,0,T-,T\}$.
        \item[(b)] (Smoothing)  Let $TV(\cdot;[a,b])$ denote the total variation of a function on $[a,b]$. The smoothing property (see \cite{lorentz1937theorie} or \cite[Proposition 4.16]{bustamante2017bernstein}) of $\mathbb{B}_n$ gives $TV(Z^n;[0,T))\leq TV(Z;[0,T))$. By additivity of the total variation on intervals and (a), we can extend this to include any jumps at $0$ and $T$. That is, $TV(Z^n;[0-,T])\leq TV(Z;[0-,T])$. 
        \item[(c)] (Uniformly Bounded) As $TV(Z^n;[0-,T])$ is uniformly bounded by (b) and $Z^n_{0-}=x^i$ for all $n$, we have that $\|Z^n\|_\infty$ and $\|Z^n-Z\|_\infty$ are uniformly bounded.
        \item[(d)] (Consistency) Since $Z$ is c\`adl\`ag, it only has discontinuities of the first kind. As a result (see, e.g., %
        \cite[Section~4.5.1]{bustamante2017bernstein})
        \[Z^n_t\to \frac{Z_{t-}+Z_{t+}}{2}, \ \ \ \forall t\in(0,T).\]
        Since the discontinuities of $Z$ are at most countable, $Z^n$ converges to $Z$ almost everywhere. Moreover, by (c) and the dominated convergence theorem, $Z^n\to Z$ in~$L^1$.
    \end{enumerate}
    In view of (a), it is clear that $Z^n$ is an admissible deterministic strategy. Moreover, by the definition of $\mathbb{B}_n$ it is absolutely continuous on $(0,T)$. The objective representation in \eqref{eqn:obj.schied.rep} tells us that for $Z$,
    \begin{align*}J_B(Z;\boldsymbol{X}^{*,-i})&=\lambda\Bigg(\frac{1}{2}\int_0^T\int_0^T e^{-\beta |t-s|} dZ_s dZ_t + \int_0^T\int_0^{t-} e^{-\beta (t-s)} \sum_{j\not=i}dX_s^{*,j} dZ_t\\
    &\quad \quad \quad \quad + \frac{1}{2}\sum_{j\not=i}\sum_{t\in\{0,T\}} \Delta X_t^{*,j}\Delta Z_t \Bigg )+\frac{1}{2}\sum_{t\in[0,T]} \vartheta_t(\Delta Z_t)^2.
    \end{align*}
    Note that we have enforced in \eqref{eqn:obj.schied.rep} the assumption that $\Delta X_t^{*,j}=0$ for all $t\in(0,T)$ when $j\not=i$. For $Z^n$ we have
    \begin{align*}J_B(Z^n;\boldsymbol{X}^{*,-i})&=\lambda\Bigg(\frac{1}{2}\int_0^T\int_0^T e^{-\beta |t-s|} dZ^n_s dZ^n_t + \int_0^T\int_0^{t-} e^{-\beta (t-s)} \sum_{j\not=i}dX_s^{*,j} dZ^n_t\\
    &\quad \quad \quad \quad + \frac{1}{2}\sum_{j\not=i}\sum_{t\in\{0,T\}} \Delta X_t^{*,j}\Delta Z^n_t \Bigg )+\frac{1}{2}\sum_{t\in\{0,T\}} \vartheta_t(\Delta Z^n_t)^2\\
    &=\lambda\Bigg(\frac{1}{2}\int_0^T\int_0^T e^{-\beta |t-s|} dZ^n_s dZ^n_t + \int_0^T\int_0^{t-} e^{-\beta (t-s)} \sum_{j\not=i}dX_s^{*,j} dZ^n_t\\
    &\quad \quad \quad \quad + \frac{1}{2}\sum_{j\not=i}\sum_{t\in\{0,T\}} \Delta X_t^{*,j}\Delta Z_t \Bigg )+\frac{1}{2}\sum_{t\in\{0,T\}} \vartheta_t(\Delta Z_t)^2.
    \end{align*}
    In the second equality we have used (a) to identify the jumps at the endpoints with those of~$Z$.
    
    Next, we claim that
    \begin{align}
    \lim_{n\to\infty}&\left(J_B(Z^n;\boldsymbol{X}^{*,-i})-J_B(Z;\boldsymbol{X}^{*,-i})\right)\label{eqn:limit.obj.approx.seq}\\
    &=\lim_{n\to\infty}\Bigg[\lambda\bigg(\frac{1}{2}\int_0^T\int_0^T e^{-\beta |t-s|} dZ^n_s dZ^n_t -\frac{1}{2}\int_0^T\int_0^T e^{-\beta |t-s|} dZ_s dZ_t\nonumber\\
    &\quad + \int_0^T\int_0^{t-} e^{-\beta (t-s)} \sum_{j\not=i}dX_s^{*,j} dZ^n_t-\int_0^T\int_0^{t-} e^{-\beta (t-s)} \sum_{j\not=i}dX_s^{*,j} dZ_t\bigg)-\frac{1}{2}\sum_{t\in (0,T)} \vartheta_t(\Delta Z_t)^2\Bigg]\nonumber \\
    &=-\frac{1}{2}\sum_{t\in (0,T)} \vartheta_t(\Delta Z_t)^2\leq 0.\nonumber
    \end{align}
    We will prove this by treating each of the integrals in turn. Using Lebesgue--Stieltjes integration by parts and (a),
    \begin{align}
        \int_0^T e^{-\beta|t-s|}dZ^n_s &= e^{-\beta(T-t)} Z_T^n-e^{-\beta t}Z^n_{0-}-\beta \int_0^T \mathrm{sign}(t-s)Z_s^ne^{-\beta|t-s|}ds\nonumber\\
        &=-e^{-\beta t}x^i-\beta \int_0^T \mathrm{sign}(t-s)Z_s^ne^{-\beta|t-s|}ds, \ \ \ t\in[0,T]. \label{eqn:IBP.H}
    \end{align}
    Repeating this for $\int_0^T e^{-\beta|t-s|}dZ_s$ gives, using (d), that
    \begin{align}
        \left|\int_0^T e^{-\beta|t-s|}dZ^n_s-\int_0^T e^{-\beta|t-s|}dZ_s\right| &\leq \beta \int_0^T|Z_s^n-Z_s|e^{-\beta|t-s|}ds\leq \beta \|Z^n-Z\|_{L^1}\to 0. \label{eqn:unif.est.conv.H}
    \end{align}
    To keep notation compact define $H^n_t:=\int_0^T e^{-\beta|t-s|}dZ^n_s$ and $H_t:=\int_0^T e^{-\beta|t-s|}dZ_s$. By \eqref{eqn:IBP.H} and dominated convergence, it is clear that $H^n$ (and $H$) are continuous in $t$. Moreover, the uniform estimate in \eqref{eqn:unif.est.conv.H} shows that $\|H^n-H\|_\infty \to 0$. As a result, by applying (b),
    \begin{align*}
        \left|\int_0^T H^n_tdZ^n_t-\int_0^TH_tdZ^n_t\right|&\leq \|H^n-H\|_\infty TV(Z^n;[0-,T])\leq \|H^n-H\|_\infty TV(Z;[0-,T])\to 0.
    \end{align*}
    At the same time, by Fubini's theorem and the symmetry $|t-s|=|s-t|$,
    \begin{align*}
        \left|\int_0^T H_t dZ^n_t-\int_0^T H_t dZ_t \right|&= \left|\int_0^T \int_0^T e^{-\beta|t-s|} dZ_s dZ^n_t-\int_0^T H_t dZ_t\right|\\
        &=\left|\int_0^T \int_0^T e^{-\beta|t-s|} dZ^n_t dZ_s-\int_0^T H_t dZ_t\right|\\
        &=\left|\int_0^T H^n_s dZ_s-\int_0^T H_t dZ_t\right|\\
        &\leq \|H^n-H\|_\infty TV(Z;[0-,T])\to 0.
    \end{align*}
    Taken together,
    \begin{align}
        &\left|\int_0^T\int_0^T e^{-\beta |t-s|} dZ^n_s dZ^n_t -\int_0^T\int_0^T e^{-\beta |t-s|} dZ_s dZ_t\right|\label{eqn:first.est.obj.approx}\\
        &\quad \quad \quad \quad \quad \quad \quad \quad \quad \quad \quad \quad=\left|\int_0^TH^n_tdZ^n_t-\int_0^TH_tdZ_t\right|\nonumber\\
        &\quad \quad \quad \quad \quad \quad \quad \quad \quad \quad \quad \quad \leq \left|\int_0^TH^n_t dZ^n_t-\int_0^T H_t dZ^n_t\right|+\left|\int_0^TH_t dZ_t^n-\int_0^T H_t dZ_t\right|\to 0.\nonumber
    \end{align}

    We turn to the final set of integrals in \eqref{eqn:limit.obj.approx.seq}. Define the function $I^{-i}$ by
    \[I^{-i}_t:= \int_0^te^{-\beta(t-s)}\sum_{j\not=i}dX^{*,j}_s, \ \ \ t\in[0,T],\]
    with $I_{0-}^{-i}=0$. Notice that this does not depend on $Z^n$ or $Z$, and by our assumption on $\boldsymbol{X}^*$, $I^{-i}$ has finite variation and is absolutely continuous on $(0,T)$. By integration by parts (using $Z^n_T=0$ and $I^{-i}_{0-}=0$),
    \begin{align*}
        \int_0^T\int_0^{t-} e^{-\beta (t-s)} \sum_{j\not=i}dX_s^{*,j} dZ^n_t&=\int_0^TI^{-i}_{t-} dZ^n_t= -\int_0^T Z_{t-}^n dI_t^{-i}-\sum_{t\in\{0,T\}} \Delta I^{-i}_t \Delta Z^n_t.
    \end{align*}
    Repeating this for $Z$, it follows that
    \begin{align*}\int_0^T\int_0^{t-} e^{-\beta (t-s)} \sum_{j\not=i}dX_s^{*,j} dZ^n_t-&\int_0^T\int_0^{t-} e^{-\beta (t-s)} \sum_{j\not=i}dX_s^{*,j} dZ_t\\
    &=\int_0^T (Z_{t-}-Z_{t-}^n) dI_t^{-i}+\sum_{t\in\{0,T\}} \Delta I^{-i}_t (\Delta Z_t-\Delta Z^n_t).
    \end{align*}
    The second term on the right-hand side is zero by the construction of $Z^n$ and property $(a)$ which gives that $\Delta Z_0=\Delta Z^n_0$ and $\Delta Z_T=\Delta Z^n_T$. It follows that
    \begin{align}
        &\left|\int_0^T\int_0^{t-} e^{-\beta (t-s)} \sum_{j\not=i}dX_s^{*,j} dZ^n_t-\int_0^T\int_0^{t-} e^{-\beta (t-s)} \sum_{j\not=i}dX_s^{*,j} dZ_t\right|\label{eqn:second.est.obj.approx}\\
        &\quad \quad \quad \quad \quad \quad \quad \quad \quad \quad \quad \quad \quad \quad = \left|\int_0^T (Z_{t-}^n -Z_{t-})dI_t^{-i}\right|\leq \int_0^T\left| Z_{t-}^n -Z_{t-}\right||dI_t^{-i}|,\nonumber
    \end{align}
    where $|dI_t^{-i}|$ denotes the total variation measure associated with $I^{-i}$. This measure is finite and its atoms, if any, are located at $0$ and $T$ where the integrand is zero by (a). On $(0,T)$ $|dI_t^{-i}|$ is absolutely continuous with respect to the Lebesgue measure. 
    
    Then, since $\|Z^n-Z\|_\infty$ is bounded by (c), $Z^n$ is continuous on $(0,T)$, $Z$ is c\`adl\`ag with $Z_{t-}=Z_t$ Lebesgue-a.e., and $Z^n\to Z$ Lebesgue-a.e., we have $Z^n_{t-}\to Z_{t-}$ Lebesgue-a.e. on $(0,T)$. As $|dI^{-i}|$ is absolutely continuous with respect to the Lebesgue measure on $(0,T)$, and as the endpoint atoms have already been accounted for above, this convergence also holds $|dI^{-i}|$-a.e.
\begin{equation}\label{eqn:third.est.obj.approx} \int_0^T\left| Z_{t-}^n -Z_{t-}\right||dI_t^{-i}|\to 0
    \end{equation}
    by dominated convergence.
    Combining \eqref{eqn:first.est.obj.approx},  \eqref{eqn:second.est.obj.approx}, and \eqref{eqn:third.est.obj.approx} proves \eqref{eqn:limit.obj.approx.seq}.

    We conclude
    \[\lim_{n\to\infty}J_B(Z^n;\boldsymbol{X}^{*,-i})\leq J_B(Z;\boldsymbol{X}^{*,-i})<J_B(X^{*,i};\boldsymbol{X}^{*,-i}).\]
    But then
     $J_B(Z^{n};\boldsymbol{X}^{*,-i})<J_B(X^{*,i};\boldsymbol{X}^{*,-i})$
     for $n$ sufficiently large, contradicting the optimality of $X^{*,i}$ in the class of strategies that are absolutely continuous on $(0,T)$. \qed

\subsection{Lemma \ref{lem:block.cost.ODE}}

To derive the characterization in Lemma \ref{lem:block.cost.ODE} we will make use of the fact that $\mathbb{R}\times L^2[0,T]$ is a Hilbert space when equipped with the inner product
\[\langle (a,v), (a',v') \rangle = a a' + \langle v,v'\rangle_{L^2[0,T]}.\]
As in the proof of Lemma \ref{lem:term.pen.ODE}, we will take the Gateaux differential of $\mathcal{J}_{B}$, but this time in an arbitrary direction $(h,\eta)\in \mathbb{R}\times L^2[0,T]$. We begin by taking the Gateaux differential of the objective $\mathcal{J}_B(\cdot,\cdot;\boldsymbol{v}^{-i})$ in the direction $(0,\eta)$ in steps. Observe
\begin{equation*}
\delta_{(0,\eta)} b^{i}=-\int_0^T\eta_tdt, \ \ \ \text{and} \ \ \ 
    \delta_{(0,\eta)} \left(\frac{\theta_a}{2}(a^{i})^2+\frac{\theta_b}{2}(b^{i})^2\right)=\theta_b b^{i}\delta_{(0,\eta)} b^{i}=-\int_0^T\theta_b b^{i}\eta_tdt.
\end{equation*}
Similarly,
\begin{equation*}
    \delta_{(0,\eta)} I_t=\int_0^t e^{-\beta(t-s)}\lambda\eta_sds, \ \ \  t\in[0,T)
\end{equation*}
and
\begin{align*}
    \delta_{(0,\eta)}\left[\frac{1}{2}(I_{T-}+I_T)b^{i}\right]&=\delta_{(0,\eta)}\left[I_{T-}b^{i}+\frac{\lambda}{2}\left(\sum_{j=1}^Nb^{j}\right)b^{i}\right]\\
    &=\delta_{(0,\eta)} I_{T-}b^{i}+I_{T}\delta_{(0,\eta)} b^{i}-\frac{\lambda}{2} \left(\sum_{j\not=i}b^{j}\right)\delta_{(0,\eta)} b^{i}.
\end{align*}
Working now directly with $\mathcal{J}_B$ we get
\begin{align*}
    \delta_{(0,\eta)} \mathcal{J}_{B}(a^{i},v^{i};\boldsymbol{v}^{-i})
    &=\int_0^T[\delta_{(0,\eta)} I_tv_t^{i}+\eta_tI_t]dt+\delta_{(0,\eta)} I_{T-}b^{i}+I_{T}\delta_{(0,\eta)} b^{i} \\
    &\quad \quad -\frac{\lambda}{2} \left(\sum_{j\not=i}b^{j}\right)\delta_{(0,\eta)} b^{i}-\int_0^T\theta_b b^{i}\eta_tdt\\
    &=\int_0^T[\delta_{(0,\eta)} I_tv_t^{i}+\eta_tI_t]dt+b^{i}\int_0^T e^{-\beta(T-t)}\lambda\eta_tdt-\int_0^TI_{T}\eta_tdt\\
    &\quad \quad +\int_0^T\frac{\lambda}{2} \left(\sum_{j\not=i}b^{j}\right)\eta_tdt-\int_0^T\theta_b b^{i}\eta_tdt\\
    &=\int_0^T\delta_{(0,\eta)} I_tv_t^{i}dt+\int_0^T\left[I_t-I_{T}+\lambda b^{i} e^{-\beta(T-t)}+\frac{\lambda}{2}\left(\sum_{j\not=i}b^{j}\right)-\theta_b b^{i}\right]\eta_tdt.
\end{align*}
By changing the order of integration, the first term can be written
\begin{align*}
    \int_0^T\delta_{(0,\eta)} I_tv_t^{i}dt&=\int_0^T\int_0^t\lambda e^{-\beta(t-s)}v_t^{i}\eta_sdsdt=\int_0^T\int_s^T\lambda e^{-\beta(t-s)}v_t^{i}\eta_sdtds.
\end{align*}
If we let
\begin{equation*}
    Y_t^{i}:=\lambda b^{i} e^{-\beta(T-t)}+\int_t^T\lambda e^{-\beta (s-t)}v_s^{i} ds, \ \ \ t\in [0,T],
\end{equation*}
then the Gateaux differential can be written
\begin{equation}\label{eqn:dJ.eta.block}
    \delta_{(0,\eta)} \mathcal{J}_{B}(a^{i},v^{i};\boldsymbol{v}^{-i})=\int_0^T\left[Y^{i}_t+I_t-I_{T}+\frac{\lambda}{2}\left(\sum_{j\not=i}b^{j}\right)-\theta_b b^{i}\right]\eta_tdt.
\end{equation}

Next, we take the Gateaux differential in the direction $(1,0)\in\mathbb{R}\times L^2[0,T]$. Note that 
\begin{equation*}
    \delta_{(1,0)} I_t=\lambda e^{-\beta t}, \ \ t\in[0,T),
\end{equation*}
\begin{equation*}
    \delta_{(1,0)} \left(\frac{\theta_a}{2}(a^{i})^2+\frac{\theta_b}{2}(b^{i})^2\right)=\theta_a a^{i}+\theta_b b^{i} \delta_{(1,0)} b^{i}=\theta_aa^{i}-\theta_b b^{i},
\end{equation*}
and
\begin{align*}\delta_{(1,0)}\left[\frac{1}{2}I_0a^{i}\right]&=\frac{1}{2}\left[\delta_{(1,0)} I_0 a^{i}+\lambda \left(a^{i}+\sum_{j\not=i}a^{j}\right)\right]=\lambda a^{i}+\frac{\lambda}{2}\sum_{j\not=i}a^{j}.
\end{align*}
Moreover,
\begin{align*}
    \delta_{(1,0)}\left[\frac{1}{2}\left(I_{T-}+I_T\right)b^{i}\right]&=\delta_{(1,0)}\left[I_{T-}b^{i}+\frac{\lambda}{2}\left(b^{i}+\sum_{j\not=i}b^{j}\right)b^{i}\right]\\
    &=\delta_{(1,0)}I_{T-}b^{i}+I_{T-}\delta_{(1,0)} b^{i}+\lambda b^{i}\delta_{(1,0)} b^{i}+\frac{\lambda}{2}\left(\sum_{j\not=i}b^{j}\right)\delta_{(1,0)} b^{i}\\
    &=\delta_{(1,0)} I_{T-}b^{i}-I_{T-}-\lambda b^{i}-\frac{\lambda}{2}\left(\sum_{j\not=i}b^{j}\right)\\
    &=\delta_{(1,0)} I_{T-}b^{i}-I_T+\frac{\lambda}{2}\left(\sum_{j\not=i}b^{j}\right).
\end{align*}
Putting all this together we find
\begin{align*}
\delta_{(1,0)} & \mathcal{J}_{B}(a^i,v^i;\boldsymbol{v}^{-i}) \\
&=  \lambda a^{i}+\frac{\lambda}{2}\sum_{j\not=i}a^{j}+ \int_0^T\delta_{(1,0)} I_t v_t^{i} dt+\delta_{(1,0)}I_{T-}b^{i} -I_T+\frac{\lambda}{2}\left(\sum_{j\not=i}b^{j}\right)+\theta_a a^{i}-\theta_b b^{i}\\
    &=\lambda a^{i}+\frac{\lambda}{2}\sum_{j\not=i}a^{j}+ \int_0^T\lambda v_t^{i} e^{-\beta t}dt +\lambda e^{-\beta T}b^{i} -I_{T}+\frac{\lambda}{2}\left(\sum_{j\not=i}b^{j}\right)+\theta_a a^{i}-\theta_b b^{i}.
\end{align*}
Rewriting the last equality yields
\begin{equation}\label{eqn:dJ.a.block}
\delta_{(1,0)}\mathcal{J}_{B}(a^i,v^i;\boldsymbol{v}^{-i})= I_0-I_{T} + Y_0^i+\theta_a a^{i}-\frac{\lambda}{2}\sum_{j\not=i}a^{j}+\frac{\lambda}{2}\sum_{j\not=i}b^{j}-\theta_b b^{i}.
\end{equation}
Combining \eqref{eqn:dJ.eta.block} and \eqref{eqn:dJ.a.block} we get the Gateaux differential in an arbitrary direction $(h,\eta)$,
\begin{align*}
    \delta_{(h,\eta)}\mathcal{J}_{B}(a^i,v^i;\boldsymbol{v}^{-i}) &= \left(I_0-I_{T} + Y_0^i+\theta_a a^{i}-\frac{\lambda}{2}\sum_{j\not=i}a^{j}+\frac{\lambda}{2}\sum_{j\not=i}b^{j}-\theta_b b^{i}\right) h\\
    & \quad \quad + \int_0^T\left[Y^{i}_t+I_t-I_{T}+\frac{\lambda}{2}\left(\sum_{j\not=i}b^{j}\right)-\theta_b b^{i}\right]\eta_tdt.
\end{align*}
Reasoning as in Lemma \ref{lem:term.pen.ODE} (see also Appendix \ref{app:opt.theory}) we have that the necessary and sufficient first-order conditions for optimality are
\begin{equation}\label{eqn:foc.block.tax.eta}    \boxed{I_{T}=Y^{i}_t+I_t+\frac{\lambda}{2}\left(\sum_{j\not=i}b^{j}\right)-\theta_b b^{i}, \ \ \ t\in[0,T)}
\end{equation}
and
\begin{equation}\label{eqn:foc.block.tax.a}    \boxed{I_T=Y_{0}^{i}+I_0+\frac{\lambda}{2}\left(\sum_{j\not=i}b^{j}\right)-\theta_b b^{i}+\theta_a a^{i}-\frac{\lambda}{2}\left(\sum_{j\not=i}a^{j}\right).}
\end{equation}

By symmetry, an equilibrium is achieved if and only if these first-order conditions hold simultaneously for all $i=1,\dots,N$. Differentiating \eqref{eqn:foc.block.tax.eta} (note that $I$ and $Y$ are differentiable almost everywhere by definition) leads to the equilibrium system of ODEs,
\begin{align*}
    0&=\dot{Y}^{i}_t+\dot{I}_t, \ \ \ i=1,\dots,N,\\
    \dot{Y}^{i}_t&=\beta Y^{i}_t-\lambda v^{i}_t, \ \ \ i=1,\dots,N,\\
    \dot{I}_t&=-\beta I_t+\lambda\sum_{i=1}^Nv_t^{i}.
\end{align*}
Writing this in terms of $X_t^{i}$, for almost every $t\in(0,T)$,
\begin{align*}
    0&=\dot{Y}^{i}_t+\dot{I}_t, \ \ \ i=1,\dots,N,\\
    \dot{Y}^{i}_t&=\beta Y^{i}_t-\lambda \dot{X}_t^{i}, \ \ \ i=1,\dots,N,\\
    \dot{I}_t&=-\beta I_t+\lambda\sum_{i=1}^N\dot{X}_t^{i},
\end{align*}
subject to the initial and terminal conditions
\begin{align*}
    I_0&=\lambda\sum_{i=1}^Na^{i}, \ \ \ 
    X_0^{i}=x^{i}+a^{i}, \ \ \ 
    Y_T^{i}=\lambda b^{i}, \ \ \ i=1,\dots,N.
\end{align*}
By rearranging the system we can write it in the standard form reported in Lemma \ref{lem:block.cost.ODE}. Moreover, from that representation we see that any solution to this system must have derivatives that are equal almost everywhere to continuous functions. Thus, without loss of generality, we identify the derivatives of the equilibrium $I$, $Y^i$ and $X^i$ (if an equilibrium exists) with their continuous versions and interpret the ODE in the classical sense.

We now turn to the additional consistency conditions that the initial and terminal block trades must satisfy. Specifically, letting $t\downarrow 0 $ in \eqref{eqn:foc.block.tax.eta} and comparing with \eqref{eqn:foc.block.tax.a} we find that the optimal $a$ must satisfy
\begin{equation}\label{eqn:a.cond}\boxed{\theta_a a^{i} =\frac{\lambda}{2}\sum_{j\not=i}a^{j}.}
\end{equation}
Sending $t\uparrow T$ in \eqref{eqn:foc.block.tax.eta} we get
\begin{equation*}
    I_{T}=\lambda b^{i}+I_{T-}+\frac{\lambda}{2}\left(\sum_{j\not=i}b^{j}\right)-\theta_b b^{i}.
\end{equation*}
By rearranging,
\[\lambda\sum_{j=1}^Nb^{j}=\lambda b^{i}+\frac{\lambda}{2}\left(\sum_{j\not=i}b^{j}\right)-\theta_b b^{i},\]
which yields
\begin{equation}\label{eqn:b.cond.1}
\boxed{\theta_b b^{i}=-\frac{\lambda}{2}\sum_{j\not=i}b^{j}.}
\end{equation}
Moreover, since we require liquidation by $T$,
\begin{equation}\label{eqn:b.cond.2}
    \boxed{b^{i}=-X^{i}_{T-}, \ \ \ i=1,\dots,N.}
\end{equation}
Again, by symmetry in equilibrium, we arrive at the conditions reported in the lemma. Note that \eqref{eqn:a.cond} and \eqref{eqn:b.cond.1} are exactly the conditions of Proposition \ref{prop:no.interior.jumps}. In summary, these ODEs and consistency conditions necessarily hold in equilibrium.

For sufficiency we use that an equilibrium is attained if and only if \eqref{eqn:foc.block.tax.eta} and \eqref{eqn:foc.block.tax.a}  hold. Enforcing \eqref{eqn:a.cond}, \eqref{eqn:b.cond.1} and \eqref{eqn:b.cond.2} we have that \eqref{eqn:foc.block.tax.eta} and \eqref{eqn:foc.block.tax.a} reduce (after some manipulation) to
\[I_{T-}-I_t=-(Y^{i}_T-Y^{i}_t) \ \ \ \text{and} \ \ \ I_{T-}-I_0=-(Y^{i}_T-Y^{i}_{0}).\]
This must be satisfied if the ODE holds for $I$ on $[0,T)$ and $Y$ on $[0,T]$ since
\[I_{T-}-I_t=\int_t^T\dot I_sds \ \ \ \text{and} \ \ \ -(Y^{i}_T-Y^{i}_t)=-\int_t^T\dot Y_s^ids.\]
The right-hand side of both these equations must coincide by the ODE for all $t\in[0,T)$.
Thus, to obtain an equilibrium for the game it is sufficient to solve the ODE for fixed boundary conditions and enforce the consistency equations \eqref{eqn:a.cond},  \eqref{eqn:b.cond.1}, and \eqref{eqn:b.cond.2} for the boundaries. \qed

\subsection{Theorem \ref{thm:equil.block.cost}}

We will prove Theorem \ref{thm:equil.block.cost} in stages. First we will fix the boundary conditions and solve the ODE from Lemma \ref{lem:block.cost.ODE}. Then, we will use the consistency conditions to show conclusions (1)--(4) for a deterministic equilibrium and derive the form of the equilibrium (when it exists). Finally, we will show that when a deterministic equilibrium does not exist, neither does an equilibrium in the full class of admissible strategies.

\subsubsection*{Step 1: Solving the ODE for fixed boundary conditions.}
Defining the vector $\boldsymbol{F}=(I,Y^{1},\dots,Y^{N},X^{1},\dots,X^{N})$, we can write the ODE  from Lemma~\ref{lem:block.cost.ODE} in the matrix form
\[\dot{\boldsymbol{F}}_t=A \boldsymbol{F}_t\]
where
\[A=\frac{\beta}{N-1}\begin{bmatrix}
1 & -1 & -1 &  \dots & -1 & -1 & 0 & \dots & 0 \\
-1 & 1 & 1 & \dots & 1 & 1 & 0 & \dots & 0 \\
\vdots & \vdots & \vdots & \vdots & \vdots & \vdots & \vdots & \vdots & \vdots \\
-1 & 1 & 1 & \dots & 1 & 1 & 0 & \dots & 0 \\
\lambda^{-1} & (N-2)\lambda^{-1} & -\lambda^{-1} & \dots & -\lambda^{-1} & -\lambda^{-1} & 0 & \dots & 0 \\
 \lambda^{-1} &  -\lambda^{-1}  & (N-2)\lambda^{-1} & \dots & -\lambda^{-1} & -\lambda^{-1} & 0 & \dots & 0 \\
\vdots &  \vdots  & \vdots & \ddots & \vdots & \vdots  & \vdots & \vdots \\
\lambda^{-1} &  -\lambda^{-1}  &  -\lambda^{-1}  & \dots & (N-2)\lambda^{-1} & -\lambda^{-1} & 0 & \dots & 0 \\
\lambda^{-1} &  -\lambda^{-1}  &  -\lambda^{-1}  & \dots & -\lambda^{-1} & (N-2)\lambda^{-1} & 0 & \dots & 0 \\
\end{bmatrix}.\]
The only non-zero eigenvalue of this matrix is $\mathbbm{z}_1=\frac{N+1}{N-1}\beta$. The associated eigenvector is
\[\boldsymbol{q}_1=\begin{bmatrix}
    \frac{N+1}{2}\lambda \\
    -\frac{N+1}{2}\lambda \\
    \vdots\\
     -\frac{N+1}{2}\lambda \\
    1\\
    \vdots\\
    1
\end{bmatrix}.\]
The remaining (non-zero) eigenvectors are $\boldsymbol{q}_{1+j}=\boldsymbol{e}_{N+1+j}$ for $j=1,\dots,N$ where $\boldsymbol{e}_i$ is the $i$th Euclidean basis vector. This is the exhaustive set of linearly independent eigenvectors corresponding to the eigenvalue $0$ since the dimension of the solution to the eigenvector equation
\[A\boldsymbol{q}=\boldsymbol{0}\]
is exactly the dimension of the nullspace of $A$. We can see that $A$ is of rank $N+1$ (as the first $N+1$ columns are linearly independent) and thus, by the rank--nullity theorem the nullspace has dimension $N$.

In order to completely characterize the solution to the ODE we must find $N$ additional generalized eigenvectors corresponding to the eigenvalue $0$. These generalized eigenvectors $(\boldsymbol{q}_{N+1+j})_{j=1}^N$ solve
\[A\boldsymbol{q}_{N+1+j}=\boldsymbol{q}_{1+j}, \ \ \ j=1,\dots,N\]
and must be linearly independent of $\boldsymbol{q}_{j}$, $j=1,\dots,N+1$. It is easy to verify that we can take $\boldsymbol{q}_{N+1+j}=\frac{\lambda}{\beta}[\boldsymbol{e}_1+\boldsymbol{e}_{1+j}]$ for $j=1,\dots,N$.
Therefore, the general solution to the matrix ODE is given by
\[\boldsymbol{F}_t=c_1\boldsymbol{q}_{1}e^{\mathbbm{z}_1t}+\sum_{j=1}^{N}(c_{1+j}\boldsymbol{q}_{1+j}+c_{N+1+j}  (t\boldsymbol{q}_{1+j}+\boldsymbol{q}_{N+1+j})).\]

We now solve for the constants $\boldsymbol{c}:=(c_1,\dots, c_{2N+1})^\top$ by enforcing the boundary conditions. Recall that $x^i$ denotes trader $i$’s initial inventory, and $a^i$ and $b^i$ denote the block trades executed at $t=0$ and $t=T$, respectively.
Reading off of the equation for $\boldsymbol{F}$ we get
\begin{align*}
    I_0 & = c_1 \frac{N+1}{2}\lambda + \frac{\lambda}{\beta}\sum_{i=1}^N c_{N+1+i} = \lambda \sum_{i=1}^N a^{i}, \\
    Y_T^{i} & = -c_1 \frac{N+1}{2}\lambda e^{\mathbbm{z}_1 T}+ \frac{\lambda}{\beta}c_{N+1+i}=\lambda b^{i}, \ \ \ i=1,\dots, N,\\
    X_0^{i}&=c_1+c_{1+i}=x^{i}+a^{i}, \ \ \ i=1,\dots,N.
\end{align*}
Summing the second set of equations over $i$ we find
\[-c_1 \frac{N(N+1)}{2}\lambda e^{\mathbbm{z}_1 T}+\frac{\lambda}{\beta}\sum_{i=1}^N c_{N+1+i}=\lambda \sum_{i=1}^N b^{i}.\]
Subtracting this from the first equation,
\[c_1 \frac{N+1}{2}\lambda+c_1 \frac{N(N+1)}{2}\lambda e^{\mathbbm{z}_1 T}=\lambda \sum_{i=1}^N (a^{i}-b^{i}).\]
Solving for $c_1$ yields
\[c_1 = \frac{2\sum_{i=1}^N (a^{i}-b^{i})}{(N+1) (1+Ne^{\mathbbm{z}_1 T})}.\]
Substituting this back into the initial equations we find
\[c_{1+i} = x^{i}+a^{i}-\frac{2\sum_{j=1}^N (a^{j}-b^{j})}{(N+1) (1+Ne^{\mathbbm{z}_1 T})}, \ \ \ i=1,\dots,N \]
and
\[c_{N+1+i} = \beta b^{i}+\beta e^{\mathbbm{z}_1 T}\frac{\sum_{j=1}^N (a^{j}-b^{j})}{(1+Ne^{\mathbbm{z}_1 T})}, \ \ \ i=1,\dots,N.\]
These define the solution to the ODE uniquely for fixed $x^i,a^i$ and $b^i$, $i=1,\dots,N$.

\subsubsection*{Step 2: Solving for the deterministic equilibrium.}

We now study the conditions that $\boldsymbol{a} = (a^1,\dots,a^N)^\top$ and $\boldsymbol{b}=(b^1,\dots,b^N)^\top$ need to satisfy. Our first result characterizes the solution to the system of equations for $\boldsymbol{a}$ and $\boldsymbol{b}$. 

\begin{lemma}\label{lem:bdy.sys.of.eqns}\mbox{}

\begin{enumerate}
    \item If $\boldsymbol{a}$ satisfies
\[\theta_a a^{i}=\frac{\lambda}{2}\sum_{j\not=i}a^{j}, \ \ \ i=1,\dots,N\]
then $a^i = a^j$ for all $i,j=1,\dots,N$. If, in addition, $\theta_a \not=\frac{\lambda (N-1)}{2}$ then $\boldsymbol{a}=\boldsymbol{0}$.
    \item If $\boldsymbol{b}$ satisfies
\[\theta_b b^{i}=-\frac{\lambda}{2}\sum_{j\not=i}b^{j}, \ \ \ i=1,\dots,N\]
then $\sum_{i=1}^N b^i =0$. If, in addition, $\theta_b \not=\frac{\lambda}{2}$ then $\boldsymbol{b}=\boldsymbol{0}$.
\end{enumerate}
\end{lemma}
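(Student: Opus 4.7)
\medskip

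\noindent\textbf{Proof plan.} Both parts reduce to elementary linear algebra once the systems are rewritten to expose the sum $\sum_j a^j$ (respectively $\sum_j b^j$) as the only coupling between coordinates. My plan is to treat each part by (i) moving the diagonal term to the right-hand side to obtain a scalar equation involving the total sum, (ii) using positivity of $\lambda$ and $\theta_a,\theta_b$ to divide out the appropriate factors, and (iii) reading off the conclusions.

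For part~(1), I would start from $\theta_a a^i = \tfrac{\lambda}{2}\sum_{j\neq i} a^j$ and add $\tfrac{\lambda}{2} a^i$ to both sides to get
\[
\bigl(\theta_a+\tfrac{\lambda}{2}\bigr)\,a^i \;=\; \tfrac{\lambda}{2}\sum_{j=1}^{N} a^j, \qquad i=1,\dots,N.
\]
Since $\theta_a,\lambda>0$, the prefactor on the left is strictly positive, so $a^i$ equals a constant independent of $i$; call it $a$. Substituting $a^i=a$ back into the original identity gives $\theta_a\, a=\tfrac{\lambda(N-1)}{2}\,a$, i.e.\ $\bigl(\theta_a-\tfrac{\lambda(N-1)}{2}\bigr)a=0$, from which the two claims of~(1) follow at once.

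For part~(2), the same rewriting with the opposite sign yields
\[
\bigl(\theta_b-\tfrac{\lambda}{2}\bigr)\,b^i \;=\; -\tfrac{\lambda}{2}\sum_{j=1}^{N} b^j, \qquad i=1,\dots,N.
\]
Summing over $i$ gives $\bigl(\theta_b-\tfrac{\lambda}{2}+\tfrac{\lambda N}{2}\bigr)\sum_{j}b^j=0$, and since $\theta_b+\tfrac{\lambda(N-1)}{2}>0$ this forces $\sum_j b^j=0$. Plugging this back into the displayed identity yields $\bigl(\theta_b-\tfrac{\lambda}{2}\bigr)b^i=0$ for each $i$, which immediately gives the second assertion when $\theta_b\neq\tfrac{\lambda}{2}$.

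There is no real obstacle here: the only thing to be careful about is the sign bookkeeping in part~(2), and to record that we use $\theta_a,\theta_b\geq 0$ (in fact $\lambda>0$ alone suffices) to ensure the coefficients $\theta_a+\lambda/2$ and $\theta_b+\lambda(N-1)/2$ we divide by are non-zero. Both arguments are just two lines, which is consistent with this being a purely algebraic preparatory lemma for the substantive existence/non-existence analysis in \cref{thm:equil.block.cost}.
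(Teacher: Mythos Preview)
Your argument is correct and, if anything, cleaner than the paper's. The paper rewrites each system in matrix form, $M\boldsymbol{a}=\boldsymbol{0}$ with $M=(\theta_a+\tfrac{\lambda}{2})\mathrm{Id}-\tfrac{\lambda}{2}\boldsymbol{1}\boldsymbol{1}^\top$ (and analogously $\tilde M\boldsymbol{b}=\boldsymbol{0}$), then invokes the Matrix Determinant Lemma to locate the values of $\theta_a,\theta_b$ at which the matrix becomes singular, and finally identifies the nullspace in those degenerate cases. You instead exploit directly that the right-hand side depends on $i$ only through the total sum, reducing the problem to a single scalar identity per coordinate. This bypasses the determinant computation entirely and makes the structure of the solution set transparent in two lines. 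Both routes of course yield the same conclusions; yours is more self-contained, while the paper's matrix formulation connects more visibly to the rank-one perturbation structure that recurs elsewhere in the analysis.

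One minor quibble: your parenthetical ``in fact $\lambda>0$ alone suffices'' is not literally true, since $\theta_a=-\lambda/2$ would kill the coefficient you divide by. What you actually use (and what the paper assumes) is $\theta_a,\theta_b>0$, which together with $\lambda>0$ guarantees $\theta_a+\lambda/2>0$ and $\theta_b+\lambda(N-1)/2>0$.
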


\begin{proof}
    Let $\boldsymbol{1}=(1,\dots,1)^\top$ be the vector of ones and $\mathrm{Id}$ be the identity matrix.  The system of equations for $\boldsymbol{a}$ can be written as $M\boldsymbol{a}=\boldsymbol{0}$ where $M=(\theta_a+\frac{\lambda}{2}) \mathrm{Id}-\frac{\lambda}{2}\boldsymbol{1}\boldsymbol{1}^\top$. 
    Similarly, the system of equations for $\boldsymbol{b}$ can be written as $\tilde{M}\boldsymbol{b}=\boldsymbol{0}$
    where $\tilde{M}=(\theta_b-\frac{\lambda}{2}) \mathrm{Id}+\frac{\lambda}{2}\boldsymbol{1}\boldsymbol{1}^\top$.

If $M$ is invertible then we must have $\boldsymbol{a}=\boldsymbol{0}$.  Similarly, if $\tilde{M}$ is invertible, then $\boldsymbol{b}=\boldsymbol{0}$. By the Matrix Determinant Lemma,
    \[\mathrm{det}(M)=\left(\theta_a+\frac{\lambda}{2}\right)^N-N\frac{\lambda}{2}\left(\theta_a+\frac{\lambda}{2}\right)^{N-1},\]
    \[\mathrm{det}(\tilde{M})=\left(\theta_b-\frac{\lambda}{2}\right)^N+N\frac{\lambda}{2}\left(\theta_b-\frac{\lambda}{2}\right)^{N-1}.\]
    As a function of $\theta_a$, the first equation has a root $\theta_a=-\frac{\lambda}{2}$ of multiplicity $N-1$ and a remaining root of $\theta_a=\frac{\lambda(N-1)}{2}$. As a function of $\theta_b$, the second equation has a root $\theta_b=\frac{\lambda}{2}$ of multiplicity $N-1$ and a remaining root of $\theta_b=-\frac{\lambda(N-1)}{2}$. Since $\lambda>0$, $\theta_a,\theta_b\geq0$, the unique feasible value of $\theta_a$ (resp.\ $\theta_b$) that leads to the non-invertibility of $M$ (resp.\ $\tilde{M}$) is $\theta_a=\frac{\lambda(N-1)}{2}$ (resp. $\theta_b=\frac{\lambda}{2}$). 
    
    Suppose now that $\theta_a=\frac{\lambda(N-1)}{2}$ and $\theta_b = \frac{\lambda}{2}$. The nullspace of $M$ is characterized by the $1$-dimensional space of vectors $\boldsymbol{a}$ taking the form $\boldsymbol{a}=c\boldsymbol{1}$,
    for some $c\in\mathbb{R}$. Similarly, the nullspace of $\tilde{M}$ is given by the $N-1$ dimensional space of vectors $\boldsymbol{b}$ satisfying $\boldsymbol{1}^T \boldsymbol{b}=0$.
    This completes the proof.
\end{proof}

By Lemma \ref{lem:bdy.sys.of.eqns} there must be an $\alpha\in \mathbb{R}$ such that
\[a^{i}=\alpha \ \ \ i =1,\dots,N, \ \ \ \text{and} \ \ \  \sum_{i=1}^Nb^{i} = 0.\]
Enforcing the conditions on $\boldsymbol{a}$ and $\boldsymbol{b}$ in the equations for the constants $\boldsymbol{c}$, we get
\[\begin{cases}
    c_1=\frac{2N \alpha}{(N+1)(1+N e^{\mathbbm{z}_1 T})},\\
    c_{1+i} = x^{i}+\alpha -\frac{2N \alpha}{(N+1)(1+Ne^{\mathbbm{z}_1 T})} & i=1,\dots,N,\\
    c_{N+1+i} = \beta b^{i} + \beta \frac{Ne^{\mathbbm{z}_1T}\alpha}{1+Ne^{\mathbbm{z}_1 T}} & i=1,\dots,N.
\end{cases}\]
Let
\[u_N := \frac{2N}{(N+1)(1+Ne^{\mathbbm{z}_1T})}, \ \ \ w_N:= \frac{Ne^{\mathbbm{z}_1 T}}{1+N e^{\mathbbm{z}_1 T}}\]
so that we may write more concisely,
\[\begin{cases}
    c_1=u_N\alpha,\\
    c_{1+i} = x^{i}+(1-u_N)\alpha & i=1,\dots,N,\\
    c_{N+1+i} = \beta b^{i} + \beta w_N\alpha & i=1,\dots,N.
\end{cases}\]
It remains to enforce the terminal conditions \eqref{eqn:b.cond.2}. Thus, we solve the $N+1$ dimensional system
\[\begin{cases}
    b^{i} = -X_{T-}^{i}= -\left[c_1e^{\mathbbm{z}_1 T}+c_{1+i}+c_{N+1+i} T\right],& i=1,\dots,N,\\
    \sum_{i=1}^N b^{i}=0
\end{cases}\]
for the $N+1$ unknowns given by $\alpha$ and $\boldsymbol{b}$. Here we use that $b^i=-X_{T-}^i$ since the last block trade must unwind the remaining inventory and recall the representation
\begin{equation}\label{eqn:X.expression.constants}
X_t^i= c_1e^{\mathbbm{z}_1 t}+c_{1+i}+c_{N+1+i}t, \ \ \ i=1,\dots,N.
\end{equation} Inserting the form of $\boldsymbol{c}$,
\begin{align*}b^{i} 
&=-\left[\left(u_Ne^{\mathbbm{z}_1 T}+(1-u_N)+\beta w_N T\right)\alpha+x^{i}+\beta b^{i}T\right].
\end{align*}
Let
\[r_N:=1+(e^{\mathbbm{z}_1T}-1)u_N+\beta w_N T\]
so that
\[b^{i}=-r_N \alpha - x^{i}-\beta b^{i} T.\]
Summing over $i$ and using that $\sum_{i=1}^Nb^{i}=0$ we have
\[0=-Nr_N\alpha -\sum_{i=1}^N x^{i}.\]
This implies \begin{equation}\label{eqn:alpha.form}
    \boxed{\alpha = -r_N^{-1} \overline{x}.} 
\end{equation}
Substituting this back in, we get $b^{i} = \overline{x}-x^{i}-\beta b^{i} T$
which yields 
\begin{equation}\label{eqn:b.form}
    \boxed{b^{i} = -\frac{x^{i}-\overline{x}}{1+\beta T}.}
\end{equation}
With this the final solution for the constants is
\[\begin{cases}
    c_1=-\frac{u_N}{r_N}\overline{x},\\
    c_{1+i} = x^{i}-\frac{(1-u_N)}{r_N}\overline{x}=(x^{i}-\overline{x})+\frac{r_N-1+u_N}{r_N}\overline{x}& i=1,\dots,N,\\
    c_{N+1+i} = - \frac{\beta}{1+\beta T}(x^{i}-\overline{x})- \frac{\beta w_N}{r_N}\overline{x} & i=1,\dots,N.
\end{cases}\]
Using \eqref{eqn:X.expression.constants} we recover (after simplification) the form of $\boldsymbol{X}^{*}$ reported in Theorem \ref{thm:equil.block.cost}.
If $\theta_a = \frac{(N-1)\lambda}{2}$ \textbf{and} $\theta_b = \frac{\lambda}{2}$, then by Proposition \ref{prop:nash.eq.unique} and Lemma \ref{lem:nash.eq.a.c.} this defines the unique equilibrium for all initial inventories. Hence, we have shown case (1) of Theorem \ref{thm:equil.block.cost}.

Next, we address cases (2)--(4) for deterministic controls. If 
$\theta_a\not=\frac{\lambda(N-1)}{2}$ then 
Lemma~\ref{lem:bdy.sys.of.eqns} implies $\alpha =0$. But then 
\eqref{eqn:alpha.form} can only be true if $\overline{x}=0$. On the 
other hand, if $\theta_b\not=\frac{\lambda}{2}$ then 
Lemma~\ref{lem:bdy.sys.of.eqns} further requires that 
$\boldsymbol{b}=\boldsymbol{0}$. In order to have a consistent solution, 
\eqref{eqn:b.form} mandates that $x^i=\overline{x}=x^j$ for all 
$i,j=1,\dots,N$. Thus, in case (4), both restrictions must hold, and hence 
$x^i=0$ for all $i=1,\dots,N$. In summary, for cases (2)--(4) we have 
derived the unique equilibrium (again by Proposition~\ref{prop:nash.eq.unique}
and Lemma~\ref{lem:nash.eq.a.c.}) when the initial inventories satisfy the 
stated conditions. We also have that a \textbf{deterministic} equilibrium 
cannot exist otherwise by the aforementioned inconsistencies. It remains to 
show that this non-existence generalizes to arbitrary equilibria.

\subsubsection*{Step 3: Extending non-existence to the class of admissible strategies.}

We will treat cases (2) and (3) of Theorem \ref{thm:equil.block.cost}  separately; 
case~(4) will then follow by combining them.

\paragraph*{Case (2):}
It suffices to show that if $\vartheta_0 \not=\frac{\lambda (N-1)}{2}$ and a Nash equilibrium exists, then $\overline{x}=0$.

Proposition \ref{prop:no.interior.jumps} implies an equilibrium system of equations for $\Delta X_{0}^{*,i}$ and $\Delta X_{T}^{*,i}$ that coincides with the one in Lemma \ref{lem:bdy.sys.of.eqns}. We have assumed $\vartheta_0 \not=\frac{\lambda (N-1)}{2}$, so we can conclude that
$\Delta X_0^{*,i} =0$ for all $i$ and $\sum_{i=1}^N\Delta X_T^{*,i} = 0$. Moreover, again by Proposition \ref{prop:no.interior.jumps}, there are no interior jumps. Then, by Lemma \ref{lem:gen.foc.jumps}, for all predictable $\sigma\geq \tau$,
    \begin{align*}
         &\mathbb{E}\left[\lambda\int_0^Te^{-\beta |\sigma-t|}dX_t^{*,i}+\lambda\int_0^{\sigma-} e^{-\beta (\sigma-t)}\sum_{j\not=i}dX_t^{*,j} + \left(\vartheta_T\Delta X_T^{*,i}+\frac{\lambda}{2}\sum_{j\not=i}\Delta X_T^{*,j}\right)\mathds{1}_{\{\sigma=T\}}\bigg|\mathcal{F}_{\tau-}\right]\\
         &=\mathbb{E}\left[\lambda\int_0^Te^{-\beta |\tau-t|}dX_t^{*,i}+\lambda\int_0^{\tau-} e^{-\beta (\tau-t)}\sum_{j\not=i}dX_t^{*,j} + \left(\vartheta_T\Delta X_T^{*,i}+\frac{\lambda}{2}\sum_{j\not=i}\Delta X_T^{*,j}\right)\mathds{1}_{\{\tau=T\}}\bigg|\mathcal{F}_{\tau-}\right].
     \end{align*}
     Recall from Proposition \ref{prop:no.interior.jumps} that
     \[ \vartheta_T\Delta X_{T}^{*,i}=-\frac{\lambda}{2}\sum_{j\not=i}\Delta X_{T}^{*,j}, \ \ \ i=1,\dots,N.\]
     Substituting this in, and rearranging terms,
\begin{align*}
         \mathbb{E}&\left[\lambda\int_\sigma ^Te^{-\beta (t-\sigma)}dX_t^{*,i}+\lambda\int_0^{\sigma-} e^{-\beta (\sigma-t)}\sum_{j=1}^NdX_t^{*,j}\bigg|\mathcal{F}_{\tau-}\right]\\
         &=\mathbb{E}\left[\lambda\int_\tau^Te^{-\beta (t-\tau)}dX_t^{*,i}+\lambda\int_0^{\tau-} e^{-\beta (\tau-t)}\sum_{j=1}^NdX_t^{*,j}\bigg|\mathcal{F}_{\tau-}\right].
\end{align*}
By averaging over $i$ and using that the average process, $\overline{X}$, has no jumps,
\begin{align*}
         \mathbb{E}&\left[\lambda\int_\sigma ^Te^{-\beta (t-\sigma)}d\overline{X}_t+N\lambda\int_0^{\sigma} e^{-\beta (\sigma-t)}d\overline{X}_t\bigg|\mathcal{F}_{\tau-}\right]\\
         &=\mathbb{E}\left[\lambda\int_\tau^Te^{-\beta (t-\tau)}d\overline{X}_t+N \lambda\int_0^{\tau} e^{-\beta (\tau-t)}d\overline{X}_t\bigg|\mathcal{F}_{\tau-}\right].
\end{align*}
Setting $\tau\equiv 0$,
\begin{align*}
         \mathbb{E}&\left[\lambda\int_\sigma ^Te^{-\beta (t-\sigma)}d\overline{X}_t+N\lambda\int_0^{\sigma} e^{-\beta (\sigma-t)}d\overline{X}_t\right]=\mathbb{E}\left[\lambda\int_0^Te^{-\beta t}d\overline{X}_t\right].
\end{align*}
Now the right-hand side is just a constant in $\sigma$. Taking $\sigma$ to be any deterministic time in $[0,T]$,  this implies
\begin{align*}
         \mathbb{E}&\left[\lambda e^{\beta t}\int_t ^Te^{-\beta s}d\overline{X}_s+N\lambda e^{-\beta t}\int_0^{t} e^{\beta s}d\overline{X}_s\right]=\mathbb{E}\left[\lambda\int_0^Te^{-\beta t}d\overline{X}_t\right], \ \ \ \forall t\in[0,T].
\end{align*}
Integrating by parts and dividing by $\lambda$,
\begin{multline*}
        \mathbb{E}\bigg[ e^{-\beta (T-t)}\overline{X}_T-\overline{X}_t +  \beta \int_t ^T\overline{X}_s e^{-\beta (s-t)}ds+N \left( \overline{X}_t -e^{-\beta t}\overline{X}_0-\beta \int_0^{t} e^{-\beta (t-s)}\overline{X}_sds \right)\bigg]\\
        =\mathbb{E}\left[e^{-\beta T} \overline{X}_T -\overline{X}_0 +\beta \int_0^Te^{-\beta t}\overline{X}_tdt\right], \ \ \ \forall t\in[0,T].
\end{multline*}
Using that $\overline{X}_T=0$ and $\overline{X}_0=\overline{x}$ we have, after collecting terms,
\begin{multline*}
        \mathbb{E}\bigg[(N-1) \overline{X}_t + \beta \int_t^T e^{-\beta (s-t)}\overline{X}_sds-N e^{-\beta t}\overline{x}-N \beta \int_0^{t} e^{-\beta (t-s)}\overline{X}_sds \bigg]\\
        =\mathbb{E}\left[-\overline{x} +\beta \int_0^Te^{-\beta t}\overline{X}_tdt\right], \ \ \ \forall t\in[0,T].
\end{multline*}
Writing
$m_t := \mathbb{E}[\overline{X}_t]$, Fubini's theorem yields
\begin{multline*}
        (N-1)m_t + \beta \int_t ^Tm_s e^{-\beta (s-t)}ds+(1-N e^{-\beta t})\overline{x}-N \beta \int_0^{t} e^{-\beta (t-s)}m_sds\\
        =\beta \int_0^Te^{-\beta s}m_sds, \ \ \ \forall t\in[0,T].
\end{multline*}
While we already know that $m$ is continuous (by the continuity of $\overline{X}$ and dominated convergence), this equation further shows that $m_t$ is differentiable. Define
\[y_t:=\int_t ^Te^{-\beta (s-t)}m_s ds \ \ \ \text{and} \ \ \ \ell_t:=\int_0^{t} e^{-\beta (t-s)}m_sds,\]
so that our equation becomes
\begin{align*}
        (N-1)m_t + \beta y_t +(1-N e^{-\beta t})m_0-N \beta \ell_t=\beta y_0, \ \ \ \forall t\in[0,T].
\end{align*}
Differentiating gives a system of ODEs,
\[\begin{cases}
    \dot{m}_t =-\frac{\beta}{N-1} \left(\dot{y}_t +N (e^{-\beta t}\overline{x}- \dot{\ell}_t)\right), & m_0 = \overline{x}, \ \ \ m_T =0,\\
    \dot{y}_t = \beta y_t -m_t, & y_T = 0,\\
    \dot{\ell}_t = -\beta \ell_t +m_t, & \ell_0 = 0.
\end{cases}\]
Ignoring the boundary condition $m_0=\overline{x}$ and solving the system gives the unique solution
\[m_t=\frac{N\left(\left(\beta  \left(T -t \right) N +2+\beta  \left(T -t \right)\right) e^{\frac{\beta  \left(N +1\right) T}{N -1}}-2 e^{\frac{\beta  \left(N +1\right) t}{N -1}}\right)}{\left(\left(\beta  T +1\right) N +\beta  T +3\right) N e^{\frac{\beta  \left(N +1\right) T}{N -1}}-N +1}\overline{x}, \ \ \ t\in[0,T].
\]
If we now enforce that $m_0 = \overline{x}$ we get the necessary condition
\[\frac{\left(\left(N \beta  T +\beta  T +2\right) e^{\frac{\beta  \left(N +1\right) T}{N -1}}-2\right) N}{\left(\left(\beta  T +1\right) N +\beta  T +3\right) N e^{\frac{\beta  \left(N +1\right) T}{N -1}}-N +1}\overline{x}=\overline{x}.\]
If $\overline{x}\not=0$ this can only be true if 
\[\left(\left(N \beta  T +\beta  T +2\right) e^{\frac{\beta  \left(N +1\right) T}{N -1}}-2\right) N=\left(\left(\beta  T +1\right) N +\beta  T +3\right) N e^{\frac{\beta  \left(N +1\right) T}{N -1}}-N +1\]
which, for $N>0$, is equivalent to
\[Ne^{\frac{\beta (N+1)}{N-1} T} = -1.\]
However, the left-hand side is clearly positive, and we conclude that $\overline{x}=0$.

\paragraph*{Case (3):}
For case (3) it suffices to show that if $\vartheta_T\not=\frac{\lambda}{2}$ and a Nash equilibrium exists, then $x^i=x^j$ for all $i,j$.
 
Proposition \ref{prop:no.interior.jumps} tells us that there are no interior jumps and provides us a system of equations for $\Delta X_{0}^{*,i}$ and $\Delta X_{T}^{*,i}$.
This is the same system of equations as in Lemma \ref{lem:bdy.sys.of.eqns}. In this case $\vartheta_T \not=\frac{\lambda}{2}$, so we can conclude that
$\Delta X_T^{*,i} =0$ for all $i$ and $\Delta X_0^{*,i} = \Delta X_0^{*,j}$ for all $i,j$. Then, by Lemma \ref{lem:gen.foc.jumps}, we have that for all predictable $\sigma\geq \tau$,
\begin{multline*}
     \mathbb{E}\left[\lambda\int_0^Te^{-\beta |\sigma-t|}dX_t^{*,i}+\lambda\int_0^{\sigma-} e^{-\beta (\sigma-t)}\sum_{j\not=i}dX_t^{*,j} + \left(\vartheta_0\Delta X_0^{*,i}+\frac{\lambda}{2}\sum_{j\not=i}\Delta X_0^{*,j}\right)\mathds{1}_{\{\sigma=0\}}\bigg|\mathcal{F}_{\tau-}\right]\\
     =\mathbb{E}\left[\lambda\int_0^Te^{-\beta |\tau-t|}dX_t^{*,i}+\lambda\int_0^{\tau-} e^{-\beta (\tau-t)}\sum_{j\not=i}dX_t^{*,j} + \left(\vartheta_0\Delta X_0^{*,i}+\frac{\lambda}{2}\sum_{j\not=i}\Delta X_0^{*,j}\right)\mathds{1}_{\{\tau=0\}}\bigg|\mathcal{F}_{\tau-}\right]
 \end{multline*}
 for all $i=1,\dots,N$. 
 Appealing to Proposition \ref{prop:no.interior.jumps} once again, 
 \[ \vartheta_0\Delta X_{0}^{*,i}=\frac{\lambda}{2}\sum_{j\not=i}\Delta X_{0}^{*,j}, \ \ \ i=1,\dots,N.\]
 If we substitute this in and simplify,
 \begin{align*}
     \mathbb{E}&\left[\lambda\int_0^Te^{-\beta |\sigma-t|}dX_t^{*,i}+\lambda\int_0^{\sigma} e^{-\beta (\sigma-t)}\sum_{j\not=i}dX_t^{*,j}\bigg|\mathcal{F}_{\tau-}\right]\\
     &\quad \quad =\mathbb{E}\left[\lambda\int_0^Te^{-\beta |\tau-t|}dX_t^{*,i}+\lambda\int_0^{\tau} e^{-\beta (\tau-t)}\sum_{j\not=i}dX_t^{*,j}\bigg|\mathcal{F}_{\tau-}\right], \ \ \ i=1,\dots,N.
 \end{align*}
 Then, by subtracting any two equations when $i\not=k$,
 \begin{align*}
     \mathbb{E}&\left[\lambda\int_0^Te^{-\beta |\sigma-t|}d(X_t^{*,i}-X_t^{*,k})-\lambda\int_0^{\sigma} e^{-\beta (\sigma-t)}d(X_t^{*,i}-X_t^{*,k})\bigg|\mathcal{F}_{\tau-}\right]\\
     &=\mathbb{E}\left[\lambda\int_0^Te^{-\beta |\tau-t|}d(X_t^{*,i}-X_t^{*,k})-\lambda\int_0^{\tau} e^{-\beta (\tau-t)}d(X_t^{*,i}-X_t^{*,k})\bigg|\mathcal{F}_{\tau-}\right].
 \end{align*}
 This implies that
 \begin{align*}
  \mathbb{E}&\left[\lambda\int_{\sigma+}^Te^{-\beta |\sigma-t|}d(X_t^{*,i}-X_t^{*,k})\bigg|\mathcal{F}_{\tau-}\right]=\mathbb{E}\left[\lambda\int_{\tau+}^Te^{-\beta |\tau-t|}d(X_t^{*,i}-X_t^{*,k})\bigg|\mathcal{F}_{\tau-}\right].
 \end{align*}
 Summing over $k\not=i$,
 \begin{align*}
     \mathbb{E}&\left[\lambda\int_{\sigma+}^Te^{-\beta |\sigma-t|}d(N X_t^{*,i}-\sum_{k=1}^NX_t^{*,k})\bigg|\mathcal{F}_{\tau-}\right]=\mathbb{E}\left[\lambda\int_{\tau+}^Te^{-\beta |\tau-t|}d(NX_t^{*,i}-\sum_{k=1}^NX_t^{*,k})\bigg|\mathcal{F}_{\tau-}\right],
 \end{align*}
 which is equivalent to
 \begin{align*}
     \mathbb{E}&\left[\lambda\int_{\sigma+}^Te^{-\beta (t-\sigma)}d(X_t^{*,i}-\overline{X}_t)\bigg|\mathcal{F}_{\tau-}\right]=\mathbb{E}\left[\lambda\int_{\tau+}^Te^{-\beta (t-\tau)}d(X_t^{*,i}-\overline{X}_t)\bigg|\mathcal{F}_{\tau-}\right]
 \end{align*}
 where $\overline{X} = N^{-1}\sum_{i=1}^NX_t^{*,i}$.
 Setting $\sigma\equiv T$ and using predictability we get
 \begin{equation}\label{eqn:block.cost.case.3.v1}
     0=e^{\beta \tau}\mathbb{E}\left[\lambda\int_{\tau+}^Te^{-\beta t}d(X_t^{*,i}-\overline{X}_t)\bigg|\mathcal{F}_{\tau-}\right].
 \end{equation}
 Note that the process $X^{*,i} - \overline{X}$ has no jumps since $\Delta X_0^{*,i} =\Delta \overline{X}_0$ for all $i=1,\dots,N$. Hence, \eqref{eqn:block.cost.case.3.v1} implies that
 \begin{equation}\label{eqn:block.cost.case.3.v2}
     0=\mathbb{E}\left[\lambda\int_{\tau}^Te^{-\beta t}d(X_t^{*,i}-\overline{X}_t)\bigg|\mathcal{F}_{\tau-}\right].
 \end{equation}
 Let us define 
 \[M_t =\mathbb{E}\left[\lambda\int_{0}^Te^{-\beta s}d(X_s^{*,i}-\overline{X}_s)\bigg|\mathcal{F}_{t}\right], \ \ \ t\in[0,T].\]
 By definition, $M$ is a right-continuous martingale with respect to the filtration $\mathbb{F}$. Taking limits from below and applying \cite[Theorem 4.6.10]{Durrett.19},
 \begin{align*}
     M_{t-} &=\mathbb{E}\left[\lambda\int_{0}^Te^{-\beta s}d(X_s^{*,i}-\overline{X}_s)\bigg|\mathcal{F}_{t-}\right], \ \ \ t\in(0,T].
 \end{align*}
Using \eqref{eqn:block.cost.case.3.v2} and the predictability of the inventory processes,
\[M_{t-}=\lambda\int_{0}^te^{-\beta s}d(X_s^{*,i}-\overline{X}_s), \ \ \ t\in(0,T].\]
From this, we can infer that $M$ is continuous on $[0,T)$ and takes the form
\[M_{t}=\lambda\int_{0}^te^{-\beta s}d(X_s^{*,i}-\overline{X}_s).\]
The definition of $M_T$ reveals that the continuity holds on the entire interval $[0,T]$. Thus $M$ is a bounded variation and continuous martingale, implying that it must be constant. Hence,
 \[0=dM_t = \lambda e^{-\beta t} d(X_t^{*,i}-\overline{X}_t) \]
 from which we conclude that $dX^{*,i}_t = d\overline{X}_t$.
 Since $X^{*,i}_T=\overline{X}_T=0$ (recall that inventory liquidation is enforced by $T$) we have
 \[0=X_T^{*,i}-\overline{X}_T = x^i-\overline{x} + \int_0^Td(X_t^{*,i}-\overline{X}_t)=x^i-\overline{x}. \]
 From this we conclude that $x^i = \overline{x}$ for all $i$. Hence, $x^i=x^j$ for all $i,j$. 

\paragraph*{Case (4):}
Suppose that $\vartheta_0\not=\frac{\lambda(N-1)}{2}$ and 
$\vartheta_T\not=\frac{\lambda}{2}$. If a Nash equilibrium exists, then 
the argument for case~(2) yields $\overline{x}=0$, while the argument for 
case~(3) yields $x^i=x^j$ for all $i,j$. Hence $x^i=0$ for all $i$. 
This proves case~(4) and completes the proof of Theorem \ref{thm:equil.block.cost}. \qed

\subsection{Corollary \ref{cor:eq.cost.block.cost}}
From Theorem \ref{thm:equil.block.cost}, we can differentiate $X^{*,i}$ on $(0,T)$ to get
\begin{equation*}
		\dot{X}^{*,i}_t= -\frac{\beta}{\beta T+1} (x^{i}-\overline{x})-\frac{\beta N  (N+1) e^{\beta \frac{N+1}{N-1} T}+ \frac{2\beta N(N+1)}{N-1}e^{\beta \frac{N+1}{N-1} t}}{N((\beta T+1)(N+1)+2)e^{\beta \frac{N+1}{N-1} T}-(N-1)} \overline{x}.
\end{equation*}
Moreover, we have 
\[\Delta X_0^{*,i} = -\frac{(N+1)\left(1+Ne^{\beta \frac{N+1}{N-1}T}\right)}{N((\beta T+1)(N+1)+2)e^{\beta \frac{N+1}{N-1} T}-(N-1)}\overline{x}, \ \ \ \Delta X_T^{*,i} = -\frac{x^i-\overline{x}}{1+\beta T}.\]
From this we can obtain the equilibrium impact process,
\[I_t=-\frac{ \lambda  N (N +1)\left(e^{\beta \frac{N +1}{N -1}t}+N e^{\beta \frac{N +1}{N -1}T}\right) }{N \left(( \beta T +1) (N+1) +2\right) e^{\beta \frac{\left(N +1\right) T}{N -1}}-(N -1)}\overline{x}, \ \ \ t\in[0,T),
\]
where
\[\Delta I_0 =  \lambda \sum_{i=1}^N \Delta X_0^{*,i}=-\frac{\lambda N(N+1)\left(1+Ne^{\beta \frac{N+1}{N-1}T}\right)}{N((\beta T+1)(N+1)+2)e^{\beta \frac{N+1}{N-1} T}-(N-1)}\overline{x}\]
and
\[\Delta I_T=\lambda\sum_{i=1}^N\Delta X_T^{*,i} = 0.\]
Now, a direct computation (omitted for the sake of brevity) gives Corollary \ref{cor:eq.cost.block.cost}. \qed

\section{Proofs for Section \ref{sec:id.limit}}\label{app:proofs.for.sec.id.limit}

\subsection{Theorem \ref{thm:small.eps.lim}}

We divide the proof into two steps, corresponding to the two claims in the theorem.

\subsubsection*{Step 1: Strategy convergence.}

Note that $z_1\to \frac{N+1}{N-1}\beta$ as $\varepsilon\downarrow 0$. Writing $\tau_1=  \frac{N+1}{N-1}\beta$ we conclude
\begin{equation}\label{eqn:small.eps.1}
\frac{e^{z_1 T}-1}{z_1}\to \frac{e^{\tau_1 T}-1}{ \tau_1}.
\end{equation}
On the other hand, we see that $z_2\to -\infty$ as $\varepsilon\downarrow0$. So,
\[\gamma_1=\frac{1}{z_1+\beta}+\frac{1}{z_1-\beta}e^{z_1 T}\to\frac{1}{\tau_1+\beta}+\frac{1}{\tau_1-\beta}e^{\tau_1 T}, \]
\[\gamma_2=\frac{1}{z_2+\beta}+\frac{1}{z_2-\beta}e^{z_2 T}\to 0, \ \ \ \text{and} \ \ \ \frac{e^{z_2 t} -1}{ z_2}\to 0,  \ \ \ t\in(0,T].\]
The limit of 
$\gamma_1(e^{z_2 t} -1)/(\gamma_2z_2)$
therefore depends on the ratio of these last two terms. We have
\[z_2\gamma_2=\frac{z_2}{z_2+\beta}+\frac{z_2}{z_2-\beta}e^{z_2 T}\to 1, \ \ \ \text{and} \ \ \ e^{z_2 t} -1\to -1, \ \ \ t\in(0,T].\]
So,
\begin{equation*}%
\frac{\gamma_1}{\gamma_2}\frac{e^{z_2 t} -1}{ z_2} \to -\left(\frac{1}{\tau_1+\beta}+\frac{1}{\tau_1-\beta}e^{\tau_1 T}\right), \ \ \ t\in(0,T],
\end{equation*}
and the limit is $0$ if $t=0$.
Arguing similarly,
\begin{equation*}%
    \rho_{-}=\frac{1}{z_1-\beta}e^{z_1T}-\frac{\gamma_1}{\gamma_2} \frac{1}{z_2-\beta}e^{z_2T}\to \frac{1}{\tau_1-\beta} e^{\tau_1 T}.
\end{equation*}
Finally, we see that $z_3\uparrow \infty$ as $\varepsilon\downarrow 0$.
Moreover,
\[\varepsilon z_3 = \beta\varepsilon+\lambda \to \lambda, \ \ \ \text{and} \ \ \ \frac{e^{z_{3} t}-1}{e^{z_3 T}}=e^{-z_3(T-t)}-e^{-z_3 T}\to 0,\ \ \ t\in[0,T).\]
When $t=T$, the latter converges to $1$. Therefore,
\begin{equation}\label{eqn:small.eps.4}
    \frac{\lambda(e^{z_3 t}-1)}{\varepsilon z_3e^{z_3 T}}\to 0, \ \ \ t\in[0,T), \ \ \ \text{and} \ \ \ \frac{\lambda(e^{z_3 T}-1)}{\varepsilon z_3e^{z_3 T}}\to 1.
\end{equation}
Applying \eqref{eqn:small.eps.1}--\eqref{eqn:small.eps.4} yields that as $\varepsilon\downarrow0$,
\begin{align*}
		\mathfrak{f}_t &\to 1-\frac{\beta t}{1+\beta T} , \ \ \ t\in[0,T),\\
		\mathfrak{f}_T &\to 0,  \\
    \mathfrak{g}_t &\to 1-\frac{\frac{\beta}{\tau_1-\beta} e^{\tau_1 T}t+\frac{e^{\tau_1 t}-1}{\tau_1}+\frac{1}{\tau_1+\beta}+\frac{1}{\tau_1-\beta}e^{\tau_1 T}}{\frac{\beta}{\tau_1-\beta} e^{\tau_1 T}T+\frac{e^{\tau_1 T}-1}{\tau_1} +\frac{1}{\tau_1+\beta}+\frac{1}{\tau_1-\beta}e^{\tau_1 T}}, \ \ \ t\in(0,T],\\
    \mathfrak{g}_0 &\to 1.
\end{align*}
Rearranging these expressions shows agreement with Theorem \ref{thm:equil.block.cost} on $(0,T)$. 

It remains to upgrade the pointwise convergence to locally uniform on $(0,T)$. To this end, we first claim that $\mathfrak{f}$ and $\mathfrak{g}$ are monotone decreasing. As $z_3\geq0$ for all $\varepsilon$ we have
\begin{equation*}
		\dot{\mathfrak{f}}_t=-\frac{\beta+\frac{\lambda e^{z_{3} t}}{\varepsilon e^{z_3T}}}{\beta T+ \frac{\lambda(e^{z_3 T}-1)}{\varepsilon z_3e^{z_3 T}}}\leq 0.
\end{equation*}
Turning to $\dot{\mathfrak{g}}$ we claim that
\begin{equation}\label{eqn:dg.negative}
		\dot{\mathfrak{g}}_t=-\frac{\beta \rho_{-}+e^{z_1 t}-\frac{\gamma_1}{\gamma_2}e^{z_2 t}}{\beta \rho_{-}T+\frac{e^{z_1 T}-1}{z_1} -\frac{\gamma_1}{\gamma_2}\frac{e^{z_2 T}-1}{ z_2}}\leq 0.
\end{equation}
Indeed, from the form of $z_1$ and $z_2$ we can see that $z_1>0$, $z_2<0$, and $z_1$ increases and $z_2$ decreases as $\varepsilon\downarrow0$. Taking $\varepsilon\uparrow\infty$ we deduce
\[z_1 >\lim_{\varepsilon\uparrow\infty} z_1=\beta, \ \ \ z_2< \lim_{\varepsilon\uparrow\infty}z_2=-\beta.\]
These observations allow us to conclude $\gamma_1\geq0$ and $\gamma_2\leq 0$. Rearranging the expression for~$\rho_{-}$, we get
\[\rho_{-}=e^{z_{1} T}\frac{\left(z_{2}+\beta \right) \left(z_{1}-\beta \right) e^{(z_{2}-z_1) T}- \left(z_{2}-\beta \right) \left(z_{1}+\beta \right)}{\left(z_{1}+\beta \right) \left(-\left(z_{2}+\beta \right) e^{z_{2} T}-(z_2-\beta)\right) \left(z_{1}-\beta \right)}.
\]
The denominator is positive by the bounds on $z_1$ and $z_2$. As $(z_2+\beta)(z_1-\beta)\leq 0$ and $z_2-z_1\leq 0$ 
we have
\begin{align*}\rho_{-}&\geq e^{z_1 T}\frac{\left(z_{2}+\beta \right) \left(z_{1}-\beta \right)- \left(z_{2}-\beta \right) \left(z_{1}+\beta \right)}{\left(z_{1}+\beta \right) \left(-\left(z_{2}+\beta \right) e^{z_{2} T}-(z_2-\beta)\right) \left(z_{1}-\beta \right)}\\
&=e^{z_1 T}\frac{2 \beta (z_1-z_2)}{\left(z_{1}+\beta \right) \left(-\left(z_{2}+\beta \right) e^{z_{2} T}-(z_2-\beta)\right) \left(z_{1}-\beta \right)}\geq0.
\end{align*}
Taken together we see that each term in the numerator and denominator of $\dot{\mathfrak{g}}$ is positive and so \eqref{eqn:dg.negative} follows.

In summary, the monotone functions $\mathfrak{f}$ and $\mathfrak{g}$ converge pointwise to $\mathbbm{f}$ and $\mathbbm{g}$ which are continuous on $(0,T)$. By Dini's second theorem, it follows that the convergence is locally uniform.

\subsubsection*{Step 2: Cost convergence.}

We first consider the limit of the equilibrium impact cost from Corollary \ref{cor:eq.cost.liq.constr}. Along the lines of Step~1, it can be checked that as $\varepsilon\downarrow 0$,
\begin{equation}\label{eqn:xi.limit}
    \Xi \to 1+\beta T,
\end{equation}
\begin{equation}\label{eqn:psi.limit}
\Psi \to \frac{\beta}{\tau_1-\beta} e^{\tau_1 T}T+\frac{e^{\tau_1 T}-1}{\tau_1} +\frac{1}{\tau_1+\beta}+\frac{1}{\tau_1-\beta}e^{\tau_1 T},
\end{equation}
\begin{align}\mathfrak{h}_1&= \rho_{-}(\Psi+\beta\varrho_1) +\mathfrak{r}_1\label{eqn:h1.limit}\\
&\to \frac{1}{\tau_1-\beta} e^{\tau_1 T}\left(\frac{\beta}{\tau_1-\beta} e^{\tau_1 T}T+\frac{e^{\tau_1 T}-1}{\tau_1} +\frac{1}{\tau_1+\beta}+\frac{1}{\tau_1-\beta}e^{\tau_1 T}+\beta \frac{e^{\tau_1T} -1}{\tau_1(\tau_1 +\beta)}\right)\nonumber\\
&\quad +\frac{e^{2\tau_1 T}-1}{2\tau_1(\tau_1+\beta)}-\frac{1}{2}\left(\frac{1}{\tau_1+\beta}+\frac{1}{\tau_1-\beta}e^{\tau_1 T}\right)^2+\frac{1}{(\tau_1+\beta)^2}+\frac{1}{(\tau_1-\beta)(\tau_1+\beta)}e^{\tau_1 T},\nonumber
\end{align}
and
\begin{equation}\label{eqn:h2.limit}\mathfrak{h}_2=\rho_{-}\Xi+\beta\varrho_1 + \lambda\varepsilon^{-1}\mathfrak{m}_1\to \frac{1}{\tau_1-\beta} e^{\tau_1 T}(1+\beta T)+\beta \frac{e^{\tau_1T} -1}{\tau_1(\tau_1 +\beta)}+\frac{e^{\tau_1 T}}{\tau_1 +\beta}.
\end{equation}
Using \eqref{eqn:psi.limit} and \eqref{eqn:h2.limit}, we get (after simplification) that $\mathfrak{h}_2\Psi^{-1}\to 1$, so
\begin{equation}\label{eqn:h2.ratio.limit}
    \frac{\mathfrak{h}_2}{\Psi\Xi}\to\frac{1}{1+\beta T}.
\end{equation}
A protracted algebraic manipulation of the quantities in \eqref{eqn:psi.limit} and \eqref{eqn:h1.limit} also yields
\begin{equation}\label{eqn:h1.ratio.limit}
    \frac{\mathfrak{h}_1}{\Psi^2}\to \frac{ N^2 (N+1)\left( \left(\left(\beta T +\frac{1}{2}\right) (N+1) +3\right) e^{\frac{2 \left(N +1\right) \beta  T}{N -1}}-\frac{2 \left(N -1\right)}{N^2} \left(N e^{\frac{\left(N +1\right) \beta  T}{N -1}}+\frac{1}{4}\right) \right)}{\left(N \left(\left(\beta T +1\right)(N+1) + 2\right) e^{\frac{\left(N +1\right) \beta  T}{N -1}}-(N -1)\right)^{2}}.
\end{equation}
Passing to the limit in the expression of the impact cost in Corollary \ref{cor:eq.cost.liq.constr} and applying  \eqref{eqn:h2.ratio.limit} and  \eqref{eqn:h1.ratio.limit} gives the desired convergence to the corresponding term in Corollary \ref{cor:eq.cost.block.cost}.

We now turn to the instantaneous cost in Corollary \ref{cor:eq.cost.liq.constr}. Once again, a careful accounting allows us to extract the limits
\begin{equation}\label{eqn:h3.limit}\varepsilon \mathfrak{h}_3=\varepsilon\beta\rho_{-}(\Psi+\varrho_0)+\varepsilon\mathfrak{r}_0\to \frac{\lambda(N-1)}{2} \left(\frac{1}{\tau_1+\beta}+\frac{1}{\tau_1-\beta}e^{\tau_1 T}\right)^2,
\end{equation}
\begin{equation}\label{eqn:h4.limit}
\varepsilon\mathfrak{h}_4=\varepsilon\beta(2\Xi -\beta T)+\lambda^2 \varepsilon^{-1}e^{-z_3 T}\mathfrak{q}_3\to \frac{\lambda}{2},
\end{equation}
and
\begin{equation}\label{eqn:h5.limit}
\varepsilon\mathfrak{h}_5=\varepsilon\beta\rho_{-}\Xi+\varepsilon\beta\varrho_0+\lambda\mathfrak{m}_0\to0.
\end{equation}
Using \eqref{eqn:psi.limit} and \eqref{eqn:h3.limit}, we obtain after simplification,
\begin{equation}\label{eqn:h3.ratio.limit}
    \frac{\varepsilon \mathfrak{h}_3}{2\Psi^2}\to \frac{\lambda(N-1) (N+1)^2 (1+ N e^{\beta \frac{N+1}{N-1} T})^2}{4 \left(N((\beta T+1) (N+1) + 2)e^{\beta \frac{N+1}{N-1} T} - (N-1)\right)^2}.
\end{equation}
Appealing to \eqref{eqn:xi.limit}, \eqref{eqn:psi.limit}, \eqref{eqn:h4.limit}, and \eqref{eqn:h5.limit} we similarly find
\begin{equation}\label{eqn:h4.and.h5.ratio.limit}
    \frac{\varepsilon \mathfrak{h}_4}{2\Xi^2}\to \frac{\lambda}{4 (\beta T+1)^2} \ \ \ \text{and} \ \ \ \frac{\varepsilon\mathfrak{h}_5}{\Xi\Psi}\to 0.
\end{equation}
Applying the limits \eqref{eqn:h3.ratio.limit} and \eqref{eqn:h4.and.h5.ratio.limit} to the instantaneous cost expression in Corollary~\ref{cor:eq.cost.liq.constr} shows that this cost converges to the block cost in Corollary \ref{cor:eq.cost.block.cost} for the specific choice $\vartheta_0 =\frac{\lambda (N-1)}{2}$ and $\vartheta_T = \frac{\lambda}{2}$. 

It remains to sharpen this conclusion in order to show that for any $\delta\in(0,T)$,
\begin{align*}
     \varepsilon \int_0^{\delta}(\dot{X}^{*,\eps,i}_t)^2 dt \to   \vartheta_0(\Delta X^{*,0,i}_0)^2 \quad\mbox{and}\quad \varepsilon \int_{\delta}^{T}(\dot{X}^{*,\eps,i}_t)^2 dt \to   \vartheta_T(\Delta X^{*,0,i}_T)^2.
\end{align*} 
Observe from \cref{thm:equil.liq.constr} that
\[(\dot{X}_t^{*,\varepsilon,i})^2=\frac{h^3_t}{\Psi^2}\overline{x}^2+\frac{h^4_t}{\Xi^2}(x^{i}-\overline{x})^2+\frac{2h^5_t}{\Xi\Psi}\overline{x}(x^{i}-\overline{x})\]
for the same functions
\[h^3_t=\left[\beta\rho_{-}+e^{z_1 t}- \frac{\gamma_1}{\gamma_2} e^{z_2 t}\right]^2, \ \ \ h^4_t=\left[\beta+\frac{\lambda e^{z_{3} t}}{\varepsilon e^{z_3T}}\right]^2,\]
\[h^5_t=\left[\beta\rho_{-}+e^{z_1 t}- \frac{\gamma_1}{\gamma_2} e^{z_2 t}\right]\left[\beta+\frac{\lambda e^{z_{3} t}}{\varepsilon e^{z_3T}}\right]\]
that arise in the proof of Corollary \ref{cor:eq.cost.term.pen}. Letting $\mathfrak{h}_i^{a,b}:=\int_a^b h_t^idt$ for $i=3,4,5$ we see that
\begin{equation}\label{eqn:tail.est.L2.der} \varepsilon \int_\delta^{T}(\dot{X}^{*,\eps,i}_t)^2 dt=\frac{\eps\mathfrak{h}^{\delta,T}_3}{\Psi^2}\overline{x}^2+\frac{\eps \mathfrak{h}^{\delta,T}_4}{\Xi^2}(x^{i}-\overline{x})^2+\frac{2\eps \mathfrak{h}^{\delta,T}_5}{\Xi\Psi}\overline{x}(x^{i}-\overline{x}),
\end{equation}
and the analogous expression holds for $\int_{0}^{\delta}(\dot{X}^{*,\eps,i}_t)^2 dt$. 

To prove the claim it suffices to show the limit
\[\varepsilon\int_{\delta}^{T}(\dot{X}^{*,\eps,i}_t)^2 dt \to   \vartheta_T(\Delta X^{*,0,i}_T)^2=\frac{\vartheta_T (x^i-\overline{x})^2}{(\beta T+1)^2}\]
as, by splitting $\int_{0}^{T}(\dot{X}^{*,\eps,i}_t)^2 dt$, the remaining limit immediately follows from the convergence of the full instantaneous cost to the full block cost that we have already shown. To this end we must evaluate the $\mathfrak{h}_i^{\delta,T}$. 
By expanding the product form of the $h_t^i$ and integrating we get
\begin{align*}
    \mathfrak{h}_3^{\delta,T} &=\beta^2\rho_{-}^2(T-\delta)+2\beta\rho_{-}\left[\frac{e^{z_1 T}-e^{z_1\delta}}{z_1}- \frac{\gamma_1}{\gamma_2} \frac{e^{z_2 T}-e^{z_2\delta}}{z_2}\right]\\
    &\quad +\frac{e^{2z_1T}-e^{2z_1\delta}}{2z_1}+\frac{\gamma_1^2}{\gamma_2^2}\frac{e^{2z_2T}-e^{2z_2\delta}}{2z_2}-2\frac{\gamma_1}{\gamma_2}\frac{e^{(z_1+z_2)T}-e^{(z_1+z_2)\delta}}{z_1+z_2},
\end{align*}
\begin{align*}
    \mathfrak{h}_4^{\delta,T} &=\beta^2(T-\delta)+\frac{2\beta\lambda (e^{z_3 T}-e^{z_3\delta})}{\varepsilon z_3 e^{z_3T}}+\frac{\lambda^2 (e^{2z_3 T}-e^{2z_3\delta})}{2\varepsilon^2 z_3 e^{2 z_3 T}}, \ \ \ \text{and}
\end{align*}
\begin{align*}
    \mathfrak{h}_5^{\delta,T}&=\beta^2 \rho_{-}(T-\delta)+\beta\left[\frac{e^{z_1 T}-e^{z_1\delta}}{z_1}- \frac{\gamma_1}{\gamma_2} \frac{e^{z_2 T}-e^{z_2\delta}}{z_2}\right]\\
    &\quad +\frac{\beta\lambda\rho_{-}(e^{z_3T}-e^{z_3\delta})}{\varepsilon z_3e^{z_3T}}+\frac{\lambda}{\varepsilon e^{z_3 T}}\left[\frac{e^{(z_1+z_3)T}-e^{(z_1+z_3)\delta}}{z_1+z_3}-\frac{\gamma_1}{\gamma_2}\frac{e^{(z_2+z_3)T}-e^{(z_2+z_3)\delta}}{z_2+z_3}\right].
\end{align*}
Once again, carefully passing to the limit in a fashion analogous to \eqref{eqn:h3.limit}--\eqref{eqn:h5.limit} yields
\begin{equation}\label{eqn:tail.constants} \varepsilon \mathfrak{h}_3^{\delta,T} \to 0, \ \ \ \varepsilon \mathfrak{h}_4^{\delta,T} \to \frac{\lambda}{2}, \ \ \ \text{and} \ \ \ \varepsilon \mathfrak{h}_5^{\delta,T} \to 0.
\end{equation}
Passing to the limit across \eqref{eqn:tail.est.L2.der} and combining \eqref{eqn:tail.constants} with \eqref{eqn:xi.limit} and \eqref{eqn:psi.limit} completes the proof.\qed

\newpage
\section{Table of Constants}\label{app:constants}

\begin{table}[!h]
    \centering
    \begin{tabular}{|c|l|}
        \hline
        Constant & Definition \\
        \hline
        $z_1$ & $\frac{-\lambda(N-1)+\sqrt{(N-1)^2\lambda^2+4\beta \varepsilon (N+1) \lambda +4\beta^2\varepsilon^2}}{2\varepsilon}$\\
        \hline
        $z_2$ & $\frac{-\lambda(N-1)-\sqrt{(N-1)^2\lambda^2+4\beta \varepsilon (N+1) \lambda +4\beta^2\varepsilon^2}}{2\varepsilon}$\\
        \hline
        $z_3$ & $\beta + \varepsilon^{-1} \lambda$ \\
        \hline
        $\gamma_i$ &  $\frac{1}{z_i+\beta}+\frac{1}{z_i-\beta}e^{z_i T}, \ \ \ i=1,2.$\\
        \hline
        $\mathfrak{b}_{i}$ & $\frac{e^{z_iT}-1}{z_i}$, \ \ \ $i=1,2,3$\\
        \hline
        $\mathfrak{q}_0$ & $\mathfrak{b}_3e^{-z_3T}$\\
        \hline
        $\mathfrak{q}_i$ & $\mathfrak{b}_{i3}e^{-z_3T}$, \ \ \ $i=1,2,3$\\
        \hline
        $\mathfrak{b}_{ij}$ & $\frac{e^{(z_i+z_j)T}-1}{z_i+z_j}$, \ \ \ $i,j=1,2,3$\\
        \hline

        $\varrho_0$ & $\mathfrak{b}_1-\frac{\gamma_1}{\gamma_2}\mathfrak{b}_2$\\ %
        \hline
        $\varrho_1$ & $\frac{\mathfrak{b}_1}{(z_1+\beta)}-\frac{\gamma_1}{\gamma_2}\frac{\mathfrak{b}_2}{(z_2+\beta)}$\\
        \hline
        $\rho_{0}$ & $e^{z_1T}-\frac{\gamma_1}{\gamma_2}e^{z_2T}$ \\
        \hline
        $\rho_{\pm}$ & $\frac{1}{z_1\pm\beta}e^{z_1T}-\frac{\gamma_1}{\gamma_2} \frac{1}{z_2\pm\beta}e^{z_2T}$ \\

        \hline
        $\mathfrak{m}_0$ &  $\mathfrak{q}_1-\frac{\gamma_1}{\gamma_2}\mathfrak{q}_2$\\
        \hline
        $\mathfrak{m}_1$  &  $\frac{\mathfrak{q}_1}{(z_1+\beta)}-\frac{\gamma_1}{\gamma_2}\frac{\mathfrak{q}_2}{(z_2+\beta)}$\\
        \hline
        $\mathfrak{r}_0$ & $\mathfrak{b}_{11}+\frac{\gamma_1^2}{\gamma_2^2}\mathfrak{b}_{22}-2\frac{\gamma_1}{\gamma_2}\mathfrak{b}_{12}$\\
        \hline
        $\mathfrak{r}_1$  & $\frac{\mathfrak{b}_{11}}{(z_1+\beta)}+\frac{\gamma_1^2}{\gamma_2^2}\frac{\mathfrak{b}_{22}}{(z_2+\beta)}-\frac{\gamma_1}{\gamma_2}\left[\frac{1}{z_1+\beta}+\frac{1}{z_2+\beta}\right]\mathfrak{b}_{12}$\\
        \hline
        $\mathfrak{p}$ & $\rho_0+\beta\rho_{-}+\varepsilon^{-1}\lambda N(\rho_{+}+\rho_{-})$\\
        \hline
        $\Psi$ & $\varrho_0 + \beta\rho_{-}T$\\
        \hline
        $\Xi$ & $\beta T+\lambda\varepsilon^{-1}\mathfrak{q}_0$\\ %
        \hline
        $\psi$ & $\mathfrak{p}+\varepsilon^{-1}\varphi \Psi$ \\ %
        \hline
        $\xi$ &  $z_3 + \varepsilon^{-1}\varphi\Xi$ \\ %
        \hline
        $\mathfrak{h}_1$ & $\rho_{-}(\Psi+\beta\varrho_1) +\mathfrak{r}_1$\\
        \hline
        $\mathfrak{h}_2$ & $\rho_{-}\Xi+\beta\varrho_1 + \lambda\varepsilon^{-1}\mathfrak{m}_1$\\
        \hline
        $\mathfrak{h}_3$ & $\beta\rho_{-}(\Psi+\varrho_0)+\mathfrak{r}_0$\\
        \hline
        $\mathfrak{h}_4$ & $\beta(2\Xi -\beta T)+\lambda^2 \varepsilon^{-2}e^{-z_3 T}\mathfrak{q}_3$ \\
        \hline
        $\mathfrak{h}_5$ & $\beta\rho_{-}\Xi+\beta\varrho_0+\lambda\varepsilon^{-1}\mathfrak{m}_0$\\
        \hline
    \end{tabular}
    \caption{Main Constants}
    \label{tab:constants}
\end{table}

\bibliographystyle{abbrv}
\bibliography{stochfin}

\end{document}